\renewcommand{\labelenumi}{(\roman{enumi})}
\newtheorem{theorem}{Theorem}
\newtheorem{lemma}[theorem]{Lemma}
\newtheorem{claim}{Claim}
\begin{document}

\preprint{CHIBA-EP-248, 2021.05.16}

\title{Reconstructing propagators of confined particles in the presence of complex singularities
}

\author{Yui Hayashi}
\email{yhayashi@chiba-u.jp}
\affiliation{
Department of Physics, Graduate School of Science and Engineering, Chiba University, Chiba 263-8522, Japan
}

\author{Kei-Ichi Kondo}
\email{kondok@faculty.chiba-u.jp}
\affiliation{
Department of Physics, Graduate School of Science and Engineering, Chiba University, Chiba 263-8522, Japan
}
\affiliation{
Department of Physics, Graduate School of Science, Chiba University, Chiba 263-8522, Japan
}

\begin{abstract}
Propagators of confined particles, especially the Landau-gauge gluon propagator, may have complex singularities as suggested by recent numerical works as well as several theoretical models, e.g., motivated by the Gribov problem.
In this paper, we study formal aspects of propagators with complex singularities in reconstructing Minkowski propagators starting from Euclidean propagators by the analytic continuation.
We derive the following properties rigorously for propagators with arbitrary complex singularities satisfying some boundedness condition.
The two-point Schwinger function with complex singularities violates the reflection positivity.
In the presence of complex singularities, while the holomorphy in the usual tube is maintained, the reconstructed Wightman function on the Minkowski spacetime becomes a non-tempered distribution and violates the positivity condition.
On the other hand, the Lorentz symmetry and locality are kept intact under this reconstruction.
Finally, we argue that complex singularities can be realized in a state space with an indefinite metric and correspond to confined states.
We also discuss consequences of complex singularities in the BRST formalism.
Our results could open up a new way of understanding a confinement mechanism, mainly in the Landau-gauge Yang-Mills theory.
\end{abstract}

\maketitle

\section{INTRODUCTION}

One of the most fundamental properties of strong interactions is color confinement, the absence of colored degrees of freedom from the physical spectrum. Understanding this property in the framework of relativistic quantum field theory (QFT) is a long-standing problem and of crucial importance for particle and nuclear physics. Analytic structures of the correlation functions enable us to extract valuable information on the state-space structure through, e.g., the K\"all\'en-Lehmann spectral representation \cite{spectral_repr_UKKL}, which will be useful toward understanding a confinement mechanism.
Therefore, investigating analytic structures of confined propagators, e.g., the gluon propagator, and considering their implications are of great interest.

In the last decades,  the gluon, ghost, and quark propagators in the Landau gauge have been extensively studied by both Lattice numerical simulations and semi-analytical methods (for example, Dyson-Schwinger equation and Functional renormalization group), for reviews see \cite{Alkofer:2000wg, Huber:2018ned, Maas13}, and also by models motivated by the massive-like gluon propagator of these results \cite{TW10,TW11,PTW14}.
Based on these advances, in recent years, there has been an increasing interest in the analytic structures of the gluon, ghost, and quark propagators \cite{Alkofer:2003jj, SFK12, HFP14, Siringo16a, Siringo16b, DOS17, Lowdon17, Lowdon18, Lowdon:2018mbn, CPRW18, HK2018, DORS19, KWHMS19, BT2019, LLOS20, HK2020, Fischer-Huber, Falcao:2020vyr,Horak:2021pfr}.
In particular, unusual singularities invalidating the K\"all\'en-Lehmann spectral representation, which we call \textit{complex singularities}, receive much attention.
A pair of complex conjugate poles of the gluon propagator, which is a typical example of such singularities, were predicted in old literature \cite{Gribov78, Zwanziger89, Stingl85, HKRSW90, Stingl96, Dudal:2008sp} , e.g., by improving the gauge fixing procedure.
The most remarkable point of the recent studies without assuming the K\"all\'en-Lehmann representation is that the independent approaches consisting of reconstructing from Euclidean data \cite{BT2019, Falcao:2020vyr}, modeling by the massive-like gluons \cite{Siringo16a, Siringo16b, HK2018, HK2020}, and ray-technique of the Dyson-Schwinger equation \cite{SFK12,Fischer-Huber} consistently suggest the existence of complex singularities of the gluon propagator. Moreover, some results support complex poles of the quark propagator \cite{HK2020}.

There are also studies of complex singularities on other models \cite{Maris:1991cb, Maris:1994ux, Maris:1995ns}. A relation between complex poles of a fermion propagator and confinement in the three-dimensional QED was suggested in \cite{Maris:1995ns}.

Since complex singularities cannot appear in propagators for observable particles, we expect that the complex singularities are related to color confinement. 
However, while the analytic structures have been investigated in many works, implications of complex singularities for the QFT have been much less studied.
Theoretical consequences of complex singularities are of crucial importance since such considerations on complex singularities could play a pivotal role in obtaining a clear description of a confinement mechanism.
Thus, we will study theoretical aspects of complex singularities in this paper.

For this purpose, the reconstruction of the two-point Wightman function, or the vacuum expectation value of the product of field operators, from the two-point Schwinger function, or the Euclidean propagator, has to be carefully investigated.
Thus, we will reconstruct the Wightman function based on the holomorphy of the Wightman function in ``the tube'' \cite{Streater:1989vi} following the Osterwalder-Schrader (OS) reconstruction \cite{OS73, OS75}.
This is the standard method to relate Euclidean field theories to QFTs in axiomatic quantum field theory.

Some argue that the appearance of complex singularities might indicate non-locality, e.g. \cite{Stingl85,HKRSW90, Stingl96}. Nevertheless, this argument relying on the naive inverse Wick-rotation is not fully convincing.
Actually, as we will briefly remark in this paper, the naive inverse Wick-rotation differs from the reconstruction based on the holomorphy of the Wightman function in the presence of complex singularities.
Since the relation between complex singularities and locality is thus in a confusing situation, we will also address this topic carefully.

 \begin{figure}[t]
  \begin{center}
  \includegraphics[width= \linewidth]{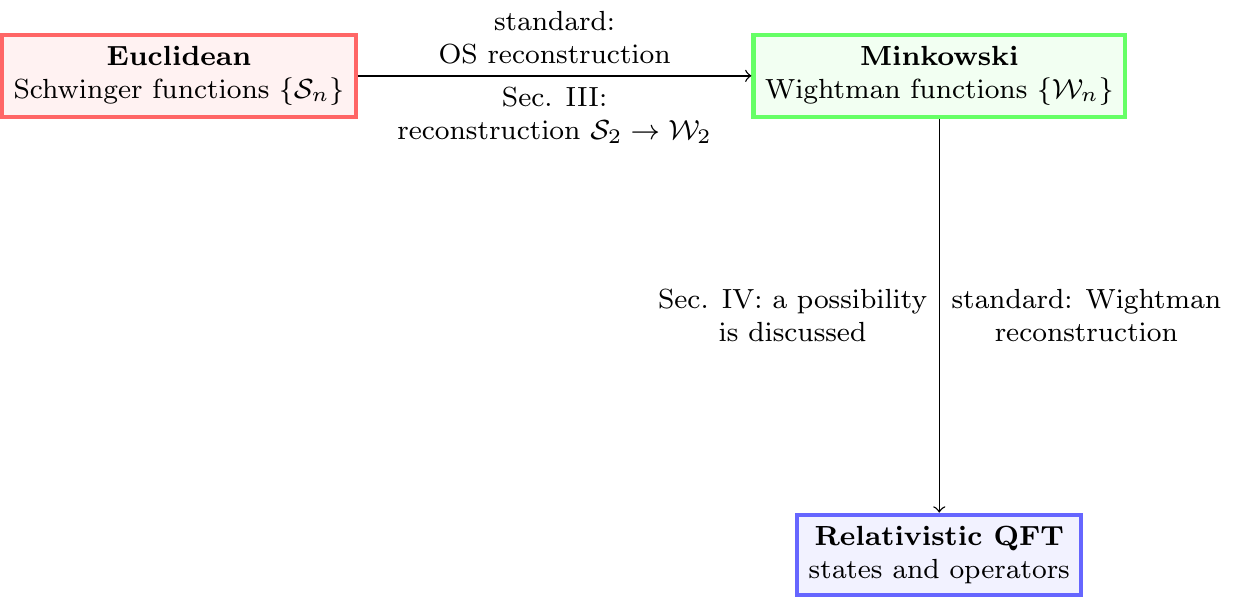}
  \end{center}
   \caption{The reconstruction procedure and contents of this paper. In the standard reconstruction procedure, we start from a family of Schwinger functions satisfying OS axioms and finally reconstruct a QFT by the OS theorem \cite{OS73, OS75} and Wightman's reconstruction theorem \cite[Theorem 3-7]{Streater:1989vi}. We re-examine this reconstruction procedure when a propagator has complex singularities.
   In Sec.~\ref{sec:prelim_disc}, it is pointed out that we shall begin with a Schwinger function with complex singularities. In Sec.~\ref{sec:complex-singularities}, we reconstruct a Wightman function from the Schwinger function in the same way as the OS reconstruction based on the holomorphy in the tube. In Sec.~\ref{sec:interpretation}, we discuss a possibility in the reconstruction procedure from the Wightman functions to a QFT.}
    \label{fig:introduction}
\end{figure}

In this paper, we study formal aspects of complex singularities, namely analytic properties of the reconstructed two-point Wightman function and implications of complex singularities for the state-space structure.
The standard reconstruction procedure and contents of this paper are illustrated in Fig.~\ref{fig:introduction}.
Because of the somewhat confusing situation on this subject as mentioned above, it is essential to clarify consequences of complex singularities that can be stated unambiguously.
Thus, we will derive these analytic properties with rigorous proofs.
Moreover, since it is very important to investigate states related to the confined particles for understanding a confinement mechanism, we will consider state-space structures yielding complex singularities.

The main results of this paper are listed as follows, as announced in \cite{HK2021}.
Suppose that the Euclidean propagator, or the two-point Schwinger function, has complex singularities in the complex squared momentum plane, as defined in \ref{sec:complex-def}. Then, the following claims are derived.
\begin{enumerate}
\renewcommand{\labelenumi}{(\Alph{enumi})}
    \item The \textit{reflection positivity} is violated for the Schwinger function (Theorem \ref{thm:violation_ref_pos}).
    \item The \textit{holomorphy} of the Wightman function $W(\xi - i \eta)$ in the tube (Theorem \ref{thm:holomorphy_general}) and the \textit{existence of the boundary value} as a distribution (Theorem \ref{thm:complex_boundary_value}) are still valid. Thus, we can reconstruct the Wightman function from the Schwinger function.
    \item The \textit{temperedness} (Theorem \ref{thm:nontempered}) and the \textit{positivity condition} in $\mathscr{D}(\mathbb{R}^4)$ (Theorem \ref{thm:violation_W-pos}) are violated for the reconstructed Wightman function. The \textit{spectral condition} is never satisfied since it requires the temperedness as a prerequisite.
    \item The \textit{Lorentz symmetry} (Theorem \ref{thm:Lorentz} and Theorem \ref{thm:complex_Lorentz}) and \textit{spacelike commutativity} (Theorem \ref{thm:spacelike_commutativity}) are kept intact.
    \item A quantum mechanical observation (Claim \ref{claim:3}) suggests, together with an example of QFT (Sec.~\ref{sec:Lee-Wick}), that complex singularities correspond to pairs of zero-norm eigenstates of complex eigenvalues.
\end{enumerate}

This paper is organized as follows.
In Sec.~\ref{sec:prelim_disc}, we emphasize difference between complex singularities of Euclidean propagator and (real-)time-ordered one in the momentum space and take a glimpse of some properties to be generally derived in Sec.~\ref{sec:complex-properties}.
In Sec.~\ref{sec:complex-singularities}, we give a definition of complex singularities (Sec.~\ref{sec:complex-def}) and derive the properties (Sec.~\ref{sec:complex-properties}) listed above with a mathematical rigor except for the last one (E).
In Sec,~\ref{sec:interpretation}, based on the results of Sec.~\ref{sec:complex-singularities}, we consider quantum-theoretical aspects, namely what complex singularities imply on the state-space structure. We also discuss implications of complex singularities in the BRST formalism.
A summary is given in Sec.~\ref{sec:summary}, and Sec.~\ref{sec:discussion} is devoted to discussion on related topics and future prospects.
The mathematical notations and standard axioms are summarized in Appendix \ref{sec:notations_axioms}.
Appendix \ref{sec:Appendix_ref_pos} contains a detailed proof of the violation of the reflection positivity (Theorem \ref{thm:violation_ref_pos}).
Appendix \ref{sec:Appendix_which_axioms} summarizes violated axioms of the OS axioms for Schwinger functions and the Wightman axioms for Wightman functions.

\section{Preliminary discussion} \label{sec:prelim_disc}

In this section, we sketch out main properties of complex singularities and emphasize the difference between complex singularities of Euclidean propagator and (real\nobreakdash-) time-ordered one in the momentum space. For simplicity, we consider $(0+1)$-dimensional field theories in this section. This non-rigorous discussion helps us to determine a point of departure toward the rigorous discussion in Sec.~\ref{sec:complex-singularities}.

\subsection{Difference between complex singularities of Euclidean propagator and (real-)time-ordered one}

We consider complex singularities of Euclidean and real-time propagators on the complex squared momentum plane.
We point out that the conventional Wick rotation in the squared momentum plane $p^2 \rightarrow - p^2_E$ is \textit{not applicable} in the presence of complex singularities.
Thus, we emphasize that complex singularities in the propagators that appear in many works should be regarded as Euclidean ones and that the reconstruction procedure must be carefully considered.

We define the ``Wightman functions'' $D^>(t)$ and $D^<(t)$ and the real-time propagator $D(t)$ by
\begin{align}
&D^>(t) := \braket{0|\phi(t) \phi(0)|0}, \notag \\
&D^<(t) := \braket{0|\phi(0) \phi(t)|0}, \notag \\
&D(t) := \theta (t) D^>(t) + \theta (-t) D^<(t).
\end{align}
Usually, we can analytically continue $D^>(t)$ and $D^<(t)$ to the lower and upper half planes of the complex $t$ plane, respectively. In particular, $D^>(-i \tau)$ can be defined for $\tau > 0$, and $D^<(-i \tau)$ for $\tau < 0$.

Thus, we introduce the Euclidean propagator $\Delta(\tau)$, which is identified with the ``two-point Schwinger function,'' as
\begin{align}
    \Delta^{>}(\tau) &:= D^>( - i \tau) ~~~ (\mathrm{for~}\tau > 0), \notag \\
    \Delta^{<}(\tau) &:= D^<( - i \tau) ~~~ (\mathrm{for~}\tau < 0), \notag \\
\Delta(\tau) &:= \theta(\tau) \Delta^>( \tau) + \theta(-\tau) \Delta^< (\tau) 
    .\label{eq:connection_Euc_QFT}
\end{align}
This connection between the Wightman and Schwinger functions is consistent with the standard reconstruction method given in (\ref{eq:W_S_connection_A}) and (\ref{eq:W_S_connection_B}), where the Schwinger function is regarded as the ``values'' of the Wightman function at pure imaginary times.
We denote the Fourier transforms of $D(t)$ and $\Delta(\tau)$ by $ \tilde{D}(p_0)$ and $\tilde{\Delta}(p_E)$, respectively.

We emphasize that the connection between Euclidean correlation functions and vacuum expectation values of the product of field operators should be implemented in the complex-time plane rather than in the complex squared momentum plane.
Here, with the connection ({\ref{eq:connection_Euc_QFT}}), we demonstrate that the reconstructed propagator $D(t)$ cannot have a well-defined Fourier transform if $\tilde{\Delta}(p_E)$ has complex poles. This indicates that a real-time propagator with complex poles (where $ \tilde{D}(p_0)$ has complex poles) is not the reconstructed propagator from a Euclidean propagator with complex poles (where $ \tilde{\Delta}(p_E)$ has complex poles).

\subsubsection{Physical case}

First, we observe the physical case for a comparison. Let us assume as a definition of the ``physical case'',
\begin{enumerate}
    \item completeness: $1 = \sum_n  \ket{n} \bra{n}$, where $\ket{n}$ is an eigenstate of the Hamiltonian $H$ with an eigenvalue $E_n$: $H \ket{n} = E_n \ket{n}$, 
    \item translational covariance: $\phi(t) = e^{iHt} \phi(0) e^{-iHt}$,
    \item spectral condition: positivity of $H$, namely $E_n \geq 0$.
\end{enumerate}
Then, one can relate Euclidean and real-time propagators $\tilde{\Delta}(p_E)$ and $ \tilde{D}(p_0)$ by the conventional Wick rotation $p_0^2 \rightarrow - p_E^2$.
Indeed, these three conditions yield the spectral representations for the Wightman functions and the real-time propagator,
\begin{align}
    D^{>}(t) &= \int_0^\infty d \sigma ~ e^{-i \sigma t} \rho(\sigma), \notag \\
     D^{<}(t) &= \int_0^\infty d \sigma ~ e^{i \sigma t} \rho(\sigma), \notag \\
    \tilde{D}(p_0) &= i \int d \sigma~ \frac{2 \sigma \rho(\sigma)}{ p_0^2 - \sigma^2 + i \epsilon},
\end{align}
where we have defined the spectral function $\rho(\sigma)$ by
\begin{align}
\rho(\sigma) := \sum_n \delta(\sigma - E_n) |\braket{n|\phi(0)|0}|^2.
\end{align}
Consequently, from (\ref{eq:connection_Euc_QFT}), the Euclidean propagator has the spectral representation given by
\begin{align}
    \Delta^{>}(\tau) &= D^>( - i \tau) =  \int_0^\infty d\sigma~ e^{- \sigma \tau} \rho(\sigma), \notag \\
     \Delta^{<}(\tau) &= D^<( - i \tau) = \int_0^\infty d\sigma~ e^{ \sigma \tau} \rho(\sigma), \notag \\
    \tilde{\Delta}(p_E) &= \int d \sigma~ \frac{2 \sigma \rho(\sigma)}{p_E^2 + \sigma^2}.
\end{align}
Therefore, in the physical case, the Euclidean propagator $\tilde{\Delta}(p_E)$ and the real-time propagator $\tilde{D}(p_0)$ are related by the analytic continuation on the complex squared momentum plane: $ p^2_0 \rightarrow - p_E^2$. The spectral representation guarantees this consequence, which does not hold in the presence of complex singularities as will be shown below.

\subsubsection{With complex poles}

For example, let us take the Gribov-type propagator with complex poles:
\begin{align}
    \tilde{\Delta}(p_E) := \frac{p_E^2}{p_E^4 + \gamma^4}.
\end{align}
This gives the following Euclidean propagator in the Euclidean time:
\begin{align}
    \Delta(\tau) = \frac{1}{2 \gamma} e^{-\frac{\gamma |\tau|}{\sqrt{2}}} \sin \left( -\frac{\gamma |\tau|}{\sqrt{2}} + \frac{\pi}{4} \right),
\end{align}

Although a complete reconstruction method from Euclidean to Minkowski in the presence of complex singularities has not been established, we here assume the connection introduced in (\ref{eq:connection_Euc_QFT}) which is consistent with the standard reconstruction method even in the presence of complex singularities.
With this connection, we have the Wightman functions:
\begin{align}
    D^>(t) = D^<(-t) = \frac{i}{2 \gamma} e^{i \frac{\gamma t}{\sqrt{2}}} \sinh \left(  \frac{\gamma t}{\sqrt{2}} - \frac{i \pi}{4} \right),
\end{align}
Then, both $D^>(t)$ and $D^<(t)$ increase exponentially as $t \rightarrow \pm \infty$.

Therefore, starting with the Gribov-type Euclidean propagator, we have the Wightman functions $D^>(t),~
D^<(t)$ of exponential growth. \textit{Such Wightman functions $D^>(t),~
D^<(t)$ cannot be regarded as tempered distributions}, and therefore they do not have well-defined Fourier transforms. Thus, the Minkowski propagator \textit{cannot} be reconstructed from the Euclidean propagator with complex poles by using the simple ``inverse Wick rotation'' $p_E^2 \rightarrow -p_0^2$ in the complex squared momentum plane, since the ``reconstructed'' real-time propagator has no Fourier transform. In other words, \textit{a Euclidean propagator with complex poles (where $\tilde{\Delta} (p_E)$ has complex poles) is different from a real-time propagator with complex poles (where $\tilde{D} (p_0)$ has complex poles)}.
In particular, one has to take care of the definition of complex singularities.

Again, one should reconstruct the propagator
not by the simple inverse Wick rotation on the complex squared momentum plane: $ p^2_E \rightarrow - p_0^2$ but by the standard method explained in (\ref{eq:W_S_connection_A}) and (\ref{eq:W_S_connection_B}) .
The former reconstruction is often discussed in some literature, e.g. in \cite{Stingl85,Stingl96,HKRSW90}, which is \textit{different} from the latter one.
As more discussed in Sec.~V~A, we argue that the latter one should be adopted because of the fundamental relation (\ref{eq:W_S_connection_A}) and some advantages.

\subsection{Properties} \label{sec:2B-properties}

Let us briefly summarize properties of complex poles. Here we suppose that the Euclidean propagator $\tilde{\Delta} (p_E)$ has complex poles. 

\begin{enumerate}
\renewcommand{\labelenumi}{(\alph{enumi})}
    \item The Wightman functions $D^>(t)$ and $D^<(t)$ reconstructed from the Euclidean propagator $\Delta (\tau)$ cannot be regarded as tempered distributions because they grow exponentially as $t \rightarrow \pm \infty$.
    \item A Euclidean propagator with only complex poles violates the reflection positivity (\ref{eq:two_pt_ref_pos}) because $\Delta (\tau)$ violates the necessary condition for the reflection positivity (\ref{eq:two_pt_ref_pos_concise}): $\Delta (\tau) \geq 0 ~ \mathrm{for~all}~\tau >0$.
    \item The positivity in the sector $\{ \phi(t) \ket{0} \}_{t \in \mathbb{R}}$ is violated due to the non-temperedness.
    Indeed, suppose that the sector $\{ \phi(t) \ket{0} \}_{t \in \mathbb{R}}$ had a positive metric. From the translational invariance of the two-point function, the time-translation operator defined on this sector: $U(s) \phi(t) \ket{0} := \phi(t+s) \ket{0}$ is unitary, i.e., $\braket{0|\phi(t) U(s)^\dagger U(s) \phi(t')|0} = \braket{0|\phi(t) \phi(t')|0}$. Since the modulus of a matrix element of a unitary operator is not more than one in a space with a positive metric, we would have an upper bound $|\braket{0|\phi(0) U(s) \phi(0)|0} | \leq \braket{0|\phi(0) \phi(0)|0}$, or $|D^<(s)| \leq |D^<(0)|$, which contradicts the non-temperedness.
\end{enumerate}

In the next section, we see that these properties always hold rigorously if $\tilde{\Delta} (p_E)$ has complex singularities (Theorem \ref{thm:nontempered}, \ref{thm:violation_ref_pos}, and \ref{thm:violation_W-pos}).

\section{Complex singularities: Definition and properties} \label{sec:complex-singularities}

In this section, we give a definition of complex singularities and rigorous proofs of some properties for propagators.
These ``complex singularities'' should be regarded as complex singularities on the complex squared momentum plane of an analytically continued Euclidean propagator. Indeed, in many studies, propagators with complex poles are compared with numerical results on Euclidean ones.
Therefore, we start with a two-point Schwinger function. For details of mathematical notations, see Appendix \ref{sec:notations_axioms}.

For simplicity, we work in four-dimensional Euclidean space $D = 4$. However, our main results can be easily generalized to arbitrary dimensions $D \geq 2$ except for Theorem \ref{thm:complex_Lorentz}, where the Bargmann–Hall–Wightman theorem must be used for the proof.

\subsection{Definition} \label{sec:complex-def}

\subsubsection{Preliminary assumptions}

For simplicity, we consider a two-point function for a scalar field. Throughout this paper, we assume the following conditions for a two-point Schwinger function $\mathcal{S}_2  (x_1,x_2)$ which follow from the OS axioms \cite{OS73, OS75} (see Appendix \ref{sec:notations_axioms}).
\begin{enumerate}
    \item ~[OS0] Temperedness $\mathcal{S}_2 (x_1,x_2) \in  ~^0 \mathscr{S}' (\mathbb{R}^{4 \cdot 2})$: $\mathcal{S}_2 (x_1,x_2)$ is a tempered distribution defined on the space of test functions vanishing at coincident points $x_1 = x_2$.
    \item ~[OS1] Euclidean (translational and rotational) invariance:  
     $\mathcal{S}_2 (R x_1 + a, R x_2 + a) = \mathcal{S}_2(x_1 , x_2)$, for all $a \in \mathbb{R}^4,~ R \in SO(4)$.
\end{enumerate}
From (i) temperedness and (ii) translational invariance, there exists a distribution $S_{1}(\xi) \in \mathscr{S}' (\mathbb{R}^{4}_+)$ such that $\mathcal{S}_2 (x_1, x_2) = S_1 (x_2 - x_1)$ for $x_1^4 < x_2^4$. We can also regard $S_{1}(\xi) \in \mathscr{S}' (\mathbb{R}^{4}_{\neq 0})$, where $ \mathscr{S}' (\mathbb{R}^{4}_{\neq 0})$ is the dual space of
\begin{align}
    \mathscr{S} (\mathbb{R}^{4}_{\neq 0}) := \Bigl\{ f(\xi) \in \mathscr{S} (\mathbb{R}^{4}) ~;~
    \begin{gathered}
        \left. D^\alpha f(\xi) \right|_{\xi = 0} = 0~~ \\
        ~~\mathrm{for~any}~ \alpha \in \mathbb{Z}_{\geq 0}^{4} 
    \end{gathered}
    \Bigr\}.
\end{align} Moreover, (ii) Euclidean rotational invariance implies $S_{1}(R \xi) = S_{1}(\xi)$ for all $R \in SO(4)$.

Let us comment on the other conditions of the standard OS axioms \cite{OS73,OS75} (see Appendix A). They are [OS2] reflection positivity, [OS3] permutation symmetry, [OS4] cluster property, and [OS0'] Laplace transform condition.
Intuitively, [OS2] reflection positivity corresponds to the positivity of the metric of the state space. If we consider gauge theories in Lorentz covariant gauges including confined degrees of freedom, we must allow violation of the reflection positivity. Thus, we do not require the reflection positivity, which is actually broken in the presence of complex singularities (Theorem~\ref{thm:violation_ref_pos}).
For a two-point function, [OS3] permutation symmetry is a consequence of [OS1] Euclidean rotational invariance.
For generality, we do not impose [OS4] cluster property, which corresponds to the uniqueness of the vacuum and could be violated by a severe infrared singularity of a propagator.
In the view of the reconstruction from Euclidean field theories, [OS0'] Laplace transform condition is introduced for the purpose of controlling higher point functions. Since we focus on the two-point function in this paper, we will not take a further look into this condition.
Incidentally, the Laplace transform condition itself is violated if the two-point function has complex singularities due to the non-temperedness of the Wightman functions (Theorem \ref{thm:nontempered}).

In addition to the assumptions taken from the standard OS axiom, we further require that the two-point Schwinger function $S_{1}(\xi) \in \mathscr{S}' (\mathbb{R}^{4}_{\neq 0})$ has a well-defined Fourier transform $S_{1} (k)$. Simply, this can be realized by the following assumption:
\begin{enumerate}
\setcounter{enumi}{2} 
    \item The Schwinger function $S_{1}(\xi) $ can be regarded as an element of $\mathscr{S}' (\mathbb{R}^{4})$: $S_{1}(\xi) \in \mathscr{S}' (\mathbb{R}^{4})$.
\end{enumerate}
This assumption allows the well-defined Fourier transform,
\begin{align}
    S_{1} (k)  = \int d^4 \xi ~ e^{-ik \xi} S_{1}(\xi).
\end{align}
From the rotational invariance, we can write \footnote{Note the difference of conventions with our previous papers \cite{HK2018, KWHMS19, HK2020}, where we took $S_{1} (k) = D(-k^2)$. In particular, the timelike axis is the negative real axis in this paper unlike the previous ones. See Fig.~\ref{fig: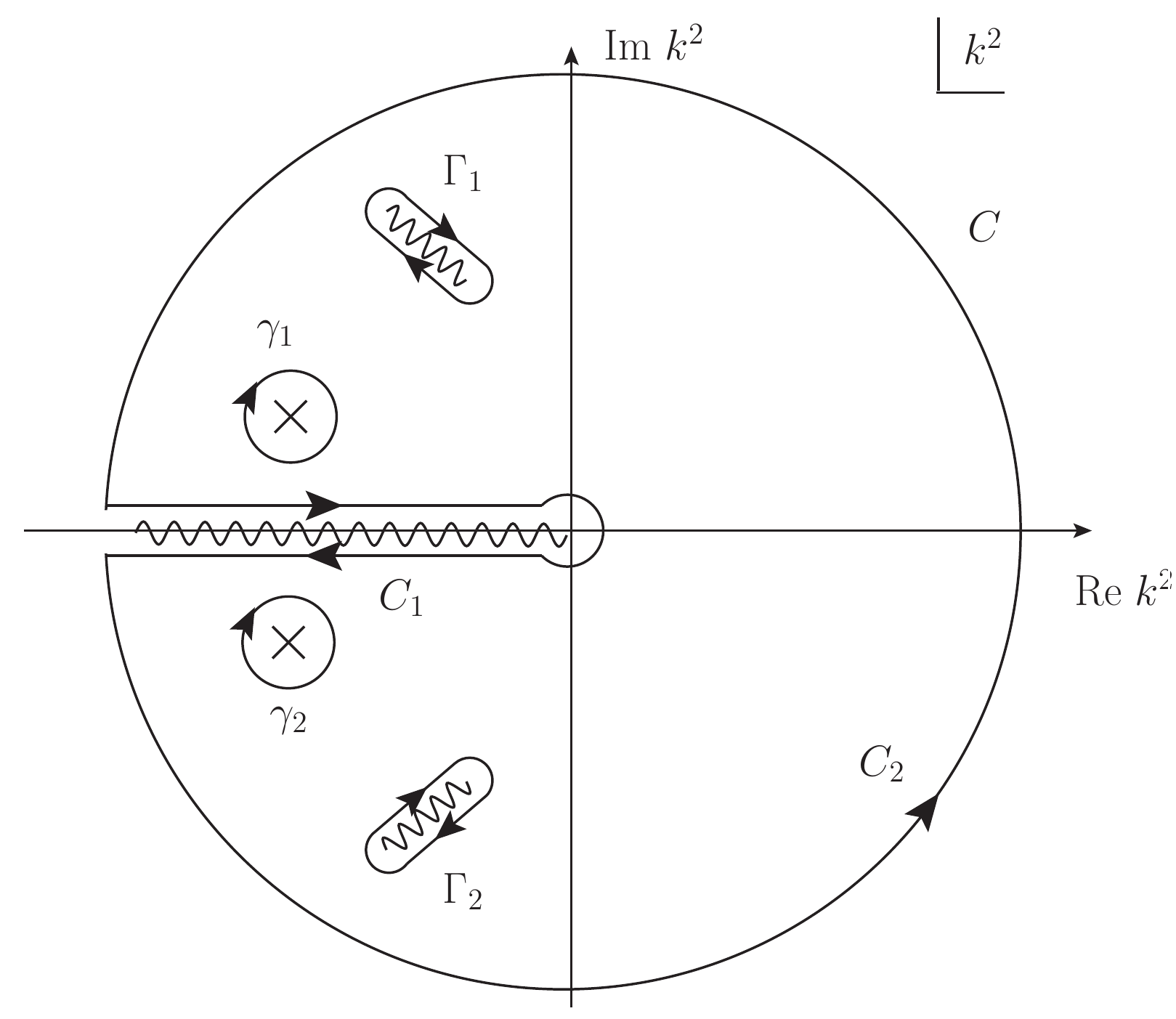}}.
\begin{align}
    S_{1} (k) = D(k^2).
\end{align}
A few remarks are in order.
\begin{enumerate}
\renewcommand{\labelenumi}{(\alph{enumi})}

\item While the condition $S_{1}(\xi) \in \mathscr{S}' (\mathbb{R}^{4}_{\neq 0})$ allows any singularity at $\xi = 0$, the new condition (iii) $S_{1}(\xi) \in \mathscr{S}' (\mathbb{R}^{4})$ imposes that such singularity is of at most derivatives of a delta function $D^\alpha \delta(\xi)$. We do not expect appearance of singularities beyond usual distributions at least in an ultraviolet asymptotic free theory.

\item For real-valued fields, namely real-valued $S_1 (\xi)$, $S_{1} (k) = D(k^2)$ is a real distribution from the rotational symmetry (or the permutation symmetry) $S_1(-\xi) = S_1 (\xi)$.

\item There is a constraint on the massless singularities. For example, this formulation excludes the ``dipole ghost pole'': $D(k^2) \sim 1/k^4$ without a regularization since $ D(k^2) = S_{1} (k) \in \mathscr{S}' (\mathbb{R}^{4})$. This constraint depends on the spacetime dimension. The massless pole (without a regularization) is prohibited in the two-dimensional space. 
\end{enumerate}

\subsubsection{Definition of complex singularities}

Now, let us define complex singularities of a two-point Schwinger function.
We call the positive real axis of the complex $k^2$ plane the \textit{Euclidean (spacelike) axis} and call the negative real axis of the complex $k^2$ plane the \textit{timelike axis}.
In addition to (i)--(iii), we assume for $D(k^2)$,
\begin{enumerate}
\setcounter{enumi}{3} 
    \item $D(k^2) = S_{1} (k)$ is holomorphic except singularities on the timelike axis $\{ k^2; k^2 < 0 \}$ and a finite number of poles and branch cuts of finite length satisfying:
    \begin{enumerate}
    \renewcommand{\labelenumii}{(iv~\alph{enumii})}
        \item The singularities on the timelike axis can be represented as a tempered distribution on $[-\infty,0]$, namely,
\begin{align}
     &D(-\sigma^2 - i \epsilon) - D(-\sigma^2 + i \epsilon) \notag \\
     &~~\xrightarrow{\epsilon \rightarrow +0} \operatorname{Disc} D(-\sigma^2) \in \mathscr{S}'([0,\infty]),
\end{align}
        where $\mathscr{S}'([0,\infty])$ is the dual space of $\mathscr{S}([0,\infty]) := \bigl\{ f(\lambda) = g(- (1+\lambda)^{-1}) ~;~g\mathrm{~is~a}~C^\infty \mathrm{~function~on~}[-1,0] \bigr\}$. For details, see Appendix \ref{sec:notations_axioms} or \cite[Sec. A.3.]{Bogolyubov:1990kw}.

        \item $D(k^2) = S_{1} (k)$ is holomorphic at least in a neighborhood of the Euclidean axis $\{ k^2; k^2 > 0 \}$ in the sense that there is no singularity on the Euclidean axis.
        \item The complex branch cuts are not located across the real axis.
    \end{enumerate} 
    \item The analytically continued $D(z)$ on the complex plane $z = k^2$ tends to vanish as $|z| \rightarrow \infty$.
\end{enumerate}
With these assumptions (i) -- (v), we call singularities except on the negative real axis \textit{complex singularities}.

The first assumption (iv a) is imposed for a practical purpose. Without this condition, the spectral function would generally be a hyperfunction, which makes an analytical treatment difficult.
Due to this condition, the ``spectral'' integral: $\int_0 ^\infty \frac{\operatorname{Disc} D(-\sigma^2) }{k^2 + \sigma^2}$ (see Theorem \ref{thm:spec_repr_complex}) is well-defined.
The second assumption (iv b) excludes ``tachyonic singularities'', which could make $S_1(\xi)$ ill-defined.
The third one (iv c) claims that, except for the timelike singularities, there are no singularities in the vicinity of the real axis.
This is a technical assumption for defining the spectral function and also for simplifying the proof of Theorem \ref{thm:nontempered}.

Although the assumption (v) is a technical one\footnote{Note that discussion similar to the following one can be done for $D(z)$ of polynomial growth in $z$ as $|z| \rightarrow \infty$ by applying the Cauchy theorem to $D(z)/z^n$ in Theorem \ref{thm:spec_repr_complex}}, we expect that the gluon, ghost, and quark propagators satisfy this property due to the ultraviolet asymptotic freedom.
Indeed, in the Landau gauge, the QCD propagators have the asymptotic form of $D(k^2) \sim \frac{1}{k^2 (\ln |k^2|)^{\gamma_0/\beta_0}}$, where $\gamma_0$ and $\beta_0$ are respectively the first coefficients of the anomalous dimension and the beta function \cite{OZ80}.

The finiteness of branch cuts is required for the reconstruction of the Wightman function.
One could allow infinitely long branch cuts whose discontinuities are suppressed faster than any exponential decay as $|z| \rightarrow \infty$ and those which approach asymptotically to the negative real axis sufficiently fast.
We shall make a further comment on this point below.
For simplicity, we restrict ourselves to the case without branch cuts of infinite length in this paper.

Although we have restricted ourselves to poles and cuts at the assumption (iv), we note that one can easily generalize theorems in Sec.~{\ref{sec:complex-properties}}, i.e., Theorem 2 -- 11, to arbitrary complex singularities if the following conditions are satisfied: boundedness of locations in $|k^2|$, (iv a) regularity of the timelike singularities, (iv b, iv c) holomorphy in a neighborhood of the real axis except for the timelike axis, and (v) $D(k^2) \rightarrow 0$ as $k^2 \rightarrow \infty$. With these conditions, contributions from complex singularities can be represented as integrals along contours surrounding these singularities according to the Cauchy integral theorem. Then, we can use the same proofs described in Sec.~{\ref{sec:complex-properties}} for this generalization.

\subsubsection{Generalized spectral representation}

 \begin{figure}[t]
  \begin{center}
   \includegraphics[width=0.9\linewidth]{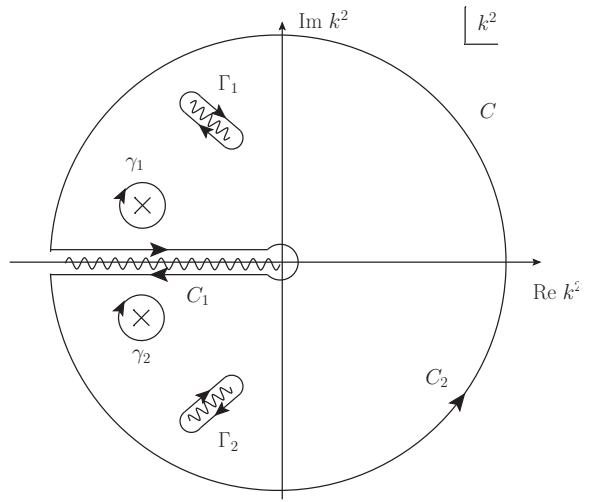}
  \end{center}
   \caption{
   The contours $\gamma_\ell$ and $\Gamma_k$ surround the pole $z_\ell$ and the branch cut $\mathcal{C}_k$ clockwise, respectively. 
   The contour $C$ consists of the path $C_1$ winding the negative real axis and the large circle $C_2$: $C = C_1 \cup C_2$. The orientation of the contour $C$ is taken counterclockwise.
   The propagator $D(k^2)$ is holomorphic in the region bounded by $C \cup \{ \gamma_\ell \}_{\ell = 1} ^{N_p} \cup \{ \Gamma_k \}_{k = 1} ^{N_c}$.
   }
    \label{fig:complex.pdf}
\end{figure}

As an immediate consequence following from the complex singularities, we derive the generalized spectral representation for $D(k^2)$.

Here, we consider the set-up illustrated in Fig.~\ref{fig:complex.pdf}, which is characterized by:
\begin{enumerate}
\renewcommand{\labelenumi}{(\arabic{enumi})}
    \item $\{ z_\ell \}_{\ell = 1} ^{N_p}$: positions of the complex poles
    \item $\{ n_\ell \}_{\ell = 1} ^{N_p}$: their orders
    \item $\gamma_\ell$: a small contour surrounding $z_\ell$ clockwise
    \item $\{ \mathcal{C}_k \}_{k = 1} ^{N_c}$: the complex branch cuts
    \item $\Gamma_k$: a contour wrapping around $\mathcal{C}_k$ clockwise
    \item $\mathcal{C}_0$: the negative real axis
    \item $C = C_1 \cup C_2$: the contour consisting of the path $C_1$ encompassing $\mathcal{C}_0$ and the large circle $C_2$ counterclockwise.
\end{enumerate}

The discontinuity of $D(\zeta)$ for a cut $\mathcal{C}_k$ ($k = 0,1,\cdots, N_c$) is denoted by $\operatorname{Disc}_{\mathcal{C}_k} D(\zeta)$. On a cut with an orientation, $\operatorname{Disc}_{\mathcal{C}_k} D(\zeta) :=  D(\zeta + i d \zeta) - D(\zeta - i d \zeta)$, where $d\zeta$ is an infinitesimal along the given orientation of $\mathcal{C}_k$.
For example, for the negative real axis $\mathcal{C}_0$ with the orientation from 0 to $-\infty$, $\operatorname{Disc}_{\mathcal{C}_0} D(-\sigma^2) =  D(-\sigma^2 - i \epsilon) - D(-\sigma^2 + i \epsilon) ~~ (\epsilon \rightarrow +0)$.

\begin{theorem}
\label{thm:spec_repr_complex}
Let $D(k^2) = S_1 (k)$ be a propagator satisfying (i) -- (v). In the above notation, the generalized spectral representation follows for $k^2$ which is not on singularities of $D(k^2)$,
\begin{align}
 D(k^2) &= \int_0 ^\infty d \sigma^2 \frac{\rho(\sigma^2)}{\sigma^2 + k^2} + \sum_{\ell=1}^{N_p} \sum_{m = 1} ^{n_\ell} \frac{Z_\ell^{(m)}}{(k^2 - z_\ell)^{m}} \notag \\
 &~~~ +  \sum_{k=1}^{N_c} \int_{\mathcal{C}_k} d \zeta \frac{\rho_k(\zeta)}{k^2 - \zeta}, \label{eq:spec_repr_complex}
 \end{align}
 where
 \begin{align}
  \rho (\sigma^2) &:= \frac{1}{2 \pi i} \operatorname{Disc}_{\mathcal{C}_0} D(-\sigma^2), \\
Z_\ell^{(m)} &:= - \frac{1}{2 \pi i} \oint_{\gamma_\ell} d k^2 D(k^2) (k^2 - z_\ell)^{m-1} \notag \\
&~~~(\ell = 1,\cdots,N_p;~m = 1, \cdots, n_\ell), \\
 \rho_k(\zeta) &:= \frac{1}{2 \pi i} \operatorname{Disc}_{\mathcal{C}_k} D(\zeta) ~~\mathrm{for}~\zeta \in \mathcal{C}_k \notag \\
 &~~~(k = 1,\cdots,N_c).
\end{align}
We have taken the orientation of $\mathcal{C}_k$ $(k = 1,\cdots,N_c)$ in the discontinuities $\operatorname{Disc}_{\mathcal{C}_k} D(\zeta)$ to coincide with the orientation of the integral in (\ref{eq:spec_repr_complex}) and the orientation of $\mathcal{C}_0$ in $\operatorname{Disc}_{\mathcal{C}_0} D(\zeta)$ to be from the origin to negative infinity.
\end{theorem}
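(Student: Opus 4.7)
The plan is to apply the Cauchy integral theorem to $f(\zeta) := D(\zeta)/(\zeta - k^2)$ on the domain bounded externally by $C = C_1 \cup C_2$ and punctured internally by the contours $\{\gamma_\ell\}$ and $\{\Gamma_k\}$ encircling the complex singularities. By assumption (iv), $D$ is holomorphic throughout this domain, so $f$ has only the simple pole $\zeta = k^2$ with residue $D(k^2)$. Cauchy's theorem then gives
\begin{equation}
D(k^2) = \frac{1}{2\pi i}\!\left[\oint_C + \sum_{\ell=1}^{N_p}\oint_{\gamma_\ell} + \sum_{k=1}^{N_c}\oint_{\Gamma_k}\right]\! \frac{D(\zeta)}{\zeta - k^2}\,d\zeta,
\end{equation}
and the task reduces to identifying each contribution with the corresponding term in (\ref{eq:spec_repr_complex}).

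I would first split $\oint_C = \oint_{C_2} + \oint_{C_1}$. On the large circle $C_2$ of radius $R$, the integrand is bounded by $\sup_{|\zeta|=R}|D(\zeta)|/(R - |k^2|)$, so by assumption (v) the integral vanishes as $R \to \infty$. Collapsing $C_1$ onto the timelike axis in the standard way then produces
\begin{equation}
\frac{1}{2\pi i}\oint_{C_1}\!\frac{D(\zeta)}{\zeta - k^2}\,d\zeta = \int_0^\infty d\sigma^2\, \frac{\rho(\sigma^2)}{\sigma^2 + k^2},
\end{equation}
understood distributionally: for $k^2$ off the negative real axis, the kernel $\sigma^2 \mapsto 1/(\sigma^2 + k^2)$ lies in $\mathscr{S}([0,\infty])$ after the change of variable $\lambda = -(1+\sigma^2)^{-1}$, so it may be paired against $\operatorname{Disc}_{\mathcal{C}_0} D \in \mathscr{S}'([0,\infty])$ supplied by assumption (iv a).

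For the pole contributions I would Laurent-expand $D(\zeta) = \sum_{m=1}^{n_\ell} c_m^{(\ell)}(\zeta - z_\ell)^{-m} + (\textrm{holomorphic})$ around $z_\ell$ together with $(\zeta - k^2)^{-1} = -\sum_{j\geq 0}(\zeta - z_\ell)^j/(k^2 - z_\ell)^{j+1}$ (valid in $|\zeta - z_\ell| < |k^2 - z_\ell|$, where $\gamma_\ell$ may be taken to lie) and read off the coefficient of $(\zeta - z_\ell)^{-1}$; accounting for the clockwise orientation of $\gamma_\ell$ yields $\frac{1}{2\pi i}\oint_{\gamma_\ell} f\, d\zeta = \sum_m c_m^{(\ell)}/(k^2 - z_\ell)^m$, and the identification $c_m^{(\ell)} = Z_\ell^{(m)}$ follows immediately from the defining integral for $Z_\ell^{(m)}$ and the same Laurent series. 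For each complex cut, deforming $\Gamma_k$ onto $\mathcal{C}_k$ identifies the two sides of the contour with the boundary values $D(\zeta \pm i\,d\zeta)$; under the orientation convention stated in the theorem, their combination produces exactly $\int_{\mathcal{C}_k}\!d\zeta\, \rho_k(\zeta)/(k^2 - \zeta)$. The main technical point is the timelike contribution: one must justify the deformation of $C_1$ onto $\mathcal{C}_0$ and the extension of the $\sigma^2$-integration to infinity in a distributional sense, both controlled by assumption (iv a). The remaining deformations and residue computations are standard and rest only on (iv b)--(iv c) (holomorphy in a neighborhood of the real axis away from the timelike cut), the finiteness of $N_p$ and $N_c$, and the decay condition (v).
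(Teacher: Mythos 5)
Your proposal is correct and takes essentially the same approach as the paper: apply the Cauchy integral formula to $D(\zeta)/(\zeta - k^2)$ on the region bounded by $C \cup \{\gamma_\ell\} \cup \{\Gamma_k\}$, discard $C_2$ via assumption (v), and then identify each contour piece with the corresponding term in (\ref{eq:spec_repr_complex}). You actually supply more detail than the paper does at the two points it glosses over, namely the Laurent-series evaluation of the $\gamma_\ell$ integrals and the distributional justification for collapsing $C_1$ onto the timelike axis.
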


Before proceeding to the proof, let us add several remarks.
\begin{enumerate}
\renewcommand{\labelenumi}{(\alph{enumi})}
    \item If there exists no complex singularity $(N_p = N_c = 0)$, this theorem provides the K\"all\'en-Lehmann spectral representation 
    \begin{align}
 D(k^2) &= \int_0 ^\infty d \sigma^2 \frac{\rho(\sigma^2)}{\sigma^2 + k^2}
 \end{align}
    except for the positivity $\rho(\sigma^2) > 0$. In this sense, (\ref{eq:spec_repr_complex}) is a generalization of the K\"all\'en-Lehmann spectral representation.
    \item For real-valued fields, $D(k^2)$ is real for $k^2 > 0$ as noted above. Then, from the Schwarz reflection principle $D(z^*) = [D(z)]^*$, the spectral function can be written in the form
    \begin{align}
         \rho(\sigma^2) &= \frac{1}{\pi} \operatorname{Im} D(-\sigma^2-i\epsilon)~~ (\epsilon \rightarrow +0),
    \end{align}
     which is the usual dispersion relation.
    \item Similarly, for real-valued fields, the Schwarz reflection principle $D(z^*) = [D(z)]^*$ implies that the complex singularities must appear as complex conjugate pairs (up to arbitrariness of the branch cuts).
    \item $\operatorname{Disc}_{\mathcal{C}_k} D(\zeta)$ is in general a hyperfunction, which is not very convenient for careful analyses.
    Thus, although Theorem \ref{thm:spec_repr_complex} is itself important, we utilize an equation (\ref{eq:theorem_1_Cauchy_int}) appearing in the proof given below rather than (\ref{eq:spec_repr_complex}) in order to prove subsequent theorems. Only for the timelike part, namely the first term of (\ref{eq:spec_repr_complex}), we use the spectral representation in the following subsections, since the assumption (iv~a) makes $\rho(\sigma^2)$ somewhat easy to treat.
    \item Note that the domains of the integrals only represent that $ \rho (\sigma^2) \in \mathscr{S}' ( [0,\infty])$ and that $\operatorname{supp} \rho_k$ lies in the closure of the cut $\mathcal{C}_k$. In particular, we allow a massless pole, namely a pole at the origin $k^2 = 0$, as long as the assumption (iii) is maintained.
\end{enumerate}

\begin{proof}

For any point $k^2$ not on the singularities, the Cauchy integral formula yields
\begin{align}
 D(k^2) &= \oint_C \frac{d \zeta}{2 \pi i}  \frac{D(\zeta)}{\zeta - k^2} + \sum_{\ell=1}^{N_p} \oint_{\gamma_\ell} \frac{d \zeta}{2 \pi i}  \frac{D(\zeta)}{\zeta - k^2} \notag \\
 &~~~ + \sum_{k=1}^{N_c} \oint_{\Gamma_k} \frac{d \zeta}{2 \pi i}  \frac{D(\zeta)}{\zeta - k^2}, \label{eq:theorem_1_Cauchy_int}
\end{align}
where we have chosen the contours $(C_1,~\gamma_\ell,~\Gamma_k)$ sufficiently close to the singularities.

The assumption (v) guarantees that the integration along the large circle $C_2$ vanishes. Thus, the first term reads
\begin{align}
\oint_C \frac{d \zeta}{2 \pi i}  \frac{D(\zeta)}{\zeta - k^2} &= \int_{C_1} \frac{d \zeta}{2 \pi i}  \frac{D(\zeta)}{\zeta - k^2},
\end{align}
where $C_1$ surrounds the negative real axis.

For the second term, a calculation yields
\begin{align}
\sum_{\ell=1}^{N_p} \oint_{\gamma_\ell} \frac{d \zeta}{2 \pi i}  \frac{D(\zeta)}{\zeta - k^2} = \sum_{\ell=1}^{N_p} \sum_{m = 1} ^{n_\ell} \frac{Z_\ell^{(m)}}{(k^2 - z_\ell)^{m}}.
\end{align}

Therefore, we have
\begin{align}
 D(k^2) &= \int_{C_1^{-1}} \frac{d \zeta}{2 \pi i}  \frac{D(\zeta)}{k^2 -\zeta} + \sum_{\ell=1}^{N_p} \sum_{m = 1} ^{n_\ell} \frac{Z_\ell^{(m)}}{(k^2 - z_\ell)^{m}} \notag \\
 &~~~ + \sum_{k=1}^{N_c} \oint_{\Gamma_k^{-1}} \frac{d \zeta}{2 \pi i}  \frac{D(\zeta)}{k^2-\zeta}, \label{eq:theorem_1_last}
\end{align}
where $C_1^{-1}$ and $\Gamma_k^{-1}$ denote $C_1$ and $\Gamma_k$ with inverse directions, respectively. Note that $C_1^{-1}$ and $\Gamma_k^{-1}$ are roughly ``contours surrounding the cuts counterclockwise''. By taking a limit shrinking these contours $(C_1,~\Gamma_k)$, the right hand side of (\ref{eq:theorem_1_last}) is represented as (\ref{eq:spec_repr_complex}).
\end{proof}

\subsection{Some properties of complex singularities} \label{sec:complex-properties}

Here, we derive analytic properties of propagators with complex singularities.
As a first step, we consider (Sec.~\ref{sec:CCP}) an example of one pair of complex conjugate simple poles.
After that, we prove the properties of general complex singularities: (Sec.~\ref{sec:holomorphy}) Holomorphy in the tube, (Sec.~\ref{sec:temperedness}) Violation of temperedness of the reconstructed Wightman function, (Sec.~\ref{sec:ref_pos}) Violation of reflection positivity,
(Sec.~\ref{sec:W-positivity}) Violation of (Wightman) positivity,
(Sec.~\ref{sec:Lorentz}) Lorentz symmetry, and (Sec.~\ref{sec:locality}) Locality.
The organization of this section is illustrated in Fig.~\ref{fig:summary-2-b}.

 \begin{figure*}[t]
  \begin{center}
\includegraphics[width=0.9 \linewidth]{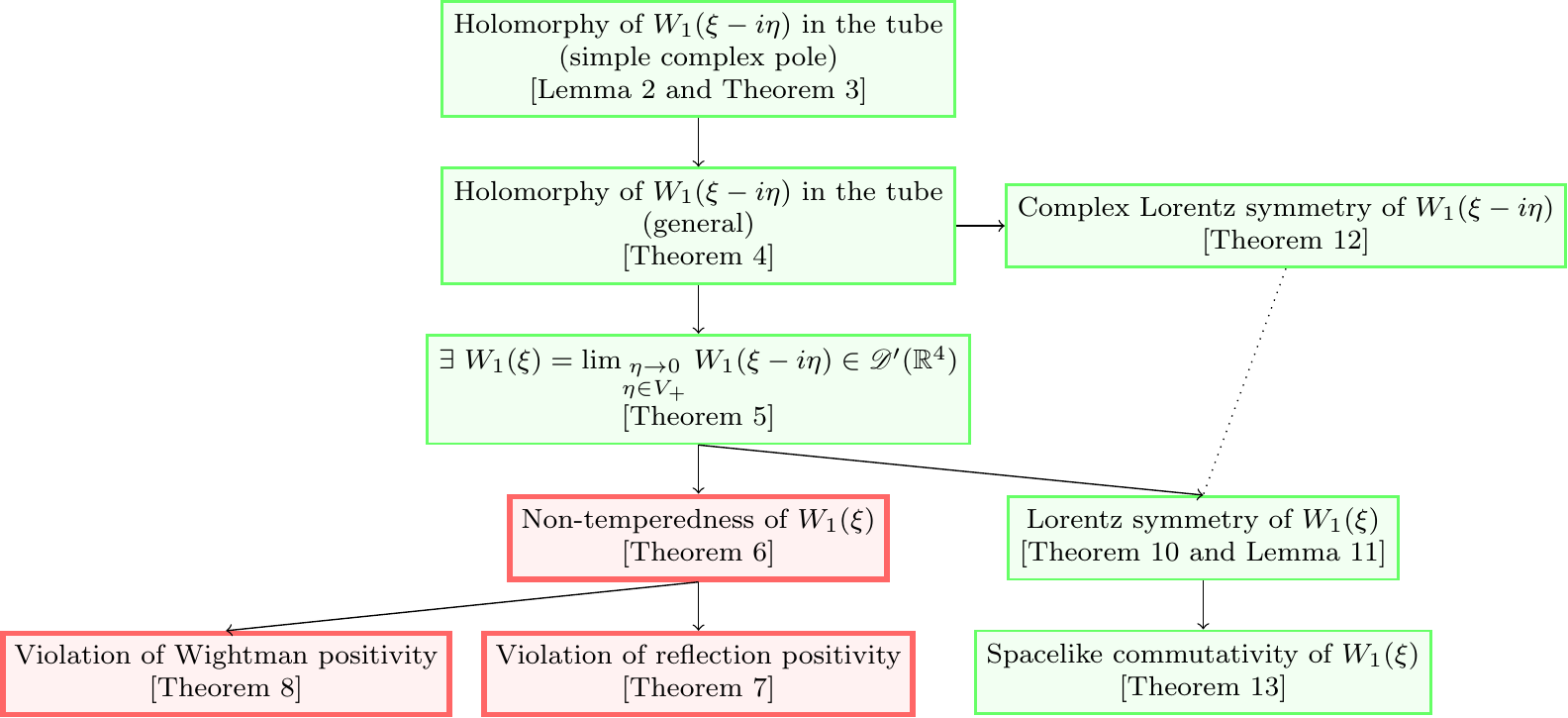}
  \end{center}
   \caption{Flow chart summarizing Sec.~\ref{sec:complex-properties}. In our proofs, a theorem at a destination of an arrow requires theorems in its upstream. 
   Fig.~\ref{fig:complex_lorentz.pdf} shows the detailed relation on the dotted line between Theorem \ref{thm:complex_Lorentz} and \ref{thm:Lorentz}.
   The green blocks are consistent with the usual QFT, while the red blocks with thick boxes contradict that.}
    \label{fig:summary-2-b}
\end{figure*}

\subsubsection{Example: one pair of complex conjugate simple poles} \label{sec:CCP}

Let us first consider the propagator $D(k^2)$ with one pair of complex conjugate simple poles, which is decomposed into the ``timelike part'' $D_{tl}(k^2) $ and ``complex-pole part'' $D_{cp}(k^2)$,
\begin{align}
    D(k^2) &= D_{tl}(k^2)  + D_{cp}(k^2) , \notag \\
    D_{tl}(k^2) &= \int_0 ^\infty d \sigma^2 \frac{\rho(\sigma^2)}{\sigma^2 + k^2} , \notag \\
    D_{cp}(k^2) &= \frac{Z}{M^2 + k^2}  + \frac{Z^*}{(M^*)^2 + k^2}, \label{eq:propagator_complex_poles}
\end{align}
Without loss of generality, we can assume $\operatorname{Im} M^2 > 0$.
Accordingly, the Schwinger function is decomposed as
\begin{align}
    S_1(\xi) &= S_{tl}(\xi)  + S_{cp}(\xi), \notag \\
    S_{tl}(\xi) &= \int \frac{d^4 k}{(2 \pi)^4} e^{ik \xi} D_{tl}(k^2), \notag \\
    S_{cp}(\xi) &= \int \frac{d^4 k}{(2 \pi)^4} e^{ik \xi} D_{cp}(k^2).
\end{align}
Our aim here is to demonstrate the reconstruction procedure $S_1 (\vec{\xi}, \xi_4) \rightarrow W_1 (\xi^0, \vec{\xi})$ according to the definition of the reconstruction (\ref{eq:W_S_connection_A}) and (\ref{eq:W_S_connection_B}).
We can reconstruct each part of the Wightman function separately, as $S_{tl} \rightarrow W_{tl}$ and $S_{cp} \rightarrow W_{cp}$,
\begin{align}
    W_1 (\xi) &= W_{tl}(\xi)  + W_{cp}(\xi).
\end{align}

We first consider the timelike part $S_{tl} \rightarrow W_{tl}$. Since the timelike part is not a main subject of this paper, let us describe the reconstruction procedure of this part only briefly. This reconstruction procedure consists of the following steps.

\begin{enumerate}
\renewcommand{\labelenumi}{Step \arabic{enumi}.}
    \item regarding $S_{tl} (\xi)$ as an ordinary function $\hat{S}_{tl} (\xi)$ on $\{ (\vec{\xi} , \xi_4)~;~ \xi_4 > 0 \}$,
    \item performing analytic continuation from $W_{tl} (-i \xi_4, \vec{\xi}) = \hat{S}_{tl} (\vec{\xi}, \xi_4)$ to $W_{tl} (\xi - i \eta)$ defined on the tube $\mathbb{R}^4 - i V_+$,
    \item taking the boundary value as a tempered distribution $W_{tl} (\xi) = \lim_{\substack{\eta \rightarrow 0 \\ \eta \in V_+}} W_{tl} (\xi-i \eta) \in \mathscr{S}'(\mathbb{R}^4)$,
\end{enumerate}
where $V_+$ denotes the (open) forward light cone
\begin{align}
    V_+ := \{ (\eta^0,\vec{\eta}) \in \mathbb{R}^4 ~;~ \eta^0 > |\vec{\eta}|  \}.
\end{align}

Let us take a closer look into each step. Main properties of the spectral function that we shall use in these steps are $ \rho (\sigma^2) \in \mathscr{S}' ( [0,\infty])$ and its regularization $\frac{1}{2 \pi i} ( D(-\sigma^2 - i \epsilon) - D(-\sigma^2 + i \epsilon))$ ($\epsilon \rightarrow + 0$).

\textit{Step 1}.
This step claims that there exists a function $\hat{S}_{tl} (\xi)$ such that\footnote{Recall that the Fourier transform of a tempered distribution is defined by the Fourier transform of its test function.}, for any test function $f(\xi) \in \mathscr{S} (\mathbb{R}^4_+)$,
\begin{align}
    \int \frac{d^4 k}{(2 \pi)^4} D_{tl}(k^2) \tilde{f}(k) = \int d^4 \xi~ f(\xi) \hat{S}_{tl} (\xi),
\end{align}
 where $\tilde{f}(k) := \int d^4 \xi ~f(\xi)  e^{i k \xi}$.
 Noting the properties of $\rho(\sigma^2)$, we have the desired function $\hat{S}_{tl} (\xi)$:
 \begin{align}
    \hat{S}_{tl} (\xi) &:= \int_0^\infty d\sigma^2 ~ \rho(\sigma^2) \hat{S}_{\sigma^2}  (\xi), \notag \\
    \hat{S}_{\sigma^2}  (\xi) &:= \int \frac{d^3 \vec{k}}{(2 \pi)^3} e^{i\vec{k} \cdot \vec{\xi}}  \frac{e^{- \sqrt{\sigma^2 + \vec{k}^2} |\xi_4|}}{2 \sqrt{\vec{k}^2 + \sigma^2}}. \label{eq:timelike-Schwinger}
\end{align}

\textit{Step 2}.
We can confirm that the Cauchy-Riemann equation holds in the tube $\xi - i \eta \in \mathbb{R}^4 - i V_+$ for the following function $W_{tl} (\xi - i \eta)$:
\begin{align}
    W_{tl} (\xi - i \eta) &:= \int_0^\infty d\sigma^2 ~ \rho(\sigma^2) W_{\sigma^2}  (\xi - i \eta),  \notag \\
    W_{\sigma^2}  (\xi - i \eta) &:= \int \frac{d^3 \vec{k}}{(2 \pi)^3} e^{i\vec{k} \cdot (\vec{\xi} - i \vec{\eta})}  \frac{e^{- i \sqrt{\sigma^2 + \vec{k}^2} (\xi^0 - i \eta^0) }}{2 \sqrt{\vec{k}^2 + \sigma^2}}, \label{eq:timelike-complex-Wightman}
\end{align}
which satisfies $W_{tl} (-i \xi_4, \vec{\xi}) = \hat{S}_{tl} (\vec{\xi}, \xi_4)$. Thus, $W_{tl} (\xi - i \eta)$ is the desired analytic continuation.

\textit{Step 3}.
We can take the limit $\eta \rightarrow 0 ~ (\eta \in V_+)$ of $W_{tl} (\xi - i \eta) $ as a functional of $\mathscr{S}(\mathbb{R}^4)$. For each $f \in \mathscr{S}(\mathbb{R}^4)$, we define
\begin{align}
    W_{tl} (f) &:= \lim_{\substack{\eta \rightarrow 0 \\ \eta \in V_+}} \int d^4 \xi~ f(\xi) W_{tl} (\xi-i \eta) \notag \\
    &= \int_0^\infty d\sigma^2 ~ \rho(\sigma^2) i \Delta^+ (f, \sigma^2), \label{eq:timelike-Wightman}
\end{align}
where $i \Delta^+ (f, \sigma^2)$ is the free Wightman function of mass $\sigma^2$:
\begin{align}
    &i \Delta^+ (f, \sigma^2) := \int d^4 \xi~f(\xi) i \Delta^+ (\xi, \sigma^2) \notag \\
    &~~:= \int \frac{d^3 k}{(2 \pi)^3}~\frac{1}{2 \sqrt{\vec{k}^2 + \sigma^2}} \left[  \int d^4 \xi~ f(\xi)  e^{- i \sqrt{\sigma^2 + \vec{k}^2} \xi^0 + i\vec{k} \cdot \vec{\xi}}   \right], \notag \\
    &i \Delta^+ (\xi, \sigma^2) = (2 \pi) \int \frac{d^4 k}{(2 \pi)^4}~ e^{-i k \xi} \theta (k_0) \delta (k^2 - \sigma^2),
\end{align}
with the Loretzian vectors $\xi = (\xi^0,\vec{\xi}), ~k = (k^0,\vec{k})$.
We can check that this linear functional $W_{tl} (f)$ is continuous in $f \in \mathscr{S}(\mathbb{R}^4)$. Hence, we obtain the timelike part of the reconstructed Wightman function which is a tempered distribution.

Let us next reconstruct the complex-pole part $S_{cp} \rightarrow W_{cp}$ in a similar way.
The complex-pole part $S_{cp}(\xi)$ can be expressed as
\begin{align}
    S_{cp}(\vec{\xi}, \xi_4) = \int \frac{d^3 \vec{k}}{(2 \pi)^3} e^{i\vec{k} \cdot \vec{\xi}} \left[ \frac{Z}{2 E_{\vec{k}} } e^{- E_{\vec{k}} |\xi_4|} +  \frac{Z^*}{2 E_{\vec{k}}^*} e^{- E_{\vec{k}}^* |\xi_4|}
    \right], \label{eq:simple_complex_poles_Schwinger}
\end{align}
where $E_{\vec{k}} = \sqrt{\vec{k}^2 +M^2}$ is a branch of $\operatorname{Re} E_{\vec{k}} > 0$.
Since we chose $\operatorname{Im} M^2 >0$, so that $\operatorname{Im} E_{\vec{k}} > 0$ holds. Note that $S_{cp}(\vec{\xi}, \xi_4)$ can be regarded as a function for $\xi_4 > 0$.

For a later purpose, we state this derivation as a lemma.

\begin{widetext}

\begin{lemma} \label{lem:S_zeta}
The following equation holds for $\zeta \in \mathbb{C}-(-\infty,0]$.
\begin{align}
S_{\zeta} (\xi) &:= \int \frac{d^4 k}{(2 \pi)^4} e^{i k \xi} \frac{1}{k^2 + \zeta} = \int \frac{d^{3} \vec{k}}{(2 \pi)^{3}} e^{i\vec{k} \cdot \vec{\xi}} \left[ \frac{e^{- \sqrt{\vec{k}^2 + \zeta} |\xi_4|}}{2 \sqrt{\vec{k}^2 + \zeta} } \right], \label{eq:lemma2}
\end{align}
where we have chosen $\operatorname{Re} \sqrt{\vec{k}^2 +\zeta^2} > 0$, and these Fourier transforms are understood in $\mathscr{S}' (\mathbb{R}^4)$ and $\mathscr{S}' (\mathbb{R}^3)$ respectively.
Moreover, the right-hand side is an ordinary function for $\xi_4 > 0$:
\begin{align}
S_{\zeta} (\xi) &= \int \frac{d^{3} \vec{k}}{(2 \pi)^{3}}  \frac{e^{i\vec{k} \cdot \vec{\xi} - \sqrt{\vec{k}^2 + \zeta} \xi_4 }}{2 \sqrt{\vec{k}^2 + \zeta} }~~~(\mathrm{in}~\mathscr{S}' (\mathbb{R}^4_+)), \label{eq:S_zeta_fct}
\end{align}
where this integral over $\vec{k}$ is the ordinary integral (namely, not necessary understood as the Fourier transform of a tempered distribution).
\end{lemma}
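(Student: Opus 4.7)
The plan is to reduce the four-dimensional Fourier transform to a three-dimensional one by evaluating the $k_4$ integral first via the residue theorem. As a preliminary, since $\zeta \in \mathbb{C} - (-\infty, 0]$, for real $k \in \mathbb{R}^4$ the quantity $k^2 + \zeta$ is bounded away from zero on bounded sets (its imaginary part equals $\operatorname{Im}\zeta \neq 0$ when $\zeta$ is non-real, and otherwise $\zeta > 0$ gives $k^2 + \zeta \geq \zeta$) and grows like $|k|^2$ at infinity. Hence $1/(k^2+\zeta)$ is a smooth, polynomially bounded function on $\mathbb{R}^4$, so its Fourier transform $S_\zeta$ is a well-defined element of $\mathscr{S}'(\mathbb{R}^4)$.

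For the distributional identity (\ref{eq:lemma2}), I would pair $S_\zeta$ with an arbitrary test function $f \in \mathscr{S}(\mathbb{R}^4)$, interchange the $k$- and $\xi$-integrations by Fubini (justified by the polynomial bound on $1/(k^2+\zeta)$ combined with the rapid decay of $\tilde{f}$), and then evaluate the inner $k_4$ integral by contour methods. With the branch $\operatorname{Re}\sqrt{\vec{k}^2 + \zeta} > 0$, which is globally well-defined in $\vec{k} \in \mathbb{R}^3$ precisely because $\vec{k}^2 + \zeta$ never lies in $(-\infty, 0]$, the integrand $(k_4^2 + \vec{k}^2 + \zeta)^{-1} e^{i k_4 \xi_4}$ has simple poles at $k_4 = \pm i \sqrt{\vec{k}^2 + \zeta}$, one in each open half-plane. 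Closing the contour in the upper (respectively lower) half-plane for $\xi_4 > 0$ (respectively $\xi_4 < 0$) picks up only the pole with positive (respectively negative) imaginary part, yielding $e^{-\sqrt{\vec{k}^2+\zeta}\,|\xi_4|}/(2\sqrt{\vec{k}^2+\zeta})$ in both cases. Substituting back and unwinding the test-function pairing gives the stated identity, understood both as a four-dimensional statement on $\mathscr{S}(\mathbb{R}^4)$ and, by Fubini in the $\vec{\xi}$ variable, as a three-dimensional one with $\xi_4$ as a parameter.

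For the second claim, that the right-hand side is an ordinary function on $\{\xi_4 > 0\}$, the key observation is the asymptotic behavior $\operatorname{Re}\sqrt{\vec{k}^2 + \zeta} = |\vec{k}| + O(1/|\vec{k}|)$ as $|\vec{k}| \to \infty$. For each fixed $\xi_4 > 0$ this gives exponential decay $e^{-\operatorname{Re}\sqrt{\vec{k}^2+\zeta}\,\xi_4}$ in $|\vec{k}|$, so the three-dimensional integral in (\ref{eq:S_zeta_fct}) converges absolutely and uniformly on compact subsets of $\{\xi_4 > 0\}$ and defines a smooth (indeed real-analytic in $\xi_4 > 0$) function there. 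A routine dominated-convergence argument then identifies this ordinary function with the restriction of the distribution $S_\zeta$ to $\mathscr{S}(\mathbb{R}^4_+)$.

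The hard part is not any single computation but the tempered-distribution bookkeeping: one must verify that the Fubini swap and the $k_4$-contour deformation are legitimate when both sides are initially only distributions, and that the branch of $\sqrt{\vec{k}^2+\zeta}$ with positive real part is consistently and analytically defined across all $\vec{k} \in \mathbb{R}^3$ and all $\zeta \in \mathbb{C} - (-\infty, 0]$. Both points are standard, but they must be made explicit so that the sign of the exponent in the final answer — on which the entire reconstruction of $W_{cp}$ rests — is unambiguous.
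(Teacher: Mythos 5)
Your proposal is correct and follows essentially the same route as the paper's own proof: pair with a test function, swap integrals by Fubini (justified by polynomial boundedness of $1/(k^2+\zeta)$ and rapid decay of $\tilde f$), evaluate the $k_4$ integral by residues using the branch $\operatorname{Re}\sqrt{\vec{k}^2+\zeta}>0$, and then use the exponential decay in $|\vec{k}|$ for fixed $\xi_4>0$ to establish that the 3D integral defines an ordinary function agreeing with the distribution on $\mathscr{S}(\mathbb{R}^4_+)$. The only difference of emphasis is that the paper spells out the integrability estimate for the second claim in a footnote where you invoke dominated convergence, and you are somewhat more explicit about the well-definedness of the square-root branch; neither difference is substantive.
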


\begin{proof}
For the former assertion (\ref{eq:lemma2}), it is sufficient to prove that, for any test function $f \in \mathscr{S}(\mathbb{R}^4)$,
\begin{align}
     \int \frac{d^4 k}{(2 \pi)^4} \left( \frac{1}{ k^2 + \zeta} \right) \tilde{f}(k) = \int d \xi_4 \int \frac{d^{3} \vec{k}}{(2 \pi)^{3}} ~\left(  \frac{e^{- \sqrt{\vec{k}^2 + \zeta} |\xi_4|}}{2 \sqrt{\vec{k}^2 + \zeta} } 
     \right) \int d^{3} \vec{\xi} ~e^{i\vec{k} \cdot \vec{\xi}}   f(\xi), \label{eq:lem2_former}
\end{align}
where $\tilde{f}(k) := \int d^4 \xi ~f(\xi)  e^{i k \xi}$. Since both $\tilde{f}(k)$ and $f(\xi)$ are of rapid decrease, Fubini's theorem (for $\int d^4 k \rightarrow \int d^{3} \vec{k} \int d k_4$ and $\int d k_4 \int d^4 \xi \rightarrow \int d^4 \xi \int d k_4$) yields
\begin{align}
     \int \frac{d^4 k}{(2 \pi)^4} \left( \frac{1}{ k^2 + \zeta} \right) \tilde{f}(k) &= \int \frac{d^{3} \vec{k}}{(2 \pi)^{3}} \int \frac{d k_4}{(2 \pi)} \frac{1}{ k^2 + \zeta} \int d^4 \xi ~f(\xi)  e^{i k \xi} \notag \\
     &= \int \frac{d^{3} \vec{k}}{(2 \pi)^{3}} \int d^4 \xi \int \frac{d k_4}{(2 \pi)}  ~f(\xi)  \frac{e^{i k \xi}}{ k^2 + \zeta}.
\end{align}
Therefore, a simple residue calculation gives
\begin{align}
\int \frac{d^4 k}{(2 \pi)^4} \left( \frac{1}{ k^2 + \zeta} \right) \tilde{f}(k) &= \int \frac{d^{3} \vec{k}}{(2 \pi)^{3}} \int d^4 \xi ~e^{i\vec{k} \cdot \vec{\xi}} \left(  \frac{e^{- \sqrt{\vec{k}^2 + \zeta} |\xi_4|}}{2 \sqrt{\vec{k}^2 + \zeta} } 
     \right) f(\xi).
\end{align}
Since both $f(\xi)$ and $\int d^{3} \vec{\xi} ~e^{i\vec{k} \cdot \vec{\xi}}   f(\xi)$ are of rapid decrease, we can change the order of the integrals to obtain the right-hand side of (\ref{eq:lem2_former}).
This establishes the former assertion (\ref{eq:lem2_former}).

For the latter assertion (\ref{eq:S_zeta_fct}), it is enough to prove that, for any test function $f(\xi) \in \mathscr{S} (\mathbb{R}^4_+)$,
\begin{align}
     \int d \xi_4 \int \frac{d^{3} \vec{k}}{(2 \pi)^{3}} ~\left(  \frac{e^{- \sqrt{\vec{k}^2 + \zeta} |\xi_4|}}{2 \sqrt{\vec{k}^2 + \zeta} } 
     \right) \int d^{3} \vec{\xi} ~e^{i\vec{k} \cdot \vec{\xi}}   f(\xi) = \int d^4 \xi ~\left[ \int \frac{d^{3} \vec{k}}{(2 \pi)^{3}}  \frac{e^{i\vec{k} \cdot \vec{\xi} - \sqrt{\vec{k}^2 + \zeta} \xi_4 }}{2 \sqrt{\vec{k}^2 + \zeta} } \right] f(\xi).
\end{align}
This follows from Fubini's theorem and integrability\footnote{The integrability can be verified by the following estimation: for $f \in \mathscr{S} (\mathbb{R}^4_+)$,
\begin{align*}
    \left| \frac{e^{- \sqrt{\vec{k}^2 + \zeta} \xi_4}}{2 \sqrt{\vec{k}^2 + \zeta} }  f(\xi) \right| &\leq  \frac{|f(\xi)| \max_{X \geq 0} \left| e^{- X \xi_4} X^3  \right|}{2 \left| \sqrt{\vec{k}^2 + \zeta} \right| \left( \operatorname{Re} \sqrt{\vec{k}^2 + \zeta} \right)^3}  \notag \\
    &\leq \left( 2 \left| \sqrt{\vec{k}^2 + \zeta} \right| \left( \operatorname{Re} \sqrt{\vec{k}^2 + \zeta} \right)^3 (1 + (\xi)^2)^3 \right)^{-1} \notag \\
    &\times \sup_{\eta \in \mathbb{R}^4} \left( |f(\eta) (1 + (\eta)^2)^3| \max\left(1, e^{-3} \left( \frac{3}{\eta^4} \right)^3 \right) \right),
\end{align*}
which is integrable in $\vec{k}$ and $\xi$. Note that the supremum is finite due to $\left. \partial_{\xi_4}^n f(\xi) \right|_{\xi_4 = 0} = 0$ for any $n \in \mathbb{Z}_{\geq 0}$.
} of $\left| \frac{e^{- \sqrt{\vec{k}^2 + \zeta} \xi_4}}{2 \sqrt{\vec{k}^2 + \zeta} }  f(\xi) \right|$ for $f \in \mathscr{S} (\mathbb{R}^4_+)$.
\end{proof}
\end{widetext}
Note that $E_{\vec{k}} = |\vec{k}| + O(1/|\vec{k}|)$ strongly suggests that $\operatorname{Im} M^2$ does not affect the convergence.
Then, the convergence and holomorphy of the analytically-continued Schwinger function is valid in the usual tube $( - i \xi_4 , \vec{\xi}) \in  \mathbb{R}^4 - iV_+$.
This holomorphy is an important step. We shall prove this claim carefully.

\begin{theorem}
\label{thm:holomorphy_simple_complex_poles}
The complex-pole part of the Wightman function:
\begin{align}
    W_{cp}(\xi - i\eta) &= \int \frac{d^3 \vec{k}}{(2 \pi)^3} e^{i\vec{k} \cdot (\vec{\xi} - i \vec{\eta})} \notag \\
    & ~~~\times \Biggl[ \frac{Z}{2 E_{\vec{k}} } e^{- i E_{\vec{k}} (\xi^0 - i \eta^0)} +  \frac{Z^*}{2 E_{\vec{k}}^*} e^{- i E_{\vec{k}}^* (\xi^0 - i \eta^0)} 
    \Biggr] \label{eq:simple_complex_poles_hol_Wightman}
\end{align}
is holomorphic in the tube $\xi - i \eta = (\xi^0 - i \eta^0, \vec{\xi} - i \vec{\eta}) \in \mathbb{R}^4 - iV_+$.
\end{theorem}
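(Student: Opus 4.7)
The plan is to establish holomorphy via two ingredients: pointwise analyticity of the integrand in $z = \xi - i\eta$ for each fixed $\vec{k}$, and a dominating function in $L^1(\mathbb{R}^3)$ that is uniform on compact subsets of the tube $\mathbb{R}^4 - iV_+$. Together these let me differentiate under the integral, so that $\bar\partial_{z^\mu} W_{cp} = 0$ on the tube; equivalently one may invoke Morera slice by slice in each $z^\mu$ together with Hartogs' theorem for joint holomorphy. I treat the first summand in (\ref{eq:simple_complex_poles_hol_Wightman}) explicitly; the $Z^*$-summand is identical since the branch choice gives $\operatorname{Re} E_{\vec{k}}^* = \operatorname{Re} E_{\vec{k}}$ and $\operatorname{Im} E_{\vec{k}}^* = -\operatorname{Im} E_{\vec{k}}$, both of the same order in $|\vec{k}|$.

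Writing $z^\mu = \xi^\mu - i\eta^\mu$, the integrand is $(Z/2E_{\vec{k}}) \exp(i\vec{k}\cdot\vec{z} - iE_{\vec{k}} z^0)$, which is entire in $z$ for every real $\vec{k}$. Its modulus is
\begin{equation}
\frac{|Z|}{2|E_{\vec{k}}|} \exp\bigl( \vec{k}\cdot\vec{\eta} + (\operatorname{Im} E_{\vec{k}})\xi^0 - (\operatorname{Re} E_{\vec{k}})\eta^0 \bigr).
\end{equation}
Because $\operatorname{Im} M^2 > 0$, the combination $\vec{k}^2 + M^2$ lies in the open upper half-plane for every real $\vec{k}$, so the prescribed branch with $\operatorname{Re} E_{\vec{k}} > 0$ is well-defined globally; the expansion $E_{\vec{k}} = |\vec{k}| + M^2/(2|\vec{k}|) + O(|\vec{k}|^{-3})$ yields $\operatorname{Re} E_{\vec{k}} \ge |\vec{k}| - C/|\vec{k}|$ and $\operatorname{Im} E_{\vec{k}} = O(|\vec{k}|^{-1})$ outside a bounded ball, while $|E_{\vec{k}}|$ is bounded below on all of $\mathbb{R}^3$.

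Now fix a compact $K \subset \mathbb{R}^4 - iV_+$ on which $\eta^0 - |\vec{\eta}| \ge 2\delta > 0$ and $|\xi|,|\eta|$ are bounded. For $(\xi,\eta) \in K$ and $|\vec{k}|$ large, the bracketed exponent is at most $|\vec{k}|(|\vec{\eta}| - \eta^0) + O(|\vec{k}|^{-1}) \le -\delta|\vec{k}|$, while on any bounded $\vec{k}$-region the integrand modulus is controlled by a $K$-dependent constant. Hence the integrand on $K$ is dominated by an $L^1(\mathbb{R}^3)$ function of $\vec{k}$ alone, which licenses differentiation under the integral and gives $\bar\partial_{z^\mu} W_{cp} = 0$ throughout the tube.

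The principal obstacle is the borderline nature of this convergence: both $e^{\vec{k}\cdot\vec{\eta}}$ and $e^{-(\operatorname{Re} E_{\vec{k}})\eta^0}$ are of order $e^{O(|\vec{k}|)}$, and integrability rests entirely on the strict inequality $\eta^0 > |\vec{\eta}|$ characterizing $V_+$; any weakening of the tube condition would destroy convergence. Extracting a uniform positive margin $\delta$ on each compact, together with verifying that the prescribed branch of $\sqrt{\vec{k}^2 + M^2}$ keeps $\operatorname{Re} E_{\vec{k}}$ strictly positive with the correct leading asymptotics for all real $\vec{k}$, is the substantive content of the argument; everything else reduces to routine dominated convergence.
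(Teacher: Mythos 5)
Your proof is correct and follows essentially the same route as the paper: both bound the modulus of the integrand by isolating the factor $e^{-(\eta^0-|\vec\eta|)|\vec k|}$ after noting that $\operatorname{Re}E_{\vec k}-|\vec k|\to 0$ and $\operatorname{Im}E_{\vec k}\to 0$, then differentiate under the integral to verify the Cauchy--Riemann equations. Your treatment is slightly more careful in making the domination explicitly uniform on compact subsets of the tube (the paper only states rapid decrease for fixed $(\xi,\eta)$), but the substance of the argument is identical.
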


\begin{proof}
The first and second terms of the integrand in (\ref{eq:simple_complex_poles_hol_Wightman}) decreases rapidly as $|\vec{k}| \rightarrow \infty$. Indeed, we find
\begin{align}
     & \left| \frac{Z}{2 E_{\vec{k}} } e^{- i E_{\vec{k}} (\xi^0 - i \eta^0) + i\vec{k} \cdot (\vec{\xi} - i \vec{\eta})} \right| \notag \\
     &= \frac{|Z|}{2 | E_{\vec{k}} | } e^{- \eta^0 \operatorname{Re} E_{\vec{k}} + \xi^0  \operatorname{Im} E_{\vec{k}} + \vec{k} \cdot \vec{\eta} } \notag \\
     &= \frac{|Z|}{2 | E_{\vec{k}} | } e^{\xi^0  \operatorname{Im} E_{\vec{k}} } e^{- \eta^0 (\operatorname{Re} E_{\vec{k}} - |\vec{k}| )}
     e^{-\eta^0 |\vec{k}| + \vec{k} \cdot \vec{\eta} } \notag \\
    &\leq \frac{|Z|}{2 | E_{\vec{k}} | } e^{\xi^0  \operatorname{Im} E_{\vec{k}} } e^{- \eta^0 (\operatorname{Re} E_{\vec{k}} - |\vec{k}| )} 
     e^{-(\eta^0 - |\vec{\eta}|) |\vec{k}|  }.
\end{align}
Thus, for $\eta \in V_+$, we have, as $|\vec{k}| \rightarrow \infty$,
\begin{enumerate}
\renewcommand{\labelenumi}{(\alph{enumi})}
    \item $\operatorname{Im} E_{\vec{k}} \rightarrow 0$ and $(\operatorname{Re} E_{\vec{k}} - |\vec{k}|) \rightarrow 0$ from $E_{\vec{k}} = |\vec{k}| + O(1/|\vec{k}|)$ ,
    \item exponential decreasing of $e^{-(\eta^0 - |\vec{\eta}|) |\vec{k}|}$ in $|\vec{k}|$,
\end{enumerate}
from which the first term decreases rapidly: $\frac{Z}{2 E_{\vec{k}} } e^{- i E_{\vec{k}} (\xi^0 - i \eta^0) + i\vec{k} \cdot (\vec{\xi} - i \vec{\eta})} \in \mathscr{S} (\mathbb{R}^3)$ for fixed $\xi \in \mathbb{R}^4$ and $\eta \in V_+$.
Similarly for the second term, we have $\frac{Z^*}{2 E_{\vec{k}}^* } e^{- i E_{\vec{k}}^* (\xi^0 - i \eta^0) + i\vec{k} \cdot (\vec{\xi} - i \vec{\eta})} \in \mathscr{S} (\mathbb{R}^3)$ for fixed $\xi \in \mathbb{R}^4$ and $\eta \in V_+$.

Since the integrand in (\ref{eq:simple_complex_poles_hol_Wightman}) decreases rapidly as $|\vec{k}| \rightarrow \infty$, we can change the order of the integration and differentiantions with respect to $\xi$ and $\eta$.
Therefore, the Cauchy-Riemann equations with respect to (several complex variables) $\xi - i \eta$ hold in the tube $\xi - i \eta \in \mathbb{R}^4 - iV_+$, which guarantees the holomorphy of $W_{cp}(\xi - i\eta)$ in the tube.
\end{proof}
Note that, usually, it is the spectral condition that guarantees the holomorphy of the Wightman function in the tube. Without the spectral condition, it is in general difficult to establish the analytic arguments based on the holomorphy of the Wightman functions. However, Theorem \ref{thm:holomorphy_simple_complex_poles} (and more generally Theorem \ref{thm:holomorphy_general}) suggests that such analytic arguments are still valid even in the presence of complex singularities, while complex singularities violate a prerequisite of the spectral condition, namely the temperedness (see the discussion below or Theorem \ref{thm:nontempered}).

Let us regard the Fourier transform in (\ref{eq:simple_complex_poles_hol_Wightman}) as a tempered distribution in $\vec{\xi}$ with a smooth parameter $\xi^0$. Then we can take the limit $\eta \rightarrow 0$ with $\eta \in V_+$ to obtain the reconstructed Wightman function (\ref{eq:W_S_connection_B}):
\begin{align}
    W_{cp}(\xi^0, \vec{\xi}) = \int \frac{d^3 \vec{k}}{(2 \pi)^3} e^{i\vec{k} \cdot \vec{\xi}} \left[ \frac{Z}{2 E_{\vec{k}} } e^{- i E_{\vec{k}} \xi^0} +  \frac{Z^*}{2 E_{\vec{k}}^*} e^{- i E_{\vec{k}}^* \xi^0}
    \right].
     \label{eq:simple_complex_poles_Wightman}
\end{align}
The first term in the bracket exponentially increases as $\xi^0 \rightarrow +\infty$, and so does the second one as $\xi^0 \rightarrow - \infty$, with the choice $\operatorname{Im}M^2 > 0$. Therefore, \textit{complex poles invalidate temperedness of the Wightman function}\footnote{Indeed, suppose that $W_{cp}(\xi^0, \vec{\xi})$ were a tempered distribution. Then, the Fourier transform of $W_{cp}(\xi^0, \vec{\xi})$ in $\vec{\xi}$:
$\frac{Z}{2 E_{\vec{k}} } e^{- i E_{\vec{k}} \xi^0} +  \frac{Z^*}{2 E_{\vec{k}}^*} e^{- i E_{\vec{k}}^* \xi^0}$
would be in $\mathscr{S}'(\mathbb{R}^4)$ (by the Schwartz nuclear theorem). This contradicts with the exponential growth in $\xi^0$.}. The non-temperedness is proved more generally in Sec.~\ref{sec:temperedness}.

\subsubsection{Holomorphy in the tube and boundary value} \label{sec:holomorphy}

We have seen the holomorphy of the Wightman function in the usual tube in the presence of the simple complex poles (Theorem \ref{thm:holomorphy_simple_complex_poles}).
Here we shall generalize this theorem to the cases with arbitrary complex singularities.

\begin{theorem}
\label{thm:holomorphy_general}
Let $S_1 (p) = D(p^2)$ be a two-point Schwinger function with complex singularities satisfying (i) -- (v).
Then, $W_1 ( - i \xi_4 , \vec{\xi}) = S_1 (\vec{\xi}, \xi_4) ~~(\xi_4 > 0)$ has an analytic continuation $W_1(\xi-i \eta)$ to the tube $\mathbb{R}^4 - iV_+$.
\end{theorem}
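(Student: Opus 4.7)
The plan is to decompose $D(k^2)$ via the Cauchy representation (\ref{eq:theorem_1_Cauchy_int}) from the proof of Theorem \ref{thm:spec_repr_complex}, writing $D = D_{tl} + D_{cp} + D_{cc}$ with the three pieces coming from $C_1$, $\{\gamma_\ell\}$, and $\{\Gamma_k\}$ respectively, and then to construct $W_1 = W_{tl} + W_{cp} + W_{cc}$ piece by piece. The complex-singular pieces are treated by applying Lemma \ref{lem:S_zeta} to the single kernel $1/(\zeta - k^2) = -1/(k^2 + (-\zeta))$, reducing their contributions to contour integrals of single-mass Schwinger functions.

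For each $\zeta$ on $\gamma_\ell$ or $\Gamma_k$, the assumptions (iv~b) and (iv~c) allow choosing the contours so that $-\zeta$ avoids the non-positive real axis uniformly; hence $\sqrt{\vec{k}^2 - \zeta}$ has a well-defined branch with positive real part for all $\vec{k}$. I would then introduce the single-kernel analytic continuation
\begin{equation}
W_{-\zeta}(\xi - i\eta) := \int \frac{d^3\vec{k}}{(2\pi)^3} \frac{e^{i\vec{k}\cdot(\vec{\xi} - i\vec{\eta}) - i\sqrt{\vec{k}^2 - \zeta}(\xi^0 - i\eta^0)}}{2\sqrt{\vec{k}^2 - \zeta}}
\end{equation}
and observe that the argument of Theorem \ref{thm:holomorphy_simple_complex_poles}, with $E_{\vec{k}}$ replaced by $\sqrt{\vec{k}^2 - \zeta}$, carries over verbatim to establish holomorphy in the tube for each such $\zeta$. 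I would then define
\begin{equation}
W_{cp}(\xi - i\eta) := -\sum_\ell \oint_{\gamma_\ell} \frac{d\zeta}{2\pi i} D(\zeta) W_{-\zeta}(\xi - i\eta),
\end{equation}
and similarly $W_{cc}$ with $\Gamma_k$ in place of $\gamma_\ell$, promoting these pointwise holomorphic families to $\zeta$-integrated candidates for Wightman pieces.

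For the timelike contribution I would shrink $C_1$ onto the negative real axis and reduce $D_{tl}$ to the spectral integral weighted by $\rho(\sigma^2) \in \mathscr{S}'([0,\infty])$; the three-step procedure of Sec.~\ref{sec:CCP} then supplies $W_{tl}(\xi - i\eta)$ of the form (\ref{eq:timelike-complex-Wightman}), holomorphic in the tube. Assembling $W_1 := W_{tl} + W_{cp} + W_{cc}$ and verifying the boundary identity $W_1(-i\xi_4, \vec{\xi}) = S_1(\vec{\xi}, \xi_4)$ for $\xi_4 > 0$ will then reduce, by Lemma \ref{lem:S_zeta} and (\ref{eq:timelike-Schwinger}), to summing up the three pieces of the Cauchy representation at real momenta.

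The hard part will be justifying the exchange of the outer $\zeta$-integration (respectively the $\rho(\sigma^2)$-pairing for the timelike piece) with the $(\xi - i\eta)$-derivatives required to verify the Cauchy--Riemann equations for the assembled $W_1$. For the complex-singular pieces this will be controllable by dominated convergence: the contours are compact, $D(\zeta)$ is continuous and bounded on them, and the $\vec{k}$-integrand is Schwartz in $\vec{k}$ uniformly in $\zeta$ thanks to the rapid decay $e^{-(\eta^0 - |\vec{\eta}|)|\vec{k}|}$ valid throughout the tube. For the timelike part, the same uniform decay makes the family $W_{\sigma^2}(\xi - i\eta)$ depend smoothly on $\sigma^2 \in [0,\infty]$ in the topology of $\mathscr{S}([0,\infty])$, so that $\rho$ can be legitimately paired with the differentiated family and the Cauchy--Riemann property passes through.
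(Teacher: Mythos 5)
Your proposal follows essentially the same route as the paper's proof: decompose $D$ via the Cauchy representation from Theorem \ref{thm:spec_repr_complex}, reduce the complex-singular contributions to contour integrals of the single-mass kernels handled by Lemma \ref{lem:S_zeta}, promote the pointwise holomorphic family $W_\zeta$ (your $W_{-\zeta}$) to a $\zeta$-integrated candidate using finiteness of the contours and smoothness of $D(\zeta)$ on them, and treat the timelike piece separately via the spectral integral from Sec.~\ref{sec:CCP}. The only cosmetic difference is your explicit dominated-convergence argument for exchanging $\zeta$-integration with the Cauchy--Riemann derivatives, which the paper leaves mostly implicit in citing the finiteness of $C$; this is a correct and welcome clarification rather than a divergence in method.
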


\begin{proof}
We first recall that
\begin{align}
    S_1 (\xi) = \int \frac{d^4 k}{(2 \pi)^4} e^{i k \xi} D(k^2).
\end{align}
and $D(k^2)$ can be represented as Theorem \ref{thm:spec_repr_complex}.
We know that the timelike part can be analytically continued to the tube. Therefore, we shall prove the holomorphy for the part coming from complex singularities. 

From (\ref{eq:theorem_1_Cauchy_int}) in the proof of Theorem \ref{thm:spec_repr_complex}, the contributions of complex singularities can be expressed as\footnote{For this proof, it is enough to take $\gamma_\ell$ and $\Gamma_k$ so close to their singularities that they do not intersect with the positive real axis.}
\begin{align}
    S_{complex} (\xi) &= \int \frac{d^4 k}{(2 \pi)^4} e^{i k \xi} \Biggl\{  \sum_{\ell=1}^{N_p} \oint_{\gamma_\ell} \frac{d \zeta}{2 \pi i}  \frac{-D(\zeta)}{ k^2 + (- \zeta)} \notag \\ 
    &~~~~~~ + \sum_{k=1}^{N_c} \oint_{\Gamma_k} \frac{d \zeta}{2 \pi i}  \frac{-D(\zeta)}{ k^2 + (- \zeta)}\Biggr\}.
\end{align}
Thus, it is sufficient to prove that 
\begin{align}
     \int \frac{d^4 k}{(2 \pi)^4} e^{i k \xi} \int_C \frac{d \zeta}{2 \pi i}  \frac{D(\zeta)}{ k^2 + \zeta} \label{eq:thm2_any_contour}
\end{align}
can be analytically continued to the tube for any smooth path $C$ of finite length and any smooth function $D(\zeta)$ on $C$.

To this end, let us proceed with the following steps:
\begin{enumerate}
\renewcommand{\labelenumi}{Step \arabic{enumi}.}
    \item interpreting (\ref{eq:thm2_any_contour}) as an ordinary function on $(\vec{\xi}, \xi_4) \in \mathbb{R}^3 \times (0, \infty)$, that is to say, proving that there exists an analytic function $S_C (\xi)$ on $\mathbb{R}^3 \times (0, \infty)$ such that for any test function $f(\xi) \in \mathscr{S}(\mathbb{R}_+^4)$,
\begin{align}
     \int \frac{d^4 k}{(2 \pi)^4}& \left( \int_C \frac{d \zeta}{2 \pi i}  \frac{D(\zeta)}{ k^2 + \zeta} \right) \left( \int d^4 \xi ~f(\xi)  e^{i k \xi}  \right) \notag \\
     &= \int d^4 \xi~ S_C(\xi) f(\xi), \label{eq:thm3_step1}
\end{align}
    \item constructing a holomorphic function $W_C (\xi -i \eta)$ in the tube $\mathbb{R}^4 - i V_+$ satisfying $W_C (- i \eta^0,\vec{\xi}) = S_C (\vec{\xi}, \eta^0)$ for $\eta^0 > 0$.
\end{enumerate}

\textit{Step 1: interpreting  (\ref{eq:thm2_any_contour}) as a function.}
We shall prove that 
\begin{align}
     S_C (\xi)  := \int_C \frac{d \zeta}{2 \pi i} D(\zeta) S_{\zeta} (\xi) \label{eq:thm2_any_contour2}
\end{align}
has the desired properties of Step 1, where $S_{\zeta} (\xi)$ is a function defined by (\ref{eq:S_zeta_fct}) for $\xi_4 > 0$.

\begin{enumerate}
\renewcommand{\labelenumi}{(\alph{enumi})}
    \item $S_C (\xi) $ is an analytic function in $\mathbb{R}^3 \times (0,\infty)$. Indeed, as shown in Theorem \ref{thm:holomorphy_simple_complex_poles}, $S_{\zeta} (\xi)$ is an analytic function for $\xi_4 > 0$. Since $C$ is a finite smooth path and $D(\zeta)$ is a smooth function on $C$, $S_C (\xi) $ defined by (\ref{eq:thm2_any_contour2}) is also analytic for $\xi_4 > 0$.
    \item Let us verify that (\ref{eq:thm2_any_contour2}) satisfies (\ref{eq:thm3_step1}).
For any test function $f(\xi) \in \mathscr{S}(\mathbb{R}_+^4)$,
\begin{align}
     \int \frac{d^4 k}{(2 \pi)^4}& \left( \int_C \frac{d \zeta}{2 \pi i}  \frac{D(\zeta)}{ k^2 + \zeta} \right) \left( \int d^4 \xi ~f(\xi)  e^{i k \xi}  \right) \notag \\
     &= \int_C \frac{d \zeta}{2 \pi i} D(\zeta) \int \frac{d^4 k}{(2 \pi)^4} \frac{1}{k^2 + \zeta} \int d^4 \xi ~f(\xi)  e^{i k \xi}  \notag \\
     &=  \int_C \frac{d \zeta}{2 \pi i} D(\zeta) \int d^4 \xi ~f(\xi) S_{\zeta} (\xi),
\end{align}
where we have used Lemma \ref{lem:S_zeta} in the last equality.
Since $f(\xi) S_{\zeta}(\xi)$ is integrable in $(\xi, \zeta) \in \mathbb{R}^4 \times C$, we can change the order of the integrals to obtain (\ref{eq:thm3_step1}).
\end{enumerate}
Hence, $S_C (\xi) $ given in (\ref{eq:thm2_any_contour2}) is the analytic function on $\mathbb{R}^3 \times (0,\infty)$ satisfying (\ref{eq:thm3_step1}). This completes the step 1.

\textit{Step 2: analytic continuation of $S_C(\xi)$.}
We shall prove that
\begin{align}
     W_C (\xi - i \eta ) &:= \int_C \frac{d \zeta}{2 \pi i} D(\zeta) W_{\zeta} (\xi - i \eta), \\
W_{\zeta} (\xi - i \eta) &:= \int \frac{d^{3} \vec{k}}{(2 \pi)^{3}} e^{i\vec{k} \cdot (\vec{\xi} - i \vec{\eta})} \left[ \frac{1}{2 E_{\vec{k}} } e^{- i E_{\vec{k}} (\xi^0 - i \eta^0)} \right]. \label{eq:W_zeta_holomorphic}
\end{align}
is the desired function. Indeed, $W_C (\xi - i \eta )$ satisfies the following properties.
\begin{enumerate}
\renewcommand{\labelenumi}{(\alph{enumi})}
    \item holomorphy of $W_C (\xi - i \eta ) $:
    From Theorem \ref{thm:holomorphy_simple_complex_poles}, $W_C (\xi - i \eta ) $ is holomorphic in the tube $\mathbb{R}^4 - i V_+$ due to the finiteness of $C$ and smoothness of $D(\zeta)$.
    \item $W_C (- i \eta^0, \vec{\xi} ) = S_C (\vec{\xi}, \eta^0)$ for $\eta^0 > 0$.
    Indeed, we find
\begin{align}
     W_C (- i \eta^0, \vec{\xi} )  &= \int_C \frac{d \zeta}{2 \pi i} D(\zeta) W_{\zeta} (- i \eta^0, \vec{\xi} ) \notag \\
     &= \int_C \frac{d \zeta}{2 \pi i} D(\zeta) S_{\zeta} (\vec{\xi}, \eta^0) \notag \\
     &= S_C (\vec{\xi}, \eta^0)
\end{align}
\end{enumerate}

Therefore, $W_C (\xi - i \eta )$ provides the analytic continuation of (\ref{eq:thm2_any_contour}) to the tube. This completes the proof of Theorem \ref{thm:holomorphy_general}.
\end{proof}

Note that the finiteness of branch cuts is essential in this proof.
If there existed a branch cut of infinite length with an asymptotic line $\{ r e^{i \theta} ~; ~r>0 \}$, the holomorphic Wightman function would be 
\begin{align}
     W_C (\xi - i \eta ) &= \int_C \frac{d \zeta}{2 \pi i} \left[ \frac{1}{2 \pi} \operatorname{Disc} D(\zeta) \right] W_{\zeta} (\xi - i \eta),
\end{align}
and an estimate for large $|\zeta|$ contribution would be 
\begin{align}
     W_C (\xi - i \eta ) &\sim \int dr e^{-i \sqrt{r} e^{i \theta /2} (\xi^0 - i \eta^0)} \notag \\
     &\sim \int dr e^{ \sqrt{r} (\xi^0 \sin \theta/2 - \eta^0 \cos \theta /2)}
\end{align}
Unless $\operatorname{Disc} D(\zeta)$ is strongly suppressed faster than any exponential decay as $|\zeta| \rightarrow \infty$ or the asymptotic line is the positive real axis ($\theta = 0$), the holomorphy would not be guaranteed at least by this integral representation.
Therefore, the finiteness in (iv) plays an important role to reconstruct the Wightman function.

With the finiteness of complex singularities, we can take safely the limit $\eta \rightarrow 0~ (\eta \in V_+)$ as a distribution in $\mathscr{D}'(\mathbb{R}^4)$, which is the dual space of $\mathscr{D}(\mathbb{R}^4) = \{ f(\xi) ~;~  f(\xi)\mathrm{~is~a~} C^\infty\mathrm{~function~with~a~compact~support}  \}$.

\begin{theorem}
\label{thm:complex_boundary_value}
Let $S_1 (p) = D(p^2)$ be a two-point Schwinger function with complex singularities satisfying (i) -- (v).
By Theorem \ref{thm:holomorphy_general}, $W_1 ( - i \xi_4 , \vec{\xi}) = S_1 (\vec{\xi}, \xi_4) ~~(\xi_4 > 0)$ has the analytic continuation $W_1(\xi-i \eta)$ to the tube $\mathbb{R}^4 - iV_+$.
Then, there exists the limit $\lim_{\substack{\eta \rightarrow 0 \\ \eta \in V_+}} W_1(\xi-i \eta) \in \mathscr{D}'(\mathbb{R}^4)$.
Moreover, while the part reconstructed from timelike singularities is a tempered distribution in $\mathscr{S}'(\mathbb{R}^4)$, the part from complex singularities is a tempered distribution in $\vec{\xi}$ with a smooth parameter $\xi^0$.
\end{theorem}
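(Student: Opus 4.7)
The plan is to split $W_1 = W_{tl} + W_{complex}$ according to the decomposition given by the Cauchy integral representation (\ref{eq:theorem_1_Cauchy_int}), namely by taking the contribution of $C_1$ (timelike cut) separately from the contours $\gamma_\ell$ and $\Gamma_k$ surrounding complex singularities. Each piece will be treated independently: the timelike piece lands in $\mathscr{S}'(\mathbb{R}^4)$, while the complex piece lands in the weaker class of distributions that are tempered in $\vec{\xi}$ with $\xi^0$ entering as a smooth parameter. Both embed into $\mathscr{D}'(\mathbb{R}^4)$, and their sum produces the claimed boundary value.

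For the timelike part, I would essentially repeat Step~3 of Sec.~\ref{sec:CCP} in the general setting: starting from the spectral representation of $W_{tl}(\xi-i\eta)$ in (\ref{eq:timelike-complex-Wightman}) with the spectral function $\rho\in\mathscr{S}'([0,\infty])$, one shows that for every $f\in\mathscr{S}(\mathbb{R}^4)$ the limit
\begin{equation*}
W_{tl}(f) = \lim_{\substack{\eta\to 0\\ \eta\in V_+}} \int d^4\xi\, f(\xi) W_{tl}(\xi-i\eta) = \int_0^\infty d\sigma^2\,\rho(\sigma^2)\, i\Delta^+(f,\sigma^2)
\end{equation*}
exists and defines a continuous functional on $\mathscr{S}(\mathbb{R}^4)$. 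The continuity uses the assumption (iv\,a) that $\rho\in\mathscr{S}'([0,\infty])$ together with standard seminorm bounds on the smeared free Wightman function $\sigma^2\mapsto i\Delta^+(f,\sigma^2)$ viewed as an element of $\mathscr{S}([0,\infty])$.

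For the complex-singularity part, I would use
\begin{equation*}
W_{complex}(\xi-i\eta) = \int_{\tilde C}\frac{d\zeta}{2\pi i}\bigl(-D(\zeta)\bigr)\,W_\zeta(\xi-i\eta),
\end{equation*}
where $\tilde C := \bigcup_\ell \gamma_\ell \cup \bigcup_k \Gamma_k$ is a finite contour and $W_\zeta$ is the function (\ref{eq:W_zeta_holomorphic}) with $E_{\vec k}=\sqrt{\vec k^2+\zeta}$, already shown to be holomorphic in the tube. For fixed $\xi^0\in\mathbb{R}$ and $\eta^0>0$, the estimate used in Theorem~\ref{thm:holomorphy_simple_complex_poles} shows that the $\vec k$-integrand of $W_\zeta$, evaluated at $\vec\eta=0$, lies in $\mathscr{S}(\mathbb{R}^3_{\vec k})$, because $\operatorname{Im}E_{\vec k}\to 0$ and $\operatorname{Re}E_{\vec k}-|\vec k|\to 0$ as $|\vec k|\to\infty$ force exponential decay in $|\vec k|$ once $\eta^0>0$. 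Hence the $\vec k$-Fourier transform defines a tempered distribution in $\vec\xi$, depending smoothly on $(\xi^0,\eta^0)$ as parameters and uniformly in $\zeta\in\tilde C$. Taking $\eta^0\to 0^+$ gives a smooth-in-$\xi^0$ family of tempered distributions in $\vec\xi$; the finiteness of $\tilde C$ and smoothness of $D$ on $\tilde C$ then let me pull the $\zeta$-integral outside to conclude that $W_{complex}$ is a tempered distribution in $\vec\xi$ with $\xi^0$ a smooth parameter, hence an element of $\mathscr{D}'(\mathbb{R}^4)$.

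The main obstacle is justifying these exchanges of limits and integrations rigorously. Concretely, I must verify that the seminorm bounds on the $\vec k$-integrand are uniform in $(\xi^0,\eta^0)$ on compact sets and uniform in $\zeta\in\tilde C$, so that $\lim_{\vec\eta\to 0}$ and $\lim_{\eta^0\to 0^+}$ commute with both the Fourier integral in $\vec k$ and the contour integral in $\zeta$. This is where the finiteness assumption in (iv) enters decisively, exactly as in the remark following Theorem~\ref{thm:holomorphy_general}: an infinitely long complex cut would, in general, produce unsuppressed exponential growth in $\xi^0$ from the region $|\zeta|\to\infty$ and destroy temperedness in $\vec\xi$ as well. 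With complex singularities confined to a bounded region, the argument closes and the boundary value exists in $\mathscr{D}'(\mathbb{R}^4)$ with the stated refinement.
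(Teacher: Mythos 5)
Your decomposition $W_1=W_{tl}+W_{complex}$, the spectral treatment of the timelike piece, and the use of the finite contour integral of $W_\zeta$ for the complex piece all coincide with the paper's proof, and you correctly identify finiteness of the complex contour as the decisive hypothesis. One technical caveat in implementing your plan: the Schwartz seminorms of the $\vec k$-integrand of $W_\zeta$ are \emph{not} uniform as $\eta^0\to 0^+$ (at $\eta^0=0$ the integrand $\frac{1}{2E_{\vec k}}e^{-iE_{\vec k}\xi^0}$ decays only like $1/|\vec k|$ and is not Schwartz-class), so uniform seminorm control on the integrand itself is the wrong handle. The paper's (and the correct) route is to pair $W_\zeta(\xi-i\eta)$ with $f(\vec\xi)\in\mathscr{S}(\mathbb{R}^3)$ first, so that the rapidly decreasing $\tilde f(\vec k)$ supplies integrability, then pass $\eta\to 0$ by dominated convergence and exchange with the $\zeta$-integral using finiteness of the contour and smoothness of $D$ on it; this yields the $C^\infty$-in-$\xi^0$, tempered-in-$\vec\xi$ boundary value, and the embedding into $\mathscr{D}'(\mathbb{R}^4)$ follows since smooth functions are distributions.
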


\begin{proof}
By Theorem \ref{thm:holomorphy_general}, $W_1 ( - i \xi_4 , \vec{\xi}) = S_1 (\vec{\xi}, \xi_4) ~~(\xi_4 > 0)$ has an analytic continuation $W_1(\xi-i \eta)$ to the tube $\mathbb{R}^4 - iV_+$.

From the proof of Theorem \ref{thm:holomorphy_general}, we can write $W_1(\xi-i \eta)$ corresponding to the representation of Theorem \ref{thm:spec_repr_complex} as
\begin{align}
    W_1(\xi-i \eta) &= W_{tl} (\xi-i \eta) + W_{complex} (\xi-i \eta) \notag \\
    W_{tl} (\xi-i \eta) &= \int_0^\infty d\sigma^2 ~ \rho(\sigma^2) W_{\sigma^2}  (\xi - i \eta)  \notag \\
    W_{complex} (\xi - i \eta) &= - \sum_{\ell=1}^{N_p} \oint_{\gamma_\ell} \frac{d \zeta}{2 \pi i} W_\zeta (\xi -i \eta) D(\zeta) \notag \\
    &~~~- \sum_{k=1}^{N_c} \oint_{\Gamma_k} \frac{d \zeta}{2 \pi i}  W_\zeta (\xi -i \eta) D(\zeta), \label{eq:hol_Wightman_general_repr}
\end{align}
where $W_{\sigma^2} (\xi -i \eta)$ and $W_\zeta (\xi -i \eta)$ are given by (\ref{eq:timelike-complex-Wightman}) and (\ref{eq:W_zeta_holomorphic}), respectively.

As seen in Sec.~\ref{sec:CCP}, the boundary value of the timelike part is a tempered distribution, represented as (\ref{eq:timelike-Wightman}): $W_{tl} (\xi) = \lim_{\substack{\eta \rightarrow 0 \\ \eta \in V_+}} W_{tl} (\xi-i \eta) \in \mathscr{S}'(\mathbb{R}^4) \subset \mathscr{D}'(\mathbb{R}^4)$.

Next, we consider the complex part $W_{complex} (\xi - i \eta)$.
As discussed in (\ref{eq:simple_complex_poles_Wightman}), $W_\zeta (\xi -i \eta)$ has a boundary value that is a tempered distribution in $\vec{\xi}$ with a smooth parameter $\xi^0$.
Indeed, by smearing it with any test function $f(\vec{\xi}) \in \mathscr{S} (\mathbb{R}^3)$,
\begin{align}
&\int d^3 \vec{\xi}~ f(\vec{\xi}) W_{\zeta} (\xi - i \eta) \notag \\
&= \int \frac{d^3 \vec{k}}{(2 \pi)^{3}} e^{\vec{k} \cdot \vec{\eta}} \left[ \frac{1}{2 E_{\vec{k}} } e^{- i E_{\vec{k}} (\xi^0 - i \eta^0)} \right] \left( \int d^{3} \vec{\xi} ~ f(\vec{\xi}) e^{i\vec{k} \cdot \vec{\xi} } \right)
\end{align}
converges to, as $\eta \rightarrow 0~ (\eta \in V_+)$,
\begin{align}
&\int d^{3} \vec{\xi}~ f(\vec{\xi}) W_{\zeta} (\xi - i \eta) \notag \\
&\rightarrow \int \frac{d^{3} \vec{k}}{(2 \pi)^{3}} \left[ \frac{1}{2 E_{\vec{k}} } e^{- i E_{\vec{k}} \xi^0} \right] \left( \int d^{3} \vec{\xi} ~ f(\vec{\xi}) e^{i\vec{k} \cdot \vec{\xi} } \right),
\end{align}
which is a $C^\infty$ function of $\xi^0$.

Let us show that the boundary value of $W_{complex} (\xi - i \eta)$ is also a tempered distribution in $\vec{\xi}$ with a smooth parameter $\xi^0$.
It suffices to prove that, for any test function $f(\vec{\xi}) \in \mathscr{S} (\mathbb{R}^3)$ and any finite smooth path $C$,
\begin{align}
    \int d^{3} \vec{\xi}~ f(\vec{\xi}) \left[ \int_C \frac{d \zeta}{2 \pi i} D(\zeta) W_{\zeta} (\xi - i \eta) \right] \label{eq:thm4_smeared_complex_Wightman}
\end{align}
has a limit that is a $C^\infty$ function of $\xi^0$ as $\eta \rightarrow 0~ (\eta \in V_+)$.

This can be proved as follows.
Due to the finiteness of $C$ and the rapid decrease of $f(\vec{\xi})$, we have
\begin{align}
    &\int d^{3} \vec{\xi}~ f(\vec{\xi}) \left[ \int_C \frac{d \zeta}{2 \pi i} D(\zeta) W_{\zeta} (\xi - i \eta) \right] \notag \\
    &= \int_C \frac{d \zeta}{2 \pi i} D(\zeta) \left[ \int d^{3} \vec{\xi}~ f(\vec{\xi}) W_{\zeta} (\xi - i \eta) \right].
\end{align}
We have already shown that $ \int d^{3} \vec{\xi}~ f(\vec{\xi}) W_{\zeta} (\xi - i \eta) $ has a limit that is a $C^\infty$ function of $\xi^0$ as $\eta \rightarrow 0~ (\eta \in V_+)$.
From the finiteness of $C$, (\ref{eq:thm4_smeared_complex_Wightman}) also has such a desired limit.

Therefore, $W_{complex}(\xi-i \eta)$ has the limit $\lim_{\substack{\eta \rightarrow 0 \\ \eta \in V_+}} W_{complex}(\xi-i \eta)$ that is a tempered distribution in $\vec{\xi}$ with a smooth parameter $\xi^0$.
Since any smooth function can be regarded as a distribution in $\mathscr{D}'(\mathbb{R}^4)$, we have $\lim_{\substack{\eta \rightarrow 0 \\ \eta \in V_+}} W_1(\xi-i \eta) \in \mathscr{D}'(\mathbb{R}^4)$.
This completes the proof of Theorem \ref{thm:complex_boundary_value}.
\end{proof}

So far, we have seen that, even in the presence of complex singularities, we can analytically continue a Schwinger function to the tube and define its Wightman function $W_1(\xi)$ on the real space as a distribution. However, the existence of complex singularities always violates the temperedness of a Wightman function as a boundary value, which is proved in the next section.

\subsubsection{Violation of temperedness of Wightman functions and ill-defined asymptotic states} \label{sec:temperedness}

\begin{theorem}
\label{thm:nontempered}
Let $S_1 (p) = D(p^2)$ be a two-point Schwinger function with complex singularities satisfying (i) -- (v).
By Theorem \ref{thm:holomorphy_general} and \ref{thm:complex_boundary_value}, $W_1 ( - i \xi_4 , \vec{\xi}) = S_1 (\vec{\xi}, \xi_4) ~~(\xi_4 > 0)$ has the analytic continuation $W_1(\xi-i \eta)$ to the tube $\mathbb{R}^4 - iV_+$ and there exists the boundary value as a distribution $W_1(\xi) := \lim_{\substack{\eta \rightarrow 0 \\ \eta \in V_+}} W_1(\xi-i \eta) \in \mathscr{D}'(\mathbb{R}^4)$.
Then, the boundary value cannot be regarded as a tempered distribution $W_1(\xi) \notin \mathscr{S}'(\mathbb{R}^4)$.
\end{theorem}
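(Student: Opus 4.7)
The plan is to argue by contradiction. Assuming $W_1 \in \mathscr{S}'(\mathbb{R}^4)$ and using the decomposition $W_1 = W_{tl} + W_{complex}$ from (\ref{eq:hol_Wightman_general_repr}), since $W_{tl}$ is already known to be tempered, $W_{complex}$ would also have to be a tempered distribution. Smearing with any $f \in \mathscr{S}(\mathbb{R}^3)$ in $\vec{\xi}$ would then yield a tempered distribution in $\xi^0$, and Theorem~\ref{thm:complex_boundary_value} tells us this smeared object is in fact a $C^\infty$ function of $\xi^0$; hence it would have to be polynomially bounded. My goal is to construct $f$ for which this function in fact grows exponentially in $|\xi^0|$.

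Explicitly, by Lemma~\ref{lem:S_zeta} and the representation in the proof of Theorem~\ref{thm:complex_boundary_value},
\[
W_{complex}(\xi^0,f) = -\sum_{\ell,k} \oint_{\gamma_\ell,\Gamma_k} \frac{d\zeta}{2\pi i}\, D(\zeta)\, I_{\zeta,f}(\xi^0),
\]
with $I_{\zeta,f}(\xi^0) := \int \frac{d^3\vec{k}}{(2\pi)^3}\, \tilde f(-\vec{k})\, \frac{e^{-iE_{\vec{k}}(\zeta)\xi^0}}{2E_{\vec{k}}(\zeta)}$ and $E_{\vec{k}}(\zeta) = \sqrt{\vec{k}^2+\zeta}$ in the branch $\operatorname{Re}E_{\vec{k}}(\zeta) > 0$. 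Choosing $\tilde f$ as a narrow Gaussian bump around some $\vec{k}_0 \in \mathbb{R}^3$, one obtains $I_{\zeta,f}(\xi^0) \approx \tilde f(-\vec{k}_0)\, e^{-iE_{\vec{k}_0}(\zeta)\xi^0}/(2E_{\vec{k}_0}(\zeta))$ up to errors controllable by the bump's width. By assumption (iv~c), $\operatorname{Im}E_{\vec{k}_0}(\zeta) \neq 0$ for each complex singularity $\zeta$; pick the pole $z_{\ell_*}$ (or the extreme point of a cut) that maximises $\kappa := -\operatorname{Im}E_{\vec{k}_0}(\zeta) > 0$. For a pole, residue calculus produces a leading term $\propto e^{-iE_{\vec{k}_0}(z_{\ell_*})\xi^0}$ with envelope $e^{\kappa \xi^0}$ as $\xi^0 \to +\infty$; the cut contributions, after a Laplace-type estimate along $\Gamma_k$, are bounded by exponentials whose growth rate is attained only at the cut's endpoints in the $\sqrt{\cdot}$-plane, and $z_{\ell_*}$ can be chosen so that these are strictly dominated by $e^{\kappa\xi^0}$.

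The hard part is excluding exact cancellations among different singularities whose exponents happen to share the leading rate $\kappa$ but differ by oscillation frequency $\operatorname{Re}E_{\vec{k}_0}(\zeta)$. This is handled by two tools: (i) varying $\vec{k}_0$, which shifts $E_{\vec{k}_0}(\zeta) = \sqrt{\vec{k}_0^2+\zeta}$ in a manner that separates contributions from distinct $\zeta$; and (ii) differentiating $W_{complex}(\xi^0,f)$ a sufficient number of times in $\xi^0$, since any finite nontrivial linear combination $\sum_j c_j e^{-iE_j\xi^0}$ with distinct $E_j$'s cannot be polynomially bounded. For a real-valued field, Remark~(c) after Theorem~\ref{thm:spec_repr_complex} forces complex singularities to come in conjugate pairs, so the two members of each pair produce exponential growth at $\xi^0 \to +\infty$ and $\xi^0 \to -\infty$ respectively, which makes non-cancellation essentially automatic once the sign of $\xi^0$ is chosen correctly. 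Either way, the polynomial-growth requirement is violated, establishing $W_1 \notin \mathscr{S}'(\mathbb{R}^4)$.
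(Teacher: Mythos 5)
Your strategy is the "intuitive reasoning" that the paper itself sketches just before the proof and explicitly labels as non-rigorous; the paper then takes a completely different rigorous route (showing that temperedness of $F_h(\xi^0)$ forces, via a frequency decomposition and a Paley--Wiener--Schwartz argument, holomorphy of $\tilde S_h^{(\epsilon)}(k_4^2)$ in $\mathbb{C}-(-\infty,0]$, contradicting the presence of complex singularities). Your blind proposal tries instead to promote the heuristic to a proof by extracting the dominant exponential asymptotics of $W_{complex}(\xi^0,f)$, and the main weight of the argument therefore falls precisely on the step the paper identifies as delicate: excluding cancellations. That step is not actually established by the tools you list.

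Concretely, there are three gaps. First, the bump-approximation $I_{\zeta,f}(\xi^0)\approx \tilde f(-\vec k_0)\,e^{-iE_{\vec k_0}(\zeta)\xi^0}/(2E_{\vec k_0}(\zeta))$ has an error that is \emph{not} uniformly controllable as $\xi^0\to\infty$: the phase $e^{-iE_{\vec k}(\zeta)\xi^0}$ oscillates and changes magnitude across the support of $\tilde f$ at a rate proportional to $\xi^0$, so for any fixed bump width the error overtakes the ``leading'' term. A genuine stationary-phase/saddle analysis is needed, and since $\operatorname{Im}E_{\vec k}(\zeta)$ is maximised at $\vec k=0$ (not at $\vec k_0$) the asymptotics do not reduce to a single exponential at $\vec k_0$. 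Second, the branch cut contributions $\oint_{\Gamma_k}\frac{d\zeta}{2\pi i}\,D(\zeta)\,W_\zeta(\xi^0)$ are continuous superpositions, not finite sums; after the $\vec k$-integral one cannot appeal to the elementary fact that a finite nontrivial combination $\sum_j c_j e^{-iE_j\xi^0}$ is unbounded. You wave at ``a Laplace-type estimate along $\Gamma_k$,'' but the actual competing contributions come from the supremum of $\operatorname{Im}E_{\vec k_0}(\zeta)$ over the closed curve $\Gamma_k$, and showing that a chosen pole strictly dominates these, \emph{and} that what survives is nonzero, is not done. Third, the non-cancellation device ``differentiate enough times'' does not apply once the object is not a finite exponential polynomial, and ``vary $\vec k_0$'' is only a hint, not an argument: one must produce a specific $\tilde f\in\mathscr D(\mathbb R^3)$ and exhibit superpolynomial growth, whereas the relations among the $E_{\vec k}(\zeta)$ for the various $\zeta$ as $\vec k$ ranges over a ball are not generic and have to be analysed. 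The paper avoids all of this by never computing asymptotics at all: it shows that \emph{if} $W_1$ were tempered then the Fourier--Laplace machinery would reproduce a K\"all\'en--Lehmann-type representation, which is a global, one-shot contradiction rather than a delicate local estimate.
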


Note that this theorem can be intuitively understood as follows. Readers who can accept the following reasoning can skip the (somewhat technical) proof.
\begin{enumerate}
\renewcommand{\labelenumi}{(\alph{enumi})}
    \item For simple complex poles, the non-temperedness follows from (\ref{eq:simple_complex_poles_Wightman}).
    \item The higher-order poles $\frac{1}{(k^2 - z_\ell)^{m}}$ can be formally represented as the $(m - 1)$-th order derivative of the simple pole $\frac{1}{k^2 - z_\ell}$ with respect to $z_\ell$. Since the derivative with respect to $z_\ell$ cannot suppress the exponential growth of $W_{cp}(\xi^0,\vec{\xi})$ given in (\ref{eq:simple_complex_poles_Wightman}), higher-order complex poles also break temperedness.
    \item The contribution of a complex branch cut $\int_{\mathcal{C}_k} d \zeta \frac{\rho_k(\zeta)}{k^2 - \zeta}$ is a superposition of $ W_{-\zeta}(\xi^0, \vec{\xi}) $ with the weight $\rho_k(\zeta)$. Therefore, the exponential growth of the Wightman function in $\xi^0$ would be unchanged.
    \item Finally, let us comment on a possibility of cancellation between contributions from different complex singularities. For such cancellations to occur, they must have the same exponentially growing factor $e^{ \xi^0 \operatorname{Im} E_{\vec{k}}}$ and oscillating factor $e^{-i \xi^0 \operatorname{Re} E_{\vec{k}}}$. This indicates that this possibility occurs only if singularities are located in the same position in complex $k_4$-plane. Therefore, we would exclude this possibility.
\end{enumerate}

We prove this theorem rigorously as follows.
This proof is based on an intuition that the holomorphy in the tube would essentially imply the spectral condition for the Wightman function in momentum representation, which leads to the usual spectral representation against complex singularities as in Sec.~\ref{sec:prelim_disc}, if the Wightman function were a tempered distribution.
\begin{proof}

As a preparation, we define a holomorphic function $F_h(\xi^0 - i \eta^0)$ as
\begin{align}
    F_h (\xi^0 - i \eta^0) := \int d^3 \vec{\xi} ~ W_1 (\xi^0 - i \eta^0, \vec{\xi}) h (\vec{\xi}),
\end{align}
where $h(\vec{\xi})$ is a test function on the spatial directions $h(\vec{\xi}) \in \mathscr{S}(\mathbb{R}^3)$. We require that its Fourier transform has a compact support:
\begin{align}
    \tilde{h}(\vec{k}) := \int d^3 \vec{\xi} ~e^{i \vec{k} \cdot \vec{\xi}} h(\xi) \in \mathscr{D}(\mathbb{R}^3).
\end{align}

This function $F_h(\xi^0 - i \eta^0)$ satisfies the following properties.
\begin{enumerate}
\renewcommand{\labelenumi}{(\alph{enumi})}
    \item $F_h (\xi^0 - i \eta^0)$ is holomorphic in the lower-half plane $\eta^0 > 0$.
    \item In all directions of the limit $|\xi^0 - i \eta^0| \rightarrow \infty$ in the lower-half plane ($\eta^0 > 0$), $F_h (\xi^0 - i \eta^0)$ grows at most exponentially as can be seen from the representation (\ref{eq:hol_Wightman_general_repr}).
    \item For $\xi_4 \neq 0$, $F_h (-i |\xi_4|)$ coincides with the Schwinger function smeared by $h(\vec{\xi})$, 
\begin{align}
    S_h (\xi_4) := \int d^{3} \vec{\xi}~ S_1(\vec{\xi},\xi_4) h(\vec{\xi}).
\end{align}
    \item We define, for $\epsilon > 0$,
\begin{align}
    \tilde{S}_h ^{(\epsilon)}(k_4^2) &:= \int d\xi_4~ S_h (|\xi_4| + \epsilon) e^{- i k_4 \xi_4} \notag \\
    &= \int d\xi_4~ F_h (-i(|\xi_4| + \epsilon)) e^{- i k_4 \xi_4}.
\end{align}
The representation (\ref{eq:spec_repr_complex}) and (\ref{eq:theorem_1_Cauchy_int}), together with (\ref{eq:timelike-Schwinger}) and (\ref{eq:thm2_any_contour2}), yields\footnote{Note that the limit $\epsilon \rightarrow 0$ gives the smeared Schwinger function $\tilde{S}_h (k_4) := \int  d\xi_4~ S_h (\xi_4) e^{- i k_4 \xi_4}$. In other words, the representation (\ref{eq:spec_repr_complex}) enables us to ``complete'' the point $\xi_4 = 0$ from $F_h (-i |\xi_4|)$ defined on $\xi_4 \neq 0$.}
\begin{align}
\tilde{S}_h ^{(\epsilon)}(k_4^2) &=  \int d \xi_4~ S_h (|\xi_4| + \epsilon) e^{- i k_4 \xi_4} \notag \\
&=  \int \frac{d^{3} \vec{k}}{(2 \pi)^{3}} \tilde{h} (\vec{k}) \Biggl[ \int_0 ^\infty d \sigma^2 \frac{\rho(\sigma^2)}{\sigma^2 + \vec{k}^2 + k_4^2} e^{- \epsilon \sqrt{\sigma^2 + \vec{k}^2}}  \notag \\
&~~~~ - \sum_{\ell=1}^{N_p} \oint_{\gamma_\ell} \frac{d \zeta}{2 \pi i}  \frac{D(\zeta)}{(-\zeta) + \vec{k}^2 + k_4^2} e^{- \epsilon \sqrt{\vec{k}^2 - \zeta}} \notag \\
&~~~ - \sum_{k=1}^{N_c} \oint_{\Gamma_k} \frac{d \zeta}{2 \pi i}  \frac{D(\zeta)}{(-\zeta) + \vec{k}^2 + k_4^2} e^{- \epsilon \sqrt{\vec{k}^2 - \zeta}} \Biggr], \label{eq:nontemp_deformed_repr}
\end{align}
from which $\tilde{S}_h ^{(\epsilon)}(k_4^2)$ has some singularities in $\mathbb{C}-(-\infty,0]$ for some $\epsilon > 0$ and some $\tilde{h}(\vec{k}) \in \mathscr{D}(\mathbb{R}^3)$.

Indeed, otherwise, $\tilde{S}_h ^{(\epsilon)}(k_4^2)$ would be holomorphic in $\mathbb{C}-(-\infty,0]$ for all $\epsilon > 0$ and $\tilde{h}(\vec{k}) \in \mathscr{D}(\mathbb{R}^3)$. This implies the last line (except for the first term) of (\ref{eq:nontemp_deformed_repr}) would vanish for all $\epsilon > 0$\footnote{Since the last line of (\ref{eq:nontemp_deformed_repr}) is holomorphic at least on the negative real axis (where we have used the third assumption of (iv)), it would be an entire function.
Furthermore, it tends to vanish as $|k_4^2| \rightarrow \infty$ and therefore would vanish.}. 
Then, $\lim_{\epsilon \downarrow 0} \tilde{S}_h ^{(\epsilon)}(k_4^2) = \int d^3 \vec{k}~ \tilde{h} (\vec{k}) D(k_4^2 + \vec{k}^2)$ would be also holomorphic in $\mathbb{C}-(-\infty,0]$ for any $\tilde{h} (\vec{k}) \in \mathscr{D}(\mathbb{R}^3)$.
By taking the limit of the mollifiers, ''approximations'' to the delta function, $\tilde{h} (\vec{k}) \rightarrow \delta (|\vec{k}| - x_0)~(x_0 > 0)$, this leads to holomorphy in $\mathbb{C}-(-\infty,0]$ of $D(k^2)$\footnote{Indeed,
let $h_\varepsilon (\vec{k})$ denote such a mollifier: $h_\varepsilon (\vec{k}) \rightarrow \delta (|\vec{k}| - x_0),~~\varepsilon \rightarrow + 0$. Then, $\lim_{\varepsilon \downarrow 0} \int d^3 \vec{k}~ \tilde{h} (\vec{k}) D(k_4^2 + \vec{k}^2) = C' D(k_4^2 + x_0^2)$ for some $C' > 0$.
The complex parts of the left-hand side would be identically zero for all $\varepsilon > 0$ due to the same argument as the previous footnote.
This leads to the holomorphy in $\mathbb{C}-(-\infty,0]$ of $D(k_4^2 + x_0^2)$ for $x_0 > 0$.
}.
This contradicts with the existence of complex singularities.

\end{enumerate}
The above properties follow from the prerequisites of the theorem (i) -- (v).
We prove the theorem by contradiction. Suppose that the boundary value of the Wightman function were a tempered distribution: $\lim_{\substack{\eta \rightarrow 0 \\ \eta \in V_+}} W_1(\xi-i \eta) \in \mathscr{S}'(\mathbb{R}^4)$.
\begin{enumerate}
    \renewcommand{\labelenumi}{(\alph{enumi})}
    \setcounter{enumi}{4} 
    \item Then, the boundary value of $F_h (\xi^0 - i \eta^0)$ would be a tempered distribution $F_h (\xi^0) := \lim_{\eta^0 \downarrow 0} F_h (\xi^0 - i \eta^0) \in \mathscr{S} '(\mathbb{R})$.
\end{enumerate}
Let us find a contradiction under the circumstance characterized by (a) -- (e).

We firstly decompose $F_h (\xi^0)$ as
\begin{align}
    F_h (\xi^0) &= F_+ (\xi^0) + F_- (\xi^0) \notag \\
    F_\pm (\xi^0) &= \int \frac{d \omega}{2 \pi} e^{-i \omega \xi^0} \tilde{F}_\pm (\omega), \notag \\
    \operatorname{supp} \tilde{F}_+ \subset [0, \infty) &, ~~~ \operatorname{supp} \tilde{F}_- \subset (-\infty, 0]. \label{eq:freq_decomp_temp}
\end{align}
Since $F_h (\xi^0)$ is not a function but a tempered distribution, there is a delicate point here. We can prove this decomposition with the following manipulation.
We recall (see Appendix A) that $\mathscr{S}(\bar{\mathbb{R}}_+) := \mathscr{S}(\mathbb{R}) / \mathscr{S}_-(\mathbb{R})$ and its dual space
$\mathscr{S}' (\bar{\mathbb{R}}_+) \simeq \{ F \in \mathscr{S}'(\mathbb{R}) ~;~ \operatorname{supp} F \subset [0,\infty) \}$. We similarly define $\mathscr{S}(\bar{\mathbb{R}}_-) := \mathscr{S}(\mathbb{R}) / \mathscr{S}_+(\mathbb{R})$.
We also define
$\mathscr{X}:= \{ ([f]_+, [f]_-) \in \mathscr{S} (\bar{\mathbb{R}}_+) \oplus \mathscr{S} (\bar{\mathbb{R}}_-)~;~ f \in \mathscr{S}(\mathbb{R})\}$ and its dual $\mathscr{X}'$.
Note the homeomorphism $\mathscr{X} \simeq \mathscr{S}(\mathbb{R})$.
By the Hahn-Banach theorem, an element of $\mathscr{X}'$ can be extended to the dual space of $\mathscr{S} (\bar{\mathbb{R}}_+) \oplus \mathscr{S} (\bar{\mathbb{R}}_-)$, which is isomorphic to $\mathscr{S}' (\bar{\mathbb{R}}_+) \oplus \mathscr{S}' (\bar{\mathbb{R}}_-) \simeq \{ F \in \mathscr{S}'(\mathbb{R}) ~;~ \operatorname{supp} F \subset [0,\infty) \} \oplus \{ F \in \mathscr{S}'(\mathbb{R}) ~;~ \operatorname{supp} F \subset (- \infty,0] \}$.
Therefore, for any $\tilde{F} \in \mathscr{S}' (\mathbb{R})$, there exist $\tilde{F}_+,~\tilde{F}_- \in \mathscr{S}'(\mathbb{R})$ such that $\tilde{F} = \tilde{F}_+ + \tilde{F}_-$ with $\operatorname{supp} \tilde{F}_+ \subset [0, \infty)$ and $\operatorname{supp} \tilde{F}_- \subset (-\infty, 0]$. This justifies (\ref{eq:freq_decomp_temp}).
For a more general description on this decomposition, see Proposition A.3 of \cite{Bogolyubov:1990kw}.

Next, we list several properties of $F_- (\xi^0)$ as follows.

\begin{enumerate}
\renewcommand{\labelenumi}{(\alph{enumi}')}
    \item $F_- (\xi^0)$ can be analytically continued to the whole complex plane.
    To show this, we consider the holomorphy in the (1) lower and (2) upper half planes separately and (3) glue them.
    \begin{enumerate}
\renewcommand{\labelenumii}{(\arabic{enumii})}
        \item For the lower-half plane, we define $F_- (\xi^0 - i \eta^0) := F_h (\xi^0 - i \eta^0) - F_+ (\xi^0 - i \eta^0) $, where $F_+ (\xi^0 - i \eta^0)$ is the Laplace transform of $\tilde{F}_+(\omega)$. This is the desired holomorphic function. Indeed, because of the support property $\operatorname{supp} \tilde{F}_+ \subset [0, \infty)$, $F_+ (\xi^0 - i \eta^0)$ is holomorphic in the lower-half plane ($\eta^0 > 0$).
    The holomorphy of $F_+ (\xi^0 - i \eta^0)$ and $F_h (\xi^0 - i \eta^0)$ from (a) yields that $F_- (\xi^0 - i \eta^0)$ defined above is holomorphic in the lower-half plane. The boundary values are: $F_h (\xi^0 - i \eta^0) \rightarrow F_h (\xi^0)$ from (e) and $F_+ (\xi^0 - i \eta^0) \rightarrow F_+ (\xi^0)$ as is well-known\footnote{For example, see Theorem 2-9 in \cite{Streater:1989vi}}, from which $F_- (\xi^0 - i \eta^0)$ has the boundary value $F_- (\xi^0)$. Therefore, $F_- (\xi^0 - i \eta^0) = F_h (\xi^0 - i \eta^0) - F_+ (\xi^0 - i \eta^0)$ provides the analytic continuation to the lower-half plane.
        \item For the upper-half plane, the Laplace transform of $\tilde{F}_-(\omega)$ provides the analytic continuation due to $\operatorname{supp} \tilde{F}_- \subset (-\infty, 0]$.
        \item We have two analytic continuations in the upper and lower half planes that have the coincident boundary value on the real axis. By the one-variable version of the edge of the wedge theorem, one can find an entire function which is the analytic continuation from both half-planes.
    \end{enumerate}

\renewcommand{\labelenumi}{(\alph{enumi}1')}
    \item In all directions of the limit $|\xi^0 - i \eta^0| \rightarrow \infty$ in the lower-half plane ($\eta^0 > 0$), $F_- (\xi^0 - i \eta^0)$ grows at most exponentially.
    Indeed, both $F_+ (\xi^0 - i \eta^0)$ and $F_h (\xi^0 - i \eta^0)$ satisfy this condition due to (b) and $\operatorname{supp} \tilde{F}_+ \subset [0, \infty)$.
\setcounter{enumi}{1}
\renewcommand{\labelenumi}{(\alph{enumi}2')}
    \item In all directions of the limit $|\xi^0 - i \eta^0| \rightarrow \infty$ in the upper-half plane ($\eta^0 < 0$), $F_- (\xi^0 - i \eta^0)$ grows at most polynomially because of $\operatorname{supp} \tilde{F}_- \subset (-\infty, 0]$.
\renewcommand{\labelenumi}{(\alph{enumi}')}
    \item $F_- ( - i \xi_4)$ is of at most polynomial growth in $\xi_4 > 0$ due to (c) and $\operatorname{supp} \tilde{F}_+ \subset [0, \infty)$.
\end{enumerate}
From (a'), (b1'), (c'), and the temperedness of $F_- (\xi^0)$, a variant of the Paley-Wiener-Schwartz theorem for one-sided support (see, e.g., Theorem A of \cite{Carlsson2017}) implies that $F_- (\xi^0 - i \eta^0)$ in the lower-half plane can be written as the Laplace transformation of a tempered distribution $\tilde{F}'_- (\omega)$ of $\operatorname{supp} \tilde{F}'_- \subset [0, \infty)$ (which actually coincides with $\tilde{F}_- (\omega)$). Thus, in all directions of the limit $|\xi^0 - i \eta^0| \rightarrow \infty$ in the lower-half plane, $F_- (\xi^0 - i \eta^0)$ grows at most polynomially.
Together with (b2'), we conclude that the entire function $F_- (\xi^0 - i \eta^0)$ is a polynomial, whose Fourier transform is a point-supported distribution.

Because of the support properties $\operatorname{supp} \tilde{F}_- = \{  0 \}$ and $\operatorname{supp} \tilde{F}_+ \subset [0,\infty)$, $\tilde{F}_+(\omega)$ can absorb $\tilde{F}_-(\omega)$ in the decomposition (\ref{eq:freq_decomp_temp}). From here on, we assume $\tilde{F}_- = 0$ without loss of generality.

Finally, let us construct $\tilde{S}_h ^{(\epsilon)}(k_4^2)$ defined in (d) from $F_h (\xi^0) = F_+ (\xi^0)$.
Due to $\operatorname{supp} \tilde{F}_+ \subset [0, \infty)$, the analytic continuation of $F_h (\xi^0)$ to the lower-half plane is given by the Laplace transform of $\tilde{F}_+$,
\begin{align}
    F_h (\xi^0 - i \eta^0) = \int \frac{d \omega}{2 \pi} e^{-i \omega \xi^0} e^{- \omega \eta^0} \tilde{F}_+ (\omega),
\end{align}
which is a holomorphic function for $\eta^0 > 0$.

Therefore, using (c) and (d), we have 
\begin{align}
    \tilde{S}_h ^{(\epsilon)}(k_4^2) &=  \int d \xi_4~ F_h  ( - i |\xi_4|  - i  \epsilon) e^{- i k_4 \xi_4} \notag \\
    &= \int d \xi_4 ~e^{- i k_4 \xi_4} \int \frac{d \omega}{2 \pi}  \tilde{F}_+ (\omega) e^{- \epsilon \omega} e^{- \omega |\xi_4|}.
\end{align}
Since a tempered distribution is a sum of derivatives of continuous functions (of at most polynomial growth): $\tilde{F}_+ (\omega) = \sum_{n=1}^{M}  \left( - \frac{\partial}{\partial \omega} \right)^{\alpha_n} \tilde{f}_n (\omega)$, we can rewrite
\begin{align}
    \tilde{S}_h ^{(\epsilon)}(k_4^2) &= \sum_{n=1}^{M}  \int d \xi_4 ~e^{- i k_4 \xi_4}  \int_0 ^\infty \frac{d \omega}{2 \pi}  \tilde{f}_n (\omega) \frac{\partial^{\alpha_n}}{\partial \omega^{\alpha_n}} e^{- \epsilon \omega} e^{- \omega |\xi_4|} \notag \\
    &= \sum_{n=1}^{M} \int_0 ^\infty \frac{d \omega}{2 \pi}  \tilde{f}_n (\omega) \frac{\partial^{\alpha_n}}{\partial \omega^{\alpha_n}} \frac{2 \omega e^{- \epsilon \omega} }{k_4^2 + \omega^2}, \label{eq:nontemp_smeared_Schwinger_final}
\end{align}
where $\alpha_n$ is a non-negative integer and $\tilde{f}_n(\omega)$ is a continuous function of $\operatorname{supp} \tilde{f}_n \subset [0,\infty)$.

The representation (\ref{eq:nontemp_smeared_Schwinger_final}) shows the holomorphy of $\tilde{S}_h ^{(\epsilon)}(k_4^2)$ on $\mathbb{C}- (-\infty,0]$ for all $\epsilon > 0$ and $\tilde{h} (\vec{k}) \in \mathscr{D}(\mathbb{R}^3)$, which contradicts with the existence of singularity explained in (d). This completes the proof of Theorem \ref{thm:nontempered}. 
\end{proof}

Let us comment on some implications of the non-temperedness.
As seen from (\ref{eq:simple_complex_poles_Wightman}), a typical non-tempered behavior is the exponential growth in $\xi^0$.
The exponential growth of the Wightman function largely affects asymptotic states, which correspond to ``$\xi^0 \rightarrow \pm \infty$ limit''. This indicates that asymptotic states of the field are ill-defined without some artificial manipulations\footnote{For Lee-Wick theory, which is the simplest model providing complex poles considered below, some manipulations on S-matrix were discussed in old literature, e.g., see \cite[Sec. 16]{Nakanishi72} for a review. However, these manipulations can cause Lorentz non-invariance and acausality.
We insist that such states corresponding to complex singularities should be eliminated from the physical state space before taking the asymptotic limit (rather than causing Lorentz non-invariance).}.
Since such states in the ``full'' state space are far from being identified with asymptotic particle states and should be eliminated from the physical state space, the complex singularities could be considered as a signal of confinement.

Finally, let us comment on the spectral condition. The spectral condition for the two-point Wightman function states $\operatorname{supp} \tilde{W}_1(q) \subset V_+$, where $\tilde{W}_1(q) = \int d^4 \xi ~e^{iq\xi} W_1(\xi) $ with Lorentzian vectors $\xi,~q$.
Since the existence of $\tilde{W}_1(q)$ is assumed in the spectral condition, this condition requires the temperedness as a prerequisite.
Therefore, Theorem \ref{thm:nontempered} implies that \textit{the spectral condition is never satisfied in the presence of complex singularities}.

\subsubsection{Violation of reflection positivity} \label{sec:ref_pos}

As a consequence of the non-temperedness, we can prove that the reflection positivity [OS2] is always violated in the presence of complex singularities.
Since complex singularities invalidate the K\"all\'en-Lehmann spectral representation, some conditions of the standard axiom should be violated. Therefore, the violation of the reflection positivity is in some sense trivial.
However, for this paper to be self-contained and because of importance of this claim, we will describe the proof in detail in Appendix \ref{sec:Appendix_ref_pos}. Moreover, to the best of our knowledge, an explicit proof on this claim is new.

\begin{theorem}
\label{thm:violation_ref_pos}
If $S_1 (p) = D(p^2)$ is a two-point Schwinger function with complex singularities satisfying (i) -- (v), then the reflection positivity [OS2] is violated.
\end{theorem}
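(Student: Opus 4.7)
The plan is to argue by contradiction: assume that $S_1$ satisfies reflection positivity together with hypotheses (i)--(v), and deduce that $D(k^2)$ must be holomorphic on $\mathbb{C}\setminus(-\infty,0]$, which is incompatible with the existence of complex singularities.

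First I would set up the standard Osterwalder--Schrader reconstruction applied at the level of the two-point function. Reflection positivity is by definition the positive semi-definiteness of the sesquilinear form
\begin{equation}
\langle f,g\rangle_\theta := \iint \overline{f(\theta x)}\, g(y)\, S_1(y-x)\, dx\, dy
\end{equation}
on test functions supported in the positive-time half-space, where $\theta$ reflects the Euclidean time. Quotienting by its null subspace and completing produces a Hilbert space $\mathcal{H}$ with a distinguished vector $\Omega$ coming from the vacuum. The one-parameter semigroup of positive Euclidean time translations on the test-function space descends to a strongly continuous semigroup of self-adjoint contractions on $\mathcal{H}$; by the Hille--Yosida theorem its generator is a positive self-adjoint operator $H\geq 0$.

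Next I would feed the spectral resolution of $H$ back into the two-point function to derive the K\"all\'en--Lehmann representation
\begin{equation}
D(k^2) = \int_{0}^{\infty} d\sigma^{2}\, \frac{\rho(\sigma^{2})}{\sigma^{2}+k^{2}},
\end{equation}
with a non-negative spectral density $\rho\geq 0$ supported on $[0,\infty)$ (and lying in $\mathscr{S}'([0,\infty])$ thanks to temperedness (iii)). This representation is manifestly holomorphic on $\mathbb{C}\setminus(-\infty,0]$, directly contradicting assumption (iv) that $D$ carries singularities off the timelike axis. A second, more qualitative route I would sketch invokes Theorem~\ref{thm:nontempered}: if reflection positivity held, then unitarity of the real-time evolution $e^{itH}$ on $\mathcal{H}$ combined with the Cauchy--Schwarz inequality would force $|\langle\Omega, \phi(t)\phi(0)\Omega\rangle| \leq \langle\Omega, \phi(0)^{2}\Omega\rangle$, making the reconstructed Wightman function a bounded, hence tempered, distribution --- in flat contradiction with non-temperedness.

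The main obstacle will be executing the OS reconstruction rigorously under our deliberately minimal hypotheses (i)--(iii) rather than the full OS axiomatic package. In particular one must check carefully that positive-time translation is well defined on the quotient space, extends to a strongly continuous contraction semigroup on the completion, and admits a positive self-adjoint generator; and that the spectral measure produced by $H$ indeed matches $\rho$ as an element of $\mathscr{S}'([0,\infty])$ in the sense demanded by (iv~a). These functional-analytic details are what I would expand on in the dedicated Appendix~\ref{sec:Appendix_ref_pos}.
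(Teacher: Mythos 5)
Your proposal is correct, and the \emph{secondary} route you mention at the end (invoking Theorem~\ref{thm:nontempered}) is essentially the paper's own strategy, but your \emph{primary} route is genuinely different. The paper's Appendix~\ref{sec:Appendix_ref_pos} performs a stripped-down two-point OS reconstruction (Hilbert space, strongly continuous self-adjoint contraction semigroup $T^\tau$, holomorphic semigroup $T^{\tau+is}$), and then proves via a Paley--Wiener-type boundary-value estimate (Lemma~\ref{lem:boundary_value}) that the reconstructed Wightman function is \emph{tempered} (Theorem~\ref{thm:ref_pos_yields_temperedness}); the contradiction with Theorem~\ref{thm:nontempered} then finishes the job. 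Your primary route instead pushes the reconstruction further: from $T^\tau$ a contraction you would infer $H\geq 0$, and from that, together with the spatial translation operators and Euclidean rotational invariance (ii), you would reassemble the full K\"all\'en--Lehmann representation with $\rho\geq 0$ supported on $[0,\infty)$, whose manifest holomorphy off $(-\infty,0]$ contradicts assumption (iv) directly. Both routes work; yours buys a stronger intermediate conclusion (the explicit spectral representation) but at the cost of having to assemble a joint spectral resolution for the whole $4$-momentum and verify that the resulting $\rho$ sits in $\mathscr{S}'([0,\infty])$, whereas the paper's route needs only the weaker statement ``tempered boundary value'' and can lean on the already-proved non-temperedness theorem. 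One small caution on your phrasing: positivity of the generator of a self-adjoint contraction semigroup on Hilbert space follows most directly from the spectral theorem ($\|e^{-\tau H}\|\leq 1$ for all $\tau>0$ forces $\sigma(H)\subset[0,\infty)$); invoking Hille--Yosida is not wrong but is the Banach-space tool, and it does not by itself give self-adjointness --- you still need a variant of Stone's theorem as the paper notes.
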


\begin{proof}
The reflection positivity for the two-point function (\ref{eq:two_pt_ref_pos}) is a necessary condition of the reflection positivity [OS2].

In Appendix \ref{sec:Appendix_ref_pos}, it is proved that the reflection positivity for the two-point function (\ref{eq:two_pt_ref_pos}) yields temperedness of a reconstructed two-point Wightman function (Theorem \ref{thm:ref_pos_yields_temperedness}). Therefore, the non-temperedness (Theorem \ref{thm:nontempered}) implies the violation of the reflection positivity.
\end{proof}

The reflection positivity, especially (\ref{eq:two_pt_ref_pos}) for the two-point function, is often checked by a necessary condition: the positivity of $\underline{S}_1 (\vec{k}, \xi_4) := \int d^3 \vec{\xi} ~e^{i \vec{k} \cdot \vec{\xi}} S_1(\vec{\xi}, \xi_4) $ (\ref{eq:two_pt_ref_pos_concise}), e.g., \cite{KWHMS19}. Using this check, one can easily show that a propagator with only simple complex conjugate poles violates the reflection positivity. Indeed, from (\ref{eq:simple_complex_poles_Schwinger}), we have, for $\xi_4 > 0$,
\begin{align}
    \underline{S}_1 (\vec{k}, \xi_4) &= \frac{Z}{2 E_{\vec{k}} } e^{- E_{\vec{k}} \xi_4} +  \frac{Z^*}{2 E_{\vec{k}}^*} e^{- E_{\vec{k}}^* \xi_4} \notag \\
    &= \frac{|Z|}{|E_{\vec{k}}|} e^{- \xi_4 \operatorname{Re} E_{\vec{k}}} \cos \left( \xi_4 \operatorname{Im} E_{\vec{k}} - \arg \left( \frac{Z}{E_{\vec{k}}} \right) \right),
\end{align}
which is negative for some $\xi_4 > 0$. However, this check is not useful to prove the violation of the reflection positivity for general propagators with complex singularities.
For example, in the case seen in Sec.~\ref{sec:CCP}, we have, by assuming some regularity of the spectral function $\rho(\sigma^2)$,
\begin{align}
    \underline{S}_1 (\vec{k}, \xi_4) &= \int_{\sqrt{\vec{k}^2}} ^\infty d \sigma ~e^{- \sigma \xi_4} \rho (\sigma^2 - \vec{k}^2) \notag \\
    &~~+  \frac{|Z|}{|E_{\vec{k}}|} e^{- \xi_4 \operatorname{Re} E_{\vec{k}}} \cos \left( \xi_4 \operatorname{Im} E_{\vec{k}} - \arg \left( \frac{Z}{E_{\vec{k}}} \right) \right),
\end{align}
which could be positive if the spectral function $\rho(\sigma^2)$ is positive and large.
Theorem \ref{thm:violation_ref_pos} indicates that the existence of complex singularities always invalidates the reflection positivity irrespective of the timelike singularities. It is redundant to check the positivity of (\ref{eq:two_pt_ref_pos_concise}) numerically for a propagator with complex singularities.

\subsubsection{Violation of (Wightman) positivity} \label{sec:W-positivity}

Let us consider the positivity condition of the Wightman function.
First of all, the standard positivity condition:
\begin{align}
    \int d^4 x d^4y~ W_1 (y-x) f^*(x) f(y) \geq 0 ~~~ \mathrm{for~any~} f \in \mathscr{S}(\mathbb{R}^4)
\end{align}
makes no sense for a non-tempered distribution $W_1 (y-x)$. It is natural to examine a positivity condition in a weak sense using $\mathscr{D}(\mathbb{R}^4)$, instead of $\mathscr{S}(\mathbb{R}^4)$, which we call \textit{Wightman positivity in $\mathscr{D}(\mathbb{R}^4)$ (for the two-point function)}:
\begin{align}
    \int d^4 x d^4y~ W_1 (y-x) f^*(x) f(y) \geq 0 ~~~ \mathrm{for~any~} f \in \mathscr{D}(\mathbb{R}^4).
    \label{eq:W-positivity}
\end{align}

Here, we examine this positivity condition.
As can be inferred from the violation of the reflection positivity, this condition is also violated in the presence of complex singularities.
We prove the following theorem in a way similar to the previous section.

\begin{theorem}
\label{thm:violation_W-pos}
Let $S_1 (p) = D(p^2)$ be a two-point Schwinger function with complex singularities satisfying (i) -- (v).
By Theorem \ref{thm:holomorphy_general} and \ref{thm:complex_boundary_value}, $W_1 ( - i \xi_4 , \vec{\xi}) = S_1 (\vec{\xi}, \xi_4) ~~(\xi_4 > 0)$ has the analytic continuation $W_1(\xi-i \eta)$ to the tube $\mathbb{R}^4 - iV_+$ and there exists the boundary value as a distribution $W_1(\xi) := \lim_{\substack{\eta \rightarrow 0 \\ \eta \in V_+}} W_1(\xi-i \eta) \in \mathscr{D}'(\mathbb{R}^4)$.
Then, the Wightman positivity in $\mathscr{D}(\mathbb{R}^4)$ for $W_1(\xi)$ is violated.
\end{theorem}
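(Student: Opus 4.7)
The proof will proceed by contradiction, following the heuristic spelled out in property (c) of Sec.~\ref{sec:2B-properties} but adapted to the full $(3+1)$-dimensional smeared setting. Suppose the Wightman positivity (\ref{eq:W-positivity}) held. Then the Hermitian form
\begin{equation}
\langle f, g \rangle_W := \int d^4x\, d^4y\, W_1(y-x) f^*(x) g(y)
\end{equation}
would be positive semi-definite on $\mathscr{D}(\mathbb{R}^4)$. The first step is the standard GNS-type construction underlying Wightman reconstruction: quotient $\mathscr{D}(\mathbb{R}^4)$ by the null space of $\langle \cdot,\cdot \rangle_W$, complete, and obtain a Hilbert space $\mathcal{H}$ together with a linear map $f \mapsto \Phi_f$ with dense range.

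The second step is to use the translation invariance of $W_1$, which follows from the Euclidean translation invariance of $S_1$, to lift the test-function translations $(T_a f)(x) := f(x-a)$ to a strongly continuous group of unitary operators $U(a)$ on $\mathcal{H}$ satisfying $U(a)\Phi_f = \Phi_{T_a f}$. Cauchy--Schwarz then yields the crucial uniform bound
\begin{equation}
\sup_{a \in \mathbb{R}^4} |\langle \Phi_f, U(a) \Phi_g \rangle_{\mathcal{H}}| \leq \|\Phi_f\|\,\|\Phi_g\| < \infty
\end{equation}
for each fixed $f,g \in \mathscr{D}(\mathbb{R}^4)$. Unwinding the definitions, the matrix element on the left-hand side equals $\int d^4x\, d^4y\, W_1(y-x+a) f^*(x) g(y)$, so smeared translates of $W_1$ would have to be $L^\infty$ functions of $a$.

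The third step extracts the contradiction with Theorem~\ref{thm:nontempered}. Specializing to product-form test functions $f(x) = \phi(x^0) h(\vec{x})$ and $g(x) = \psi(x^0) h'(\vec{x})$ and restricting to time translations $a = (t,\vec{0})$, the bound translates into the statement that the spatially smeared profile
\begin{equation}
F_H(\xi^0) := \int d^3\vec{\xi}\, W_1(\xi^0,\vec{\xi})\, H(\vec{\xi}),
\end{equation}
already appearing in the proof of Theorem~\ref{thm:nontempered}, would have the property that its convolution with every compactly supported test function on $\mathbb{R}$ is a bounded function. A closed-graph argument then upgrades $F_H$ to an element of $\mathscr{S}'(\mathbb{R})$. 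Since such product-form $H$ (e.g., spatial cross-correlations $\bar{h} \star \check{h}'$ and their linear span) form a dense subclass of $\mathscr{D}(\mathbb{R}^3)$, one obtains temperedness of $W_1$ on $\mathbb{R}^4$, directly contradicting Theorem~\ref{thm:nontempered}.

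The hardest step will be the closed-graph bootstrapping in the last paragraph: showing that a distribution whose convolution with every $\mathscr{D}(\mathbb{R})$ test function is bounded must itself be tempered. This is a standard but non-trivial fact in distribution theory; the cleanest route is to verify that the map $\chi \mapsto F_H * \chi$ from $\mathscr{D}(\mathbb{R})$ to $L^\infty(\mathbb{R})$ has closed graph, deduce continuity via the closed graph theorem, and then extend by density and the continuity estimates to Schwartz test functions. Once this is secured, the contradiction with the explicit exponential growth of $F_H$ in $\xi^0$ identified in Theorem~\ref{thm:nontempered} is immediate, and the theorem is proved.
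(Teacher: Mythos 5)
Your core strategy matches the paper's Lemma~\ref{lem:W-pos_yields_temperedness}: positivity gives a pre-Hilbert structure, translation invariance lifts to unitaries, Cauchy--Schwarz bounds $(f,\hat U(a)g)_W$ uniformly in $a$, and this bound is supposed to force temperedness of $W_1$, contradicting Theorem~\ref{thm:nontempered}. Up through the Cauchy--Schwarz step your argument is essentially identical to the paper's.

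The divergence, and the gap, is in how you upgrade the boundedness-in-$a$ to temperedness of $W_1$. The paper stays in four dimensions and directly invokes Schwartz's criterion (\ref{eq:Schwartz-chap7-thm6}), namely $T\in\mathscr S'(\mathbb R^4)$ if and only if $\alpha*T$ is a smooth function of at most polynomial growth for all $\alpha\in\mathscr D(\mathbb R^4)$; since $(f*(g*W_1))(a)$ is bounded for all $f,g\in\mathscr D$, this gives $W_1\in\mathscr S'$ in two short applications without ever leaving $\mathbb R^4$. You instead reduce to time translations and spatially smeared profiles $F_H(\xi^0)$ and then propose a closed-graph plus density bootstrap. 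Two steps in that route are not justified as stated. First, the closed graph theorem gives continuity of $\chi\mapsto F_H*\chi$ from the LF space $\mathscr D(\mathbb R)$ to $L^\infty(\mathbb R)$, but the resulting estimates involve compact-support seminorms $\sup_K|D^\alpha\chi|$ with $K$-dependent constants, which are not Schwartz seminorms; the claimed ``extension by density to $\mathscr S$'' is not automatic (to make it work you would need, e.g., the translation invariance of the constants plus a dyadic decomposition, at which point you have essentially reproved the Schwartz criterion you could have cited). Second, and more seriously, passing from ``$F_H$ is tempered in $\xi^0$ for $H$ ranging over a dense subclass of $\mathscr D(\mathbb R^3)$'' to ``$W_1\in\mathscr S'(\mathbb R^4)$'' needs a continuity/uniformity argument that you do not supply: temperedness is not preserved under limits of the smearing function without control on the defining seminorms. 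A small inaccuracy: the $F_H$ built from spatial cross-correlations $H=\bar h\star\check h'\in\mathscr D(\mathbb R^3)$ is not the same object as the $F_h$ in the proof of Theorem~\ref{thm:nontempered}, which uses spatial test functions whose \emph{Fourier transform} is in $\mathscr D(\mathbb R^3)$; the identification you draw is misleading even if it does not affect the logic. The cleanest fix is simply to drop the one-dimensional reduction and the closed-graph detour, and apply (\ref{eq:Schwartz-chap7-thm6}) directly to $W_1$ in $\mathbb R^4$ as the paper does.
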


\begin{proof}
In the next lemma (Lemma \ref{lem:W-pos_yields_temperedness}), we prove that the Wightman positivity implies the temperedness of $W_1$. Therefore, the Wightman positivity is violated due to the non-temperedness (Theorem \ref{thm:nontempered}).

\end{proof}

\begin{lemma}
\label{lem:W-pos_yields_temperedness}
Let $W_1(\xi) \in \mathscr{D}'(\mathbb{R}^4)$ be a distribution satisfying the Wightman positivity in $\mathscr{D}(\mathbb{R}^4)$. Then, $W_1(\xi)$ can be regarded as a tempered distribution: $W_1(\xi)\in \mathscr{S}'(\mathbb{R}^4)$.
\end{lemma}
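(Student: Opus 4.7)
The plan is to recognize the Wightman positivity (\ref{eq:W-positivity}) as the hypothesis of the Bochner--Schwartz theorem on distributions of positive type. By the change of variables $\xi = y - x$, the condition can be rewritten as
\begin{equation}
\int d^4 \xi~ W_1(\xi)\, (\bar{f} \star f)(\xi) \geq 0 \quad \mathrm{for~all}~f \in \mathscr{D}(\mathbb{R}^4),
\end{equation}
where $(\bar{f} \star f)(\xi) := \int d^4 x~ \overline{f(x)} f(x+\xi)$. Noting that $\bar{f} \star f$ is the convolution $\overline{f(-\,\cdot\,)} * f$, this is precisely the statement that $W_1$ is a \emph{distribution of positive type} on $\mathscr{D}(\mathbb{R}^4)$. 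The desired conclusion $W_1 \in \mathscr{S}'(\mathbb{R}^4)$ is then an immediate corollary of the Bochner--Schwartz theorem, which asserts that every such distribution extends uniquely to a tempered distribution and equals the Fourier transform of a positive tempered Radon measure.

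For a self-contained argument, I would carry out the classical regularization strategy. First, pick a nonnegative mollifier $\rho_\epsilon \in \mathscr{D}(\mathbb{R}^4)$ with $\int \rho_\epsilon = 1$, and set $W_{1,\epsilon} := \rho_\epsilon * \overline{\rho_\epsilon(-\,\cdot\,)} * W_1$. Then $W_{1,\epsilon}$ is a continuous (indeed $C^\infty$) function, and a direct calculation shows that it remains of positive type as an ordinary function. By the classical Bochner theorem, there exists a positive finite Radon measure $\mu_\epsilon$ such that $W_{1,\epsilon} = \mathcal{F}^{-1}(\mu_\epsilon)$. Second, exploit the Cauchy--Schwarz inequality coming from the positivity,
\begin{equation}
|\langle W_1, \bar{f} \star g\rangle|^2 \leq \langle W_1, \bar{f} \star f\rangle\,\langle W_1, \bar{g} \star g\rangle,
\end{equation}
to derive seminorm estimates that control $\mu_\epsilon$ uniformly in $\epsilon$. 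Third, pass to the limit $\epsilon \to 0$ by weak-$*$ compactness to obtain a positive Radon measure $\mu$ with $W_1 = \mathcal{F}^{-1}(\mu)$, and deduce $W_1 \in \mathscr{S}'(\mathbb{R}^4)$.

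The main obstacle is the second step: obtaining a bound of the form $\int (1+|k|^2)^{-N}\, d\mu_\epsilon(k) \leq C$ for some fixed $N$ and $C$ independent of $\epsilon$, so that the limiting measure $\mu$ is tempered rather than merely Radon. This polynomial-growth estimate is the true content of Bochner--Schwartz beyond classical Bochner, and it uses positivity essentially: one chooses test functions $f$ of the form $f = \phi_k$ with $\phi_k$ peaked in Fourier space near momentum $k$, applies Cauchy--Schwarz to relate $\langle W_1, \bar{\phi}_k \star \phi_k\rangle$ to $\langle W_1, \bar{\phi}_0 \star \phi_0\rangle$ via the continuity seminorms of $W_1$ on $\mathscr{D}(\mathbb{R}^4)$, and thereby bounds $\mu_\epsilon$ on shells $\{|k| \sim R\}$ by a fixed polynomial in $R$. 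Once this uniform polynomial bound is in hand, the passage to the limit and the identification $W_1 = \mathcal{F}^{-1}(\mu) \in \mathscr{S}'(\mathbb{R}^4)$ are routine.
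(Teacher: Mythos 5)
Your first paragraph is correct and gives a valid proof: after the change of variables $\xi = y - x$, the Wightman positivity (\ref{eq:W-positivity}) says exactly that $W_1$ is a distribution of positive type in the sense of Bochner--Schwartz, and that theorem then yields $W_1 \in \mathscr{S}'(\mathbb{R}^4)$ (with the bonus that $\widehat{W}_1$ is a positive tempered measure). This is a genuinely different route from the paper's. The paper avoids Fourier analysis and measure theory entirely: it forms the positive semidefinite sesquilinear form $(f,g)_W$, uses Cauchy--Schwarz together with translation invariance of the form to show that $(f * (g * W_1))(a) = (f^*, \hat{U}(a)\hat{g})_W$ is \emph{bounded} in $a$, and then invokes Schwartz's elementary characterization (Th\'eor\`eme 6, Ch.~VII of \cite{Schwartz}) of $\mathscr{S}'$ as those $T \in \mathscr{D}'$ with $\alpha * T$ of at most polynomial growth for all $\alpha \in \mathscr{D}$. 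Both proofs hinge on Cauchy--Schwarz, but the paper's route is lighter: it uses only the ``if'' direction of the convolution criterion for temperedness, not the full Bochner--Schwartz machinery of mollification, classical Bochner, uniform shell estimates for the regularized measures $\mu_\epsilon$, and a weak-$*$ limit. Your ``self-contained argument'' sketch is, in effect, a sketch of a proof of Bochner--Schwartz itself; as you yourself note, the key uniform polynomial bound $\int (1+|k|^2)^{-N}\, d\mu_\epsilon(k) \leq C$ is left unestablished, so that part does not yet stand as a complete replacement for the citation. If you want a self-contained argument that stays close to the actual content used, the paper's boundedness-plus-convolution-criterion route is shorter and bypasses the measure-theoretic limit entirely.
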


The following proof of Lemma \ref{lem:W-pos_yields_temperedness} is based on an intuition that $W_1(\xi)$ is roughly a matrix element of a unitary operator and is therefore bounded above in a positive-definite state space as shown in Sec.~\ref{sec:2B-properties}.

\begin{proof}
We define a sesquilinear form on $\mathscr{D}(\mathbb{R}^4)$: for $f,g \in \mathscr{D}(\mathbb{R}^4)$,
\begin{align}
    (f,g)_W := \int d^4 x d^4y~ W_1 (y-x) f^*(x) g(y), 
\end{align}
which is positive semidefinite due to the Wightman positivity (\ref{eq:W-positivity}).
For $a \in \mathbb{R}^4$, $\hat{U}(a)$ denotes an operator on $\mathscr{D}(\mathbb{R}^4)$ defined by
\begin{align}
    (\hat{U}(a) f) (x) := f(x-a),
\end{align}
which satisfies $(\hat{U}(a) f, \hat{U}(a) f)_W = (f,f)_W$.

Since $(\cdot,\cdot)_W$ is positive semidefinite, the Cauchy-Schwarz inequality yields
\begin{align}
    |(f, \hat{U}(a) g)_W | \leq \sqrt{ (f,f)_W  (g,g)_W  }.
\end{align}

Thus, for all $f,g \in \mathscr{D}(\mathbb{R}^4)$,
\begin{align}
    ( f * (g * W_1))(a) = (f^*, \hat{U}(a) \hat{g} )_W
\end{align}
is bounded in $a \in \mathbb{R}^4$, where $\hat{g} (x) := g(-x)$ and $(f*g)(x):= \int d^4 \xi~f(x-\xi) g (\xi)$.

Note that there exists a convenient necessary and sufficient condition for a distribution $T \in \mathscr{D}'(\mathbb{R}^4)$ to be a tempered distribution \cite[Theorem 6, Chapter 7]{Schwartz}:
\begin{align}
 &T \in \mathscr{S}'(\mathbb{R}^4) \Leftrightarrow \notag \\
&~~~
 \begin{aligned}
    &\alpha * T \mathrm{~is~a~smooth~function~of~at~most~} \\
    &\mathrm{polynomial~growth~for~any~} \alpha \in \mathscr{D}(\mathbb{R}^4).
 \end{aligned}
 \label{eq:Schwartz-chap7-thm6}
\end{align}

Now, let us fix an arbitrary $g \in \mathscr{D}(\mathbb{R}^4)$. Then, $( f * (g * W_1))(a)$ is a smooth function bounded above for all $f \in \mathscr{D}(\mathbb{R}^4)$.
The condition for temperedness (\ref{eq:Schwartz-chap7-thm6}) implies that we can regard $(g * W_1) \in \mathscr{S}'(\mathbb{R}^4)$, from which $(g * W_1)(x)$ is a smooth function of at most polynomial growth.

Therefore, from arbitrariness of $g \in \mathscr{D}(\mathbb{R}^4)$ and (\ref{eq:Schwartz-chap7-thm6}), we obtain $W_1 \in \mathscr{S}'(\mathbb{R}^4)$. This completes the proof of Lemma~\ref{lem:W-pos_yields_temperedness}.
\end{proof}

\subsubsection{Lorentz symmetry} \label{sec:Lorentz}

Since the Lorentz invariance is itself an important nature and also an essential step to the locality, let us carefully prove the Lorentz invariance of the reconstructed Wightman function.

\begin{theorem}
\label{thm:Lorentz}
Let $S_1 (p) = D(p^2)$ be a two-point Schwinger function with complex singularities satisfying (i) -- (v).
By Theorem \ref{thm:holomorphy_general} and \ref{thm:complex_boundary_value}, $W_1 ( - i \xi_4 , \vec{\xi}) = S_1 (\vec{\xi}, \xi_4) ~~(\xi_4 > 0)$ has the analytic continuation $W_1(\xi-i \eta)$ to the tube $\mathbb{R}^4 - iV_+$ and there exists the boundary value as a distribution $\lim_{\substack{\eta \rightarrow 0 \\ \eta \in V_+}} W_1(\xi-i \eta) \in \mathscr{D}'(\mathbb{R}^4)$.
Then, both the holomorphic Wightman function and its boundary value are (restricted) Lorentz invariant. More precisely,
for all proper orthochronous Lorentz transformations $\Lambda \in SO(3,1)^+$,
\begin{align}
    W_1(\Lambda(\xi-i \eta)) = W_1(\xi-i \eta),~~ \mathrm{for}~\xi-i \eta \in \mathbb{R}^4 - iV_+ \label{eq:Lorentz_holomorphic_Wightman}
\end{align}
and for any $f \in \mathscr{D} (\mathbb{R}^4)$,
\begin{align}
    W_1(f) = W_1(f_\Lambda),~~\mathrm{with}~f_\Lambda(\xi) := f(\Lambda^{-1} \xi). \label{eq:Lorentz_boundary_Wightman}
\end{align}
\end{theorem}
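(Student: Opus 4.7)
The plan is to reduce the statement to Lorentz invariance of the building blocks $W_\zeta(\xi - i\eta)$ appearing in (\ref{eq:hol_Wightman_general_repr}), establish that invariance by analytic continuation in $\zeta$ starting from the standard free-field case, and finally transfer bulk invariance to the boundary value via a change of variables. Throughout I use the elementary fact that $\Lambda \in SO(3,1)^+$ preserves the forward cone $V_+$, so $\Lambda(\xi - i\eta) = \Lambda\xi - i\Lambda\eta$ remains in the tube $\mathbb{R}^4 - iV_+$ whenever $\xi - i\eta$ does.

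First I would invoke the representation from the proof of Theorem \ref{thm:complex_boundary_value},
\begin{align}
W_1(\xi - i\eta) &= \int_0^\infty d\sigma^2\, \rho(\sigma^2)\, W_{\sigma^2}(\xi - i\eta) \notag \\
&\quad - \sum_{\ell=1}^{N_p}\oint_{\gamma_\ell}\frac{d\zeta}{2\pi i}\, D(\zeta)\, W_\zeta(\xi - i\eta) \notag \\
&\quad - \sum_{k=1}^{N_c}\oint_{\Gamma_k}\frac{d\zeta}{2\pi i}\, D(\zeta)\, W_\zeta(\xi - i\eta).
\end{align}
Since $\Lambda$ acts on neither the spectral parameter $\sigma^2$ nor the contour variable $\zeta$, linearity reduces the problem to proving $W_\zeta(\Lambda z) = W_\zeta(z)$ for each fixed $\zeta \in \mathbb{C}\setminus(-\infty,0]$ and each $z \in \mathbb{R}^4 - iV_+$.

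To establish this, fix $z$ and $\Lambda$ and set $G(\zeta) := W_\zeta(\Lambda z) - W_\zeta(z)$. The absolute-value estimates from the proof of Theorem \ref{thm:holomorphy_simple_complex_poles}, in particular the decay $e^{-(\eta^0 - |\vec{\eta}|)|\vec{k}|}$ inside the tube, guarantee that the integrand in (\ref{eq:W_zeta_holomorphic}) depends holomorphically on $\zeta$ throughout the connected domain $\mathbb{C}\setminus(-\infty,0]$ with bounds uniform on compact subsets; hence $G(\zeta)$ is holomorphic on this domain. For real $\zeta > 0$, $W_\zeta$ reduces to the standard positive-frequency two-point function of a free scalar field of mass $\sqrt{\zeta}$: with $p^\mu = (E_{\vec{k}}, \vec{k})$, the change of variables $\vec{k} \mapsto$ (spatial part of $\Lambda p$), combined with the Lorentz invariance of the mass-shell measure $d^3\vec{k}/(2E_{\vec{k}})$, yields $W_\zeta(\Lambda z) = W_\zeta(z)$. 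Therefore $G$ vanishes on the ray $\zeta > 0$, and the identity theorem for holomorphic functions forces $G \equiv 0$ on $\mathbb{C}\setminus(-\infty,0]$. This establishes (\ref{eq:Lorentz_holomorphic_Wightman}).

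Finally, I would deduce (\ref{eq:Lorentz_boundary_Wightman}) from the bulk invariance. Writing
\begin{equation}
W_1(f_\Lambda) = \lim_{\substack{\eta \to 0 \\ \eta \in V_+}} \int d^4\xi\, f(\Lambda^{-1}\xi)\, W_1(\xi - i\eta),
\end{equation}
the substitution $\xi = \Lambda\xi'$ preserves Lebesgue measure. Applying (\ref{eq:Lorentz_holomorphic_Wightman}) with $z = \xi' - i\Lambda^{-1}\eta$ (still in the tube, since $\Lambda^{-1}V_+ = V_+$) replaces $W_1(\Lambda\xi' - i\eta)$ by $W_1(\xi' - i\Lambda^{-1}\eta)$, and relabeling the limit variable $\eta \mapsto \Lambda^{-1}\eta$, a self-homeomorphism of $V_+$ fixing the origin, recovers $W_1(f)$. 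The main obstacle is the holomorphy of $G(\zeta)$: one must justify differentiation under the integral sign in (\ref{eq:W_zeta_holomorphic}) uniformly on compact subsets of $\mathbb{C}\setminus(-\infty,0]$, which requires combining the exponential decay estimates of Theorem \ref{thm:holomorphy_simple_complex_poles} with a control of $\partial_\zeta E_{\vec k}$ uniform away from the cut. This is technical but parallel to calculations already appearing in the paper.
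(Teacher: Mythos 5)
Your proposal is correct, and it takes a genuinely different route from the paper's for the key step. Both proofs start identically: decompose $W_1(\xi - i\eta)$ via (\ref{eq:hol_Wightman_general_repr}), reduce the complex part to proving $W_\zeta(\Lambda z) = W_\zeta(z)$ for $z$ in the tube, and transfer to the boundary value by a measure-preserving change of variables and a relabeling $\eta \mapsto \Lambda^{-1}\eta$ of the approach direction. Where you diverge is in the proof of $W_\zeta(\Lambda z) = W_\zeta(z)$ itself. The paper's Lemma \ref{lem:Lorentz} fixes $\zeta$ and argues in momentum space: it writes out the boosted integral, changes variables $\vec k \to \vec k'$ (which moves the $k_3$ contour off the real line to a path $C_\beta$), and then uses the holomorphy of the integrand in $k_3'$ together with the absence of zeros of $E_{\vec k'}$ along the family $\{C_{\beta'}\}_{0<\beta'<\beta}$ to deform back to the real axis. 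Your argument instead fixes $z$ and $\Lambda$, shows that $G(\zeta) := W_\zeta(\Lambda z) - W_\zeta(z)$ is holomorphic on $\mathbb{C}\setminus(-\infty,0]$ (using essentially the same exponential decay estimates, which one must verify are locally uniform in $\zeta$), observes that $G$ vanishes on the ray $\zeta > 0$ because $W_\zeta$ there is the standard free massive Wightman function, and invokes the identity theorem. Your route is conceptually lighter in that it reuses a textbook fact rather than redoing a contour deformation by hand, and it sidesteps the need to track the singularity structure of $E_{\vec k'}$ in the $k_3'$-plane; the paper's route is more explicit and self-contained. Note also that the paper supplies a second, independent proof of Lorentz invariance via Theorem \ref{thm:complex_Lorentz} (the Bargmann--Hall--Wightman argument), so your method is a third valid route. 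The only point worth tightening in a written-up version is the locally uniform (in $\zeta$) domination of the $\vec k$-integrand needed to justify holomorphy of $G$ --- as you note, $\partial_\zeta E_{\vec k} = 1/(2E_{\vec k})$ is bounded on compact subsets of $\mathbb{C}\setminus(-\infty,0]$ uniformly in $\vec k$, so this is straightforward but should be stated.
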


\begin{proof}
Let us first consider the holomorphic Wightman function (\ref{eq:Lorentz_holomorphic_Wightman}). This can be decomposed as (\ref{eq:hol_Wightman_general_repr}): $W_1(\xi-i \eta) = W_{tl} (\xi-i \eta) + W_{complex} (\xi-i \eta)$.
Therefore, the Lorentz invariance of  $W_1(\xi-i \eta)$ follows from that of the respective part.

The timelike part $W_{tl} (\xi-i \eta)$ is expressed as ({\ref{eq:timelike-complex-Wightman}}).
Since the free Wightman function $W_{\sigma^2} (\xi-i \eta)$ is a Lorentz invariant function as is well-known, $W_{tl} (\xi-i \eta)$ is also Lorentz invariant.

For the Lorentz invariance of the complex part $W_{complex} (\xi - i \eta)$, similarly from the representation (\ref{eq:hol_Wightman_general_repr}), it is sufficient to prove that 
$W_\zeta (\Lambda(\xi-i \eta)) = W_\zeta (\xi-i \eta)$ in $\xi-i \eta \in \mathbb{R}^4 - iV_+$ for all $\Lambda \in SO(3,1)^+$. We prove this claim in Lemma \ref{lem:Lorentz} to be given below. This established the invariance (\ref{eq:Lorentz_holomorphic_Wightman}).

The latter assertion (\ref{eq:Lorentz_boundary_Wightman}) immediately follows from the former one (\ref{eq:Lorentz_holomorphic_Wightman}).
\end{proof}

\begin{lemma}
\label{lem:Lorentz}
The Wightman function $W_\zeta (\xi-i \eta)$, (\ref{eq:W_zeta_holomorphic}), of a simple complex pole defined on $\xi-i \eta \in \mathbb{R}^4 - iV_+$ satisfies,
for all $\Lambda \in SO(3,1)^+$,
\begin{align}
        W_\zeta (\Lambda(\xi-i \eta)) = W_\zeta (\xi-i \eta).
\end{align}
\end{lemma}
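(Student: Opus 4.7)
The plan is to reduce the Lorentz invariance of $W_\zeta$ for complex $\zeta$ to the standard Lorentz invariance of a real-mass free Wightman function by analytic continuation in the parameter $\zeta$. Fix an arbitrary point $z := \xi - i\eta \in \mathbb{R}^4 - iV_+$. I would first view $W_\zeta(z)$ as a function of $\zeta \in \mathbb{C}\setminus(-\infty,0]$ (where $E_{\vec{k}} = \sqrt{\vec{k}^2 + \zeta}$ with $\operatorname{Re} E_{\vec{k}} > 0$ is holomorphic for every $\vec{k} \in \mathbb{R}^3$). Adapting the exponential-decrease estimate in the proof of Theorem~\ref{thm:holomorphy_simple_complex_poles}, one sees that, uniformly on compact subsets $K \Subset \mathbb{C}\setminus(-\infty,0]$, the integrand in (\ref{eq:W_zeta_holomorphic}) belongs to $\mathscr{S}(\mathbb{R}^3_{\vec{k}})$ and is jointly holomorphic in $\zeta$. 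Differentiation under the integral sign then yields that $\zeta \mapsto W_\zeta(z)$ is holomorphic on $\mathbb{C}\setminus(-\infty,0]$.

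Next, for any $\Lambda \in SO(3,1)^+$, the transformed point $\Lambda z = \Lambda\xi - i\Lambda\eta$ lies in the same tube $\mathbb{R}^4 - iV_+$, since $\Lambda V_+ = V_+$. Repeating the preceding argument with $z$ replaced by $\Lambda z$ shows that $\zeta \mapsto W_\zeta(\Lambda z)$ is also holomorphic on $\mathbb{C}\setminus(-\infty,0]$.

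The crucial input is then the real-mass case. For $\zeta = m^2 > 0$ real, $E_{\vec{k}} = \sqrt{\vec{k}^2 + m^2} = \omega_{\vec{k}}$ is the usual energy, and (\ref{eq:W_zeta_holomorphic}) coincides with the well-known holomorphic extension of the positive-frequency free two-point function of a scalar field of mass $m$,
\begin{equation*}
W_{m^2}(z) = \int \frac{d^4 k}{(2\pi)^3}\,\theta(k^0)\delta(k^2 - m^2)\,e^{-ik\cdot z},
\end{equation*}
obtained by performing the $k^0$ integration. The measure $\theta(k^0)\delta(k^2-m^2)\,d^4k$ is invariant under $SO(3,1)^+$, so $W_{m^2}(\Lambda z) = W_{m^2}(z)$ for every $\Lambda \in SO(3,1)^+$ and every $z$ in the tube. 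Thus the two holomorphic functions $\zeta \mapsto W_\zeta(z)$ and $\zeta \mapsto W_\zeta(\Lambda z)$ agree on the positive real axis $(0,\infty)$.

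Finally, since $\mathbb{C}\setminus(-\infty,0]$ is connected and the positive real axis has an accumulation point in it, the identity theorem for holomorphic functions of one complex variable forces $W_\zeta(\Lambda z) = W_\zeta(z)$ throughout $\mathbb{C}\setminus(-\infty,0]$, in particular for every complex-pole location $\zeta$ allowed by assumption (iv). As $z \in \mathbb{R}^4 - iV_+$ and $\Lambda \in SO(3,1)^+$ were arbitrary, this is the claim of the lemma. The only nontrivial step is the uniform rapid-decrease estimate needed for holomorphy in $\zeta$; all subsequent steps are routine once that is in place, and the key conceptual point is simply that the parameter $\zeta$ furnishes a complex-analytic bridge between the Gribov-type complex singularities and the ordinary Källén-Lehmann case.
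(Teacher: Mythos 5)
Your proposal is correct, but it takes a genuinely different route than the paper's proof. The paper proves Lemma~\ref{lem:Lorentz} \emph{directly}: it is a contour-deformation argument in momentum space. After a change of variables implementing the boost, the boosted Wightman function becomes an integral over a deformed contour $C_\beta$ in the complex $k_3'$-plane, and the paper shows the integrand is holomorphic in the region swept between $C_\beta$ and the real axis $C_{\beta=0}$ (because $E_{\vec{k}'}$ never vanishes there), allowing the contour to be slid back to $\mathbb{R}$ and recovering $W_\zeta(\xi-i\eta)$. Your argument instead treats $\zeta$ itself as a complex parameter: you show $\zeta \mapsto W_\zeta(z)$ and $\zeta \mapsto W_\zeta(\Lambda z)$ are holomorphic on $\mathbb{C}\setminus(-\infty,0]$, observe that they coincide on $(0,\infty)$ because $W_{m^2}$ is the standard real-mass free Wightman function whose Lorentz invariance is classical, and invoke the identity theorem. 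The paper's approach is more hands-on and self-contained (it never appeals to the real-mass case as a black box), and the contour-deformation picture makes the mechanism of invariance concrete; your approach is more conceptual and arguably slicker once holomorphy in $\zeta$ is established, reducing the whole claim to the K\"all\'en--Lehmann case. The one step you rightly flag as nontrivial --- uniform rapid decrease of the integrand on compact $\zeta$-sets, needed to differentiate under the integral sign --- does follow from the estimates in the proof of Theorem~\ref{thm:holomorphy_simple_complex_poles} because $E_{\vec{k}} = |\vec{k}| + \zeta/(2|\vec{k}|) + O(|\vec{k}|^{-3})$ uniformly for $\zeta$ in a compact subset of $\mathbb{C}\setminus(-\infty,0]$, so the dominating factor $e^{-(\eta^0-|\vec{\eta}|)|\vec{k}|}$ works uniformly; a full write-up should spell this out, but the idea is sound.
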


 \begin{figure}[t]
  \begin{center}
   \includegraphics[width=0.9\linewidth]{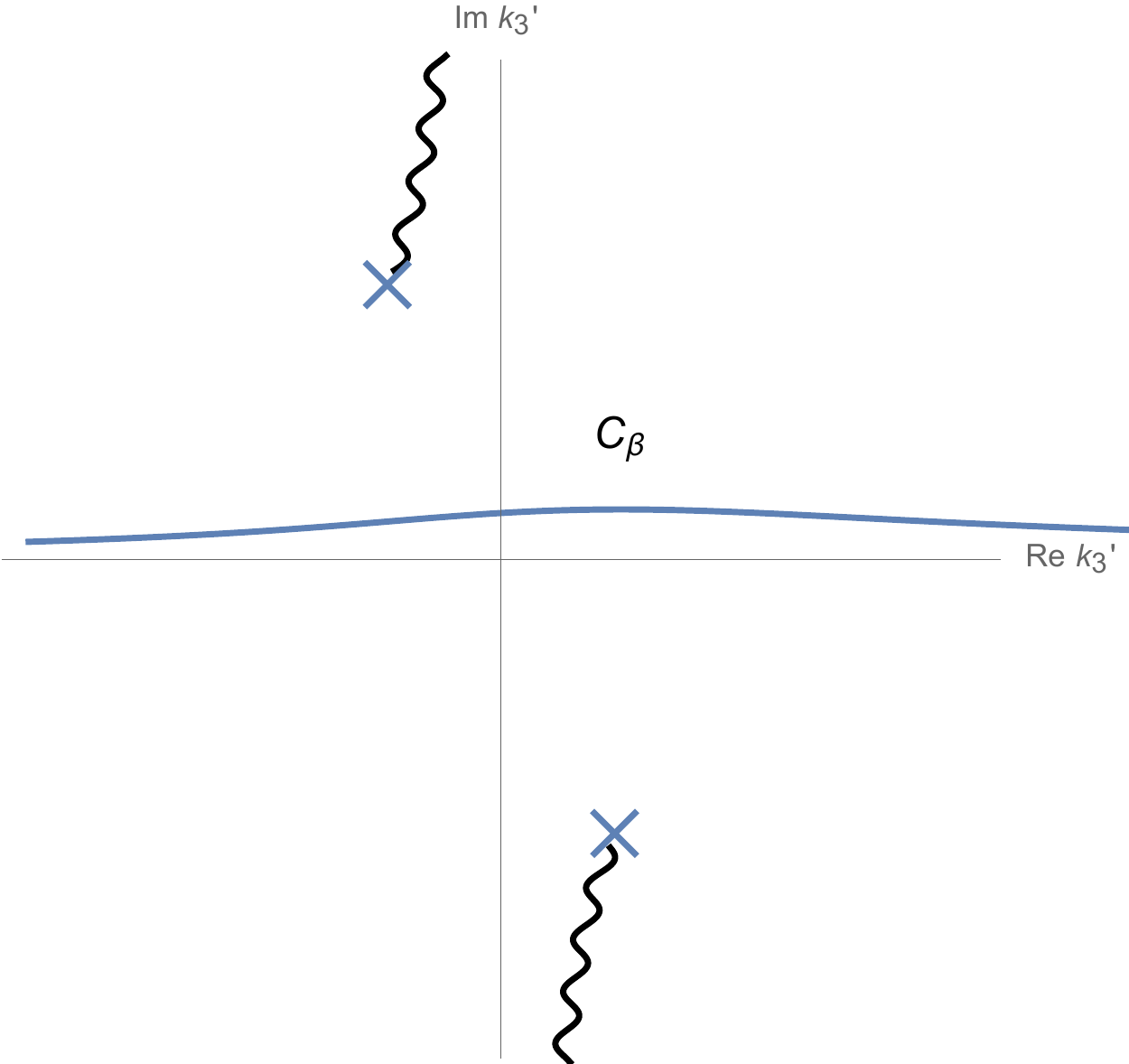}
  \end{center}
   \caption{Schematic picture of the contour $C_\beta$ in the $k_3{}'$ complex plane. The cross symbols represent the points at $E_{\vec{k}} = 0$. The integrand $\frac{1}{2 E_{\vec{k}} } e^{i\vec{k}' \cdot (\vec{\xi} - i \vec{\eta}) - i E'_{\vec{k}} (\xi^0{} - i \eta^0{}) }$ has singularities: branch points at these points and cuts represented as wavy lines. This integrand is holomorphic in the region bounded by the real axis $C_{\beta=0}$ and $C_\beta$.}
    \label{fig:Delta_Function_Contour.pdf}
\end{figure}

\begin{proof}

The spatial rotational symmetry is manifest by the expression (\ref{eq:W_zeta_holomorphic}). Therefore, it suffices to prove the invariance under the boost along $\xi^3$:
\begin{align}
\xi &= (\xi^0, \xi^1, \xi^2, \xi^3) \notag \\
&\rightarrow \xi' := \Lambda \xi = (\gamma(\xi^0 - \beta \xi^3), \xi^1, \xi^2, \gamma (\xi^3 - \beta \xi^0 )).
\end{align}
As mentioned in \cite{Nakanishi72b}, one can show the invariance under the boost by a contour deformation.

Under this transformation, $W_\zeta (\xi-i \eta)$ reads
\begin{align}
W_{\zeta} (\Lambda(\xi - i \eta)) &= \int \frac{d^{3} \vec{k}}{(2 \pi)^{3}} \frac{1}{2 E_{\vec{k}} } e^{i\vec{k} \cdot (\vec{\xi}' - i \vec{\eta}') - i E_{\vec{k}} (\xi^0{}' - i \eta^0{}') } \notag \\
&= \int \frac{d^{3} \vec{k}}{(2 \pi)^{3}} \frac{1}{2 E_{\vec{k}} } e^{i\vec{k}' \cdot (\vec{\xi} - i \vec{\eta}) - i E'_{\vec{k}} (\xi^0{} - i \eta^0{}) } ,
\end{align}
where we have defined $E_{\vec{k}} := \sqrt{\vec{k}^2 + \zeta}$ of the principal branch ($\operatorname{Re} E_{\vec{p}} > 0$), and
\begin{align}
E'_{\vec{k}} &:= \gamma( E_{\vec{k}} + \beta k_3 ), \notag \\ 
k_3' &:= \gamma (k_3 + \beta E_{\vec{k}} ),~~ \vec{k}{}' := (k_1,k_2,k_3{}').
\end{align}
Note that  a simple computation and $\operatorname{Re} E_{\vec{k}} > 0$ yield
\begin{align}
      E_{\vec{k}'} = \sqrt{\gamma^2 ( E_{\vec{k}} + \beta k_3 )^2 } = \gamma( E_{\vec{k}} + \beta k_3 ),
\end{align}
from which we have
\begin{align}
\frac{dk_3{}'}{dk_3} E_{\vec{k}} = E'_{\vec{k}} = E_{\vec{k}'}
\end{align}

By changing the variable from $\vec{k}$ to $\vec{k}'$, we obtain
\begin{align}
W_{\zeta} (\Lambda(\xi - i \eta)) &= \int_{\mathbb{R}^2 \times C_\beta} \frac{d^{3} \vec{k}'}{(2 \pi)^{3}} \frac{1}{2 E_{\vec{k}'} } e^{i\vec{k}' \cdot (\vec{\xi} - i \vec{\eta}) - i E_{\vec{k}'} (\xi^0 - i \eta^0) } ,
\end{align}
where the contour $C_\beta$ is defined by
\begin{align}
C_{\beta} := \{k_3{}' = \gamma (k_3 + \beta E_{\vec{k}} ) ;~ k_3 \in \textbf{R} \}.
\end{align}
See Fig. \ref{fig:Delta_Function_Contour.pdf}. 
Note that, for all $|\beta| <1$, $E_{\vec{k}'} = \gamma( E_{\vec{k}} + \beta k_3 )$ does not vanish on the contour $k_3{}' \in C_\beta$, namely $k_3 \in \textbf{R}$. Since the family of the contours $\{ C_{\beta'} \}_{0<\beta'<\beta}$ scans the region bounded by $C_{\beta = 0}$ and $C_\beta$, the integrand $\frac{1}{2 E_{\vec{k}'} } e^{i\vec{k}' \cdot (\vec{\xi} - i \vec{\eta}) - i E_{\vec{k}'} (\xi^0 - i \eta^0) } $ is holomorphic in the region bounded by $C_{\beta = 0}$ and $C_\beta$. Therefore, the holomorphy allows us to deform the contour $C_\beta$ into $C_{\beta = 0}$, i.e. the real axis, and finally
\begin{align}
W_{\zeta} (\Lambda(\xi - i \eta)) &= \int_{\mathbb{R}^2 \times C_\beta} \frac{d^{3} \vec{k}'}{(2 \pi)^{3}} \frac{1}{2 E_{\vec{k}'} } e^{i\vec{k}' \cdot (\vec{\xi} - i \vec{\eta}) - i E_{\vec{k}'} (\xi^0 - i \eta^0) } \notag \\
&= \int \frac{d^{3} \vec{k}'}{(2 \pi)^{3}} \frac{1}{2 E_{\vec{k}'} } e^{i\vec{k}' \cdot (\vec{\xi} - i \vec{\eta}) - i E_{\vec{k}'} (\xi^0 - i \eta^0) } \notag \\
&= W_{\zeta} (\xi - i \eta),
\end{align}
which establishes the Lorentz invariance.
\end{proof}

So far, we have verified the Lorentz invariance explicitly.
Because of importance of this assertion, we will prove it from another point of view.
The Lorentz invariance follows from a stronger symmetry, the proper complex Lorentz symmetry:

\begin{theorem}
\label{thm:complex_Lorentz}
Let $W_1(\xi - i \eta)$ be a holomorphic function in the tube $\mathbb{R}^4 - iV_+$ and invariant under the Euclidean rotation group $SO(4)$ (within the domain of definition of $W_1(\xi - i \eta)$)\footnote{Note that the action of $R \in SO(4)$ upon $(\xi - i \eta)$ is represented as $(\eta^0 + i \xi^0, \vec{\xi} - i \vec{\eta}) \mapsto R (\eta^0 + i \xi^0, \vec{\xi} - i \vec{\eta})$}. Then, $W_1(\xi - i \eta)$ is invariant under the proper complex Lorentz group $L_+ (\mathbb{C})$, including the restricted Lorentz group, namely, for any $\Lambda \in L_+(\mathbb{C})$,
\begin{align}
    z,~\Lambda z \in \mathbb{R}^4 - iV_+ ~~\Rightarrow~~ W_1 (\Lambda z) = W_1 (z). \label{eq:complex_Lorentz_invariance},
\end{align}
where $L_+ (\mathbb{C}) := \{ \Lambda \in \mathbb{C}^{4 \times 4}~;~ \Lambda^T G \Lambda = G \}$ with the metric $G = \operatorname{diag} (1,-1,-1,-1)$.
In particular, the holomorphic Wightman function of Theorem \ref{thm:holomorphy_general} satisfies (\ref{eq:complex_Lorentz_invariance}).
\end{theorem}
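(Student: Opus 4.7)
The plan is to adapt the classical Bargmann--Hall--Wightman (BHW) argument to the present single--variable setting. The central observation is that $L_+(\mathbb{C}) \cong SO(4,\mathbb{C})$ is a connected complex Lie group of complex dimension $6$, and that the real Euclidean group $SO(4)$ appearing in the hypothesis is a real form of $L_+(\mathbb{C})$ of real dimension $6$ (through the isomorphism that identifies the Euclidean time with $i$ times the Lorentzian time). Thus the $SO(4)$--invariance hypothesis is invariance along a totally real submanifold whose real dimension matches the complex dimension of the ambient group.

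Concretely, fix a base point $z_0 \in \mathbb{R}^4 - iV_+$ and define the open set
\begin{align*}
    U_{z_0} := \{\Lambda \in L_+(\mathbb{C}) \,;\, \Lambda z_0 \in \mathbb{R}^4 - iV_+\},
\end{align*}
which contains the identity. Let $U_{z_0}^0$ be its connected component of the identity. Since $\Lambda z_0$ depends polynomially (hence holomorphically) on the matrix entries of $\Lambda$, and $W_1$ is holomorphic on the tube by hypothesis, the composition $\phi_{z_0}(\Lambda) := W_1(\Lambda z_0) - W_1(z_0)$ is holomorphic on $U_{z_0}^0$. The $SO(4)$--invariance assumption, together with the identification above, says that $\phi_{z_0}$ vanishes on the intersection of $U_{z_0}^0$ with the real submanifold $SO(4) \subset L_+(\mathbb{C})$.

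Next I would invoke the standard identity principle for holomorphic functions of several complex variables: a holomorphic function on a connected open subset of a complex manifold that vanishes on a totally real submanifold of maximal real dimension must vanish identically. Since $SO(4) \cap U_{z_0}^0$ is such a submanifold (real dimension $6$ equals the complex dimension of $L_+(\mathbb{C})$), $\phi_{z_0}$ vanishes throughout $U_{z_0}^0$. This gives $W_1(\Lambda z_0) = W_1(z_0)$ for every $\Lambda$ in the connected component of the identity in $U_{z_0}$.

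The main obstacle is showing that the full $U_{z_0}$ equals $U_{z_0}^0$, i.e.\ that whenever both $z_0$ and $\Lambda z_0$ lie in the tube there exists a path $\Lambda_t$ in $L_+(\mathbb{C})$ from $e$ to $\Lambda$ such that $\Lambda_t z_0$ stays in the tube for all $t$. This connectedness issue is precisely the classical ingredient that makes the BHW theorem subtle; it can be handled by using the well--known description of the orbit of a tube point under $L_+(\mathbb{C})$ and the connectedness of the extended tube, as developed in \cite{Streater:1989vi}. Once this is established, together with the conclusion of the previous paragraph, equation (\ref{eq:complex_Lorentz_invariance}) follows, and since the restricted real Lorentz group $SO(3,1)^+$ is contained in $L_+(\mathbb{C})$, Theorem~\ref{thm:Lorentz} is recovered as a corollary on the boundary.
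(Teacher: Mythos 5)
Your argument for the main assertion follows essentially the same route as the paper's: both use the identity theorem for holomorphic functions to promote the $SO(4)$-invariance on the (totally real, maximal-dimensional) real form of $L_+(\mathbb{C})$ to invariance on a complex open set, and both then invoke the Bargmann–Hall–Wightman connectedness argument of \cite[Theorem 2-11 and its Lemma]{Streater:1989vi} to go from local to global invariance. Your formulation, via the map $\Lambda \mapsto W_1(\Lambda z_0) - W_1(z_0)$ on $U_{z_0}^0$ and the identity principle for vanishing on a totally real submanifold, is a slightly more explicit phrasing of the same local step.

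One thing you leave out is the theorem's concluding sentence (``In particular, the holomorphic Wightman function of Theorem~\ref{thm:holomorphy_general} satisfies (\ref{eq:complex_Lorentz_invariance})''). To apply your argument to the reconstructed $W_1(\xi - i\eta)$ one must first verify the $SO(4)$-invariance hypothesis for the analytic continuation into the tube, not just for the Schwinger function at Euclidean points. The paper handles this in its final paragraph by noting that the $SO(4)$ generators $\hat{M}_{\mu\nu}$ annihilate $W_1$ throughout its domain by the identity theorem, so the invariance of $S_1$ on the Euclidean slice propagates to the full tube. Without some such remark your proof establishes the abstract implication but does not yet cover the particular Wightman function the theorem is actually about.
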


 \begin{figure}[t]
  \begin{center}
   \includegraphics[width= \linewidth]{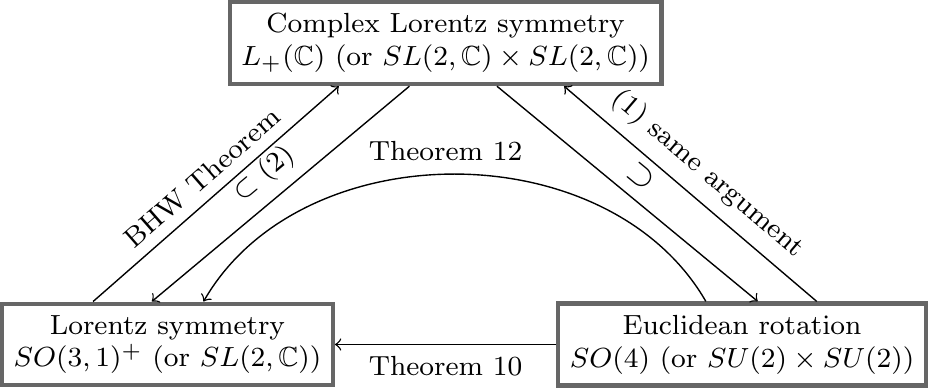}
  \end{center}
   \caption{A sketch of relations among the well-known Bargmann-Hall-Wightman (BHW) theorem, Theorem \ref{thm:Lorentz}, and Theorem \ref{thm:complex_Lorentz}. Theorem \ref{thm:complex_Lorentz} is a composition of (1) the same argument as the BHW theorem from Euclidean to complex Lorentz symmtry and (2) the restriction from complex Lorentz symmtry to Lorentz symmetry.}
    \label{fig:complex_lorentz.pdf}
\end{figure}

\begin{proof}
Since the Euclidean rotation gives a real environment of the complex Lorentz group, the assumption of the theorem and the identity theorem for holomorphic functions guarantee that, for every $z \in \mathbb{R}^4 - iV_+$, there exists a complex neighborhood of the identity element of the complex Lorentz group $L_+ (\mathbb{C})$ under which the Wightman function $W_1(z)$ is invariant.

Based on that, the same argument for proving the Bargmann-Hall-Wightman theorem (Theorem 2-11 and its Lemma of \cite{Streater:1989vi}) can be applied to this case, and therefore the former assertion (\ref{eq:complex_Lorentz_invariance}) holds.

An analytic continuation of a $SO(4)$ invariant function is invariant under $SO(4)$ within its domain of definition, since $\hat{M}_{\mu \nu} W_1 (z)$ vanishes in the domain due to the identity theorem, where $\hat{M}_{\mu \nu}$ is the $SO(4)$ symmetry generators. Thus, the latter assertion follows from the former one.
\end{proof}
Let us add some remarks.
\begin{enumerate}
\renewcommand{\labelenumi}{(\alph{enumi})}
    \item Unlike the other theorems, an generalization of this argument to $D \neq 4$ is nontrivial because of the usage of the same argument as the Bargmann-Hall-Wightman theorem.
    \item Using the Bargmann-Hall-Wightman theorem, we can prove the complex Lorentz invariance also from Theorem \ref{thm:Lorentz}.
    \item Relations among the well-known Bargmann-Hall-Wightman theorem, Theorem \ref{thm:Lorentz}, and Theorem \ref{thm:complex_Lorentz} are depicted in Fig.~\ref{fig:complex_lorentz.pdf}.
    \item As is well known, this theorem guarantees a single-valued analytic continuation of the Wightman function to the extended tube, $\mathscr{T}' := L_+(\mathbb{C}) (\mathbb{R}^4 - iV_+) = \{ \Lambda z \in \mathbb{C}^4 ~;~ \exists (z, \Lambda) \in (\mathbb{R}^4 - iV_+) \times L_+ (\mathbb{C}) \}$, which includes the Jost points $\mathbb{R}^4 \cap \mathscr{T}'$. Here, the \textit{Jost points} are just spacelike points: $\mathbb{R}^4 \cap \mathscr{T}' = \{ (\xi^0,\vec{\xi}) \in \mathbb{R}^4 ~;~ (\xi^0)^2 - \vec{\xi}^2 < 0\}$. Note that the proper complex Lorentz group includes $-1 \in L_+ (\mathbb{C})$, from which the equality $W_1(z) = W_1 (-z)$ follows.
    \item The reconstruction is based on the identification of (\ref{eq:W_S_connection_A}): $W_1 (-i\xi_4,\vec{\xi}) = S_1(\vec{\xi}, \xi_4)$. However, we have reconstructed the Wightman function using only the Schwinger function with positive imaginary time $\xi_4 > 0$. It should be possible to use the Schwinger function with negative imaginary-time $\xi_4 < 0$ for the reconstruction. The holomorphy in the extended tube together with the invariance under the proper complex Lorentz group, especially $-1 \in L_+ (\mathbb{C})$, guarantees the consistency that the reconstruction from $\xi_4 < 0$ would give the same holomorphic Wightman function as that from $\xi_4 > 0$.
\end{enumerate}

\subsubsection{Locality} \label{sec:locality}

Finally, let us comment on locality. Some argue that complex singularities are associated with non-locality.  One might claim that the non-locality of the Yang-Mills theory in a gauge-fixed picture is rather ``natural'' due to the Gribov-Singer obstruction, see \cite{Gribov78,Singer78,Maas13} and \cite{Zwanziger89,Baulieu-etal}. However, we argue that complex singularities themselves do not necessarily lead to non-locality.

For example, the problem of locality has been discussed in \cite{Stingl85,Stingl96,HKRSW90} (see also Sec.~V~A), in which they assert that complex poles describe short-lived excitations, and that the locality is broken in short range at the level of propagators but the corresponding $S$-matrix remains causal.
However, as we have mentioned above, this interpretation is different from our results.

To the best of our knowledge, the only axiomatic way to impose locality is the spacelike commutativity.
To argue that complex singularities themselves do not necessarily yield non-locality, it suffices to prove the spacelike commutativity at the level of two-point functions, because existence of complex singularities is a property of propagators.

\begin{theorem} \label{thm:spacelike_commutativity}
Let $S_1 (p) = D(p^2)$ be a two-point Schwinger function with complex singularities satisfying (i) -- (v).
By Theorem \ref{thm:holomorphy_general} and \ref{thm:complex_boundary_value}, $W_1 ( - i \xi_4 , \vec{\xi}) = S_1 (\vec{\xi}, \xi_4) ~~(\xi_4 > 0)$ has the analytic continuation $W_1(\xi-i \eta)$ to the tube $\mathbb{R}^4 - iV_+$ and there exists the boundary value as a distribution $W_1(\xi) = \lim_{\substack{\eta \rightarrow 0 \\ \eta \in V_+}} W_1(\xi-i \eta) \in \mathscr{D}'(\mathbb{R}^4)$.
Then, the boundary value $W_1(\xi)$ satisfies the spacelike commutativity:  $W_1(\xi) = W_1(-\xi)$ for spacelike $\xi$.
\end{theorem}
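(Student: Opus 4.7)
The plan is to deduce spacelike commutativity from the complex Lorentz invariance established in Theorem~\ref{thm:complex_Lorentz}, following the standard Jost-points argument from axiomatic QFT. The key observation is that $-\mathbb{1} \in L_+(\mathbb{C})$, so complex Lorentz invariance already encodes a discrete $\mathbb{Z}_2$ symmetry $W_1(z) = W_1(-z)$ on the domain where both sides make sense holomorphically.

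First I would invoke Theorem~\ref{thm:complex_Lorentz} together with the usual analytic-continuation-along-orbits construction to extend $W_1(\xi - i\eta)$ from the tube $\mathbb{R}^4 - iV_+$ to the extended tube
\begin{equation*}
    \mathscr{T}' := L_+(\mathbb{C})\,(\mathbb{R}^4 - iV_+),
\end{equation*}
as a single-valued holomorphic function (this is the content of remark (d) following Theorem~\ref{thm:complex_Lorentz}). Since $-\mathbb{1} \in L_+(\mathbb{C})$, the complex Lorentz invariance (\ref{eq:complex_Lorentz_invariance}) yields $W_1(-z) = W_1(z)$ for every $z \in \mathscr{T}'$. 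The crucial classical fact is that the extended tube intersects $\mathbb{R}^4$ precisely in the set of Jost points, i.e.\ the spacelike real points $\{\xi \in \mathbb{R}^4 : (\xi^0)^2 - \vec{\xi}^{\,2} < 0\}$.

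Next I would argue that on the Jost set $W_1$ is actually a smooth (in fact real-analytic) function, not merely a distribution. Indeed, any Jost point $\xi_0$ is an interior point of $\mathscr{T}'$, so $W_1$ is holomorphic in an open complex neighborhood of $\xi_0$, and therefore its restriction to a real neighborhood of $\xi_0$ is a real-analytic function. Moreover, this holomorphic extension agrees with the distributional boundary value $W_1 \in \mathscr{D}'(\mathbb{R}^4)$ from Theorem~\ref{thm:complex_boundary_value} in a neighborhood of $\xi_0$: the boundary value from the tube $\mathbb{R}^4 - iV_+$ picks out the same function as the local analytic representative, since both are obtained as the unique holomorphic extension. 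Hence, in the open spacelike region, $W_1$ is a bona fide real-analytic function, and the identity $W_1(\xi) = W_1(-\xi)$ descends from $\mathscr{T}'$ to this region by evaluation.

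The main obstacle is the last step: justifying that the distributional boundary value $W_1 \in \mathscr{D}'(\mathbb{R}^4)$ genuinely coincides with the holomorphic function on a neighborhood of each Jost point, so that the pointwise identity $W_1(\xi) = W_1(-\xi)$ on spacelike $\xi$ is meaningful and follows directly. Because temperedness fails (Theorem~\ref{thm:nontempered}), one cannot quote the standard Streater–Wightman reconstruction verbatim; however, the decomposition (\ref{eq:hol_Wightman_general_repr}) shows that each piece $W_{\sigma^2}(\xi - i\eta)$ and $W_\zeta(\xi - i\eta)$ is holomorphic in the tube, and the Cauchy-type contour representation in the proof of Theorem~\ref{thm:holomorphy_general} allows the contours $\gamma_\ell, \Gamma_k$ and the spectral integrand to be pushed through the argument of Lemma~\ref{lem:Lorentz}. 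Once this local-identification obstacle is handled, the conclusion $W_1(\xi) = W_1(-\xi)$ on the spacelike region is immediate from $W_1(z) = W_1(-z)$ on $\mathscr{T}'$.
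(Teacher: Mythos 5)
Your proposal is correct in spirit, but it takes the alternative route that the paper only mentions as a remark, not the one it actually uses. The paper's own proof is much shorter and more elementary: it invokes Theorem~\ref{thm:Lorentz} (the \emph{real} restricted Lorentz invariance of the boundary-value distribution, $W_1(f) = W_1(f_\Lambda)$ for $\Lambda \in SO(3,1)^+$) together with the classical fact that for any spacelike $\xi$ there already exists a \emph{real} $\Lambda \in SO(3,1)^+$ with $\Lambda\xi = -\xi$ --- the single-sheeted hyperboloid through $\xi$ is a connected $SO(3,1)^+$-orbit containing $-\xi$. This gives $W_1(\xi) = W_1(-\xi)$ on the spacelike region with no excursion into the extended tube, no appeal to Jost points, no use of $-\mathbb{1}\in L_+(\mathbb{C})$, and, crucially, no need to identify the distributional boundary value with a locally holomorphic function. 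Your route via Theorem~\ref{thm:complex_Lorentz}, $-\mathbb{1}\in L_+(\mathbb{C})$, and the extended tube $\mathscr{T}'$ is precisely the alternative acknowledged in the paper's remark following the theorem; it is sound, and the ``obstacle'' you flag (agreement of $W_1 \in \mathscr{D}'(\mathbb{R}^4)$ with the holomorphic extension at Jost points) is genuine but routine: once $W_1$ is holomorphic in a full complex neighborhood of a Jost point, the convergence $W_1(\xi - i\eta)\to W_1(\xi)$ is uniform on compacts, so the distributional limit coincides with the pointwise one there. That said, your closing appeal to ``pushing contours through Lemma~\ref{lem:Lorentz}'' is not the relevant mechanism for resolving that gap and can be dropped; the identification is a general fact about boundary values of functions holomorphic on an open complex neighborhood of a real domain. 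In short: correct, but the paper buys the result far more cheaply via the real Lorentz group, which sidesteps all of the boundary-value bookkeeping that your approach forces you to confront.
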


\begin{proof}
For a spacelike point $\xi$, there exists an element of the restricted Lorentz group $\Lambda$ such that $\Lambda \xi = - \xi$.
Therefore, the spacelike commutativity $W_1(\xi) = W_1(-\xi)$ immediately follows from Theorem \ref{thm:Lorentz}.
\end{proof}

Note that the spacelike commutativity at this level is also an immediate consequence of the holomorphy in the extended tube and the invariance under the (proper) complex Lorentz group (See Remark (d) of Theorem \ref{thm:complex_Lorentz}). 

One might argue that, e.g. from the Jost-Lehmann-Dyson (JLD) representation \cite{JLDrepr}, complex singularities could lead to violation of the local spacelike commutativity.
Nevertheless, the Wightman function with complex singularities breaks temperedness (Theorem \ref{thm:nontempered}). This non-temperedness enables a theory to evade the restriction of the theorems like the JLD representation that assumes existence of Fourier transform of Wightman functions.
Hence, there is no contradiction here.

In conclusion, \textit{even in the presence of complex singularities, the spacelike commutativity at the level of two-point functions remains intact.} Therefore, \textit{complex singularities themselves not necessarily lead to non-locality}.

\section{Interpretation in a state space with an indefinite metric} \label{sec:interpretation}
We have discussed analytic aspects of complex singularities.
In this section, we consider a possible kinematic structure yielding complex singularities, i.e., a realization of complex singularities in a quantum theory.
Since abandoning the positivity of the state-space metric is very common in Lorentz covariant gauge-fixed descriptions of gauge theories, we consider a quantum theory in a state space with an indefinite metric.

In Sec.~\ref{sec:complex_spectra}, we argue that the natural candidates providing complex singularities in an indefinite-metric state space are zero-norm pairs of eigenstates with complex eigenvalues.
In Sec~\ref{sec:Lee-Wick}, we present the Lee-Wick model as an example of QFT with complex poles.
Finally, in Sec.~\ref{sec:BRST-complex}, we discuss complex poles in the BRST formalism in a heuristic way.

\subsection{Complex singularities and complex spectra} \label{sec:complex_spectra}

An important observation is that complex energy spectrum can appear in an indefinite metric state space even if the Hamiltonian is (pseudo-)hermitian. For a review on indefinite-metric quantum field theories, see e.g. \cite{Nakanishi72}.

Beforehand, let us introduce some notions on an indefinite-metric state space.
Note that the completeness of eigenstates of a hermitian operator does not always hold even in a finite dimensional state space with an indefinite metric.
Instead of simple eigenstates, the set of ``\textit{generalized eigenstates}'' $\{ \ket{E^{0}},~\ket{E^{1}},~\cdots,  \ket{E^{n-1}} \}$
that are defined to be elements of sequences: $ (H - E) \ket{E^{0}} = E \ket{E^{1}},~ (H - E) \ket{E^{1}} = E \ket{E^{2}},~\cdots~,~ (H - E) \ket{E^{n-1}} = 0 $ spans the full state space in general, where $H$ is a hermitian operator and the value $E$ of such a sequence $\{ \ket{E^{0}},~\ket{E^{1}},~\cdots,  \ket{E^{n-1}} \}$ is called \textit{generalized eigenvalue}.
This follows from the standard Jordan decomposition. A generalized eigenstate $\ket{n}$ is said to be of \textit{order} $M$ if and only if both $(H - E_n)^{M} \ket{n} = 0$ and $(H - E_n)^{M-1} \ket{n} \neq 0$ hold.
For example, $\ket{E^{0}}$ of a sequence $\{ \ket{E^{0}},~\ket{E^{1}},~\cdots,  \ket{E^{n-1}} \}$ is a generalized eigenstate of order $n$.

For a while, we consider $0+1$ dimensional case in which a field $\phi(t)$ is regarded as an operator-valued function whose domain contains at least the vacuum $\ket{0}$, for simplicity. Alternatively, one could consider a situation in which field operators are smeared in spatial directions.

\begin{widetext}
We begin with the necessity of complex spectra for existence of complex singularities.
\begin{claim}
\label{claim:1}
Let us assume 
\begin{enumerate}
    \item completeness of (denumerable) generalized eigenstates $\ket{n}$ of the Hamiltonian $H$: $1 = \sum_{n,n'} \eta^{-1}_{n,n'}  \ket{n} \bra{n'}$, where $\eta_{n,n'} = \braket{n|n'}$ is the non-degenerate metric,
    \item translational covariance: $\phi(t) = e^{iHt} \phi(0) e^{-iHt}$,
    \item real-valuedness of generalized eigenvalues $E_n$ of the Hamiltonian $H$.
\end{enumerate}
Moreover, as technical assumptions, we assume
\begin{enumerate}
\setcounter{enumi}{3}
    \item existence of an upper bound $M$ on the orders of generalized eigenstates\footnote{
    Note that all states that are not generalized eigenstates of finite order can be seen as ``generalized eigenstates of infinite order''.
    The notion ``generalized eigenstates of infinite order'' is thus irrelevant to the spectral decomposition. Therefore, it would be appropriate to assume the upper bound.
    }, finiteness of a sum $\sum_{n'} \eta^{-1}_{n,n'}$ for any $\ket{n}$ in the complete system, and the absolute convergence of the sum,
\begin{align}
    \sum_{n,n'} \eta^{-1}_{n,n'} \sum_{k = 0}^{M(n)-1} e^{- i E_n t} \frac{(-it)^k}{k!}\bra{0} \phi(0) (H-E_n)^k \ket{n} \bra{n'} \phi(0) \ket{0},
\end{align}
 which actually equals $\braket{0|\phi(t) \phi(0)|0}$, where $E_n$ is the generalized eigenvalue of $\ket{n}$, $M(n)$ is the order, and $\ket{0}$ is the vacuum state satisfying $H\ket{0} = 0$.
\end{enumerate}
Then, the Wightman function $\braket{0|\phi(t) \phi(0)|0}$ can be regarded as a tempered distribution.
\end{claim}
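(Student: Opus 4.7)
The plan is to establish that $\braket{0|\phi(t)\phi(0)|0}$ is a continuous function of $t$ of at most polynomial growth, which is sufficient to conclude it defines a tempered distribution. The strategy proceeds in three conceptual steps: expand the Wightman function via the generalized-eigenstate completeness, make the action of $e^{-iHt}$ on each Jordan block explicit to extract the $t$-dependence, and then use reality of the spectrum to rule out any exponential growth.

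First I would combine assumption (ii), $\phi(t)=e^{iHt}\phi(0)e^{-iHt}$, with $H\ket{0}=0$ to rewrite $\braket{0|\phi(t)\phi(0)|0}=\braket{0|\phi(0)e^{-iHt}\phi(0)|0}$, and then insert the completeness relation from (i) between $e^{-iHt}$ and $\phi(0)$. For each generalized eigenstate $\ket{n}$ of order $M(n)$, the nilpotency $(H-E_n)^{M(n)}\ket{n}=0$ truncates the exponential series to
\[
e^{-iHt}\ket{n}=e^{-iE_n t}\sum_{k=0}^{M(n)-1}\frac{(-it)^k}{k!}(H-E_n)^k\ket{n},
\]
which is precisely the expansion appearing in assumption (iv). Substituting this back reproduces the explicit series of (iv), so the only remaining task is to estimate it.

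Next I would bound the series term by term. Assumption (iii) gives $|e^{-iE_n t}|=1$, so the $t$-dependence that survives is purely polynomial; assumption (iv) provides a uniform upper bound $M(n)\le M$ on the orders, so $|(-it)^k/k!|\le (1+|t|)^{M-1}/k!$ for every $k\le M(n)-1$. Factoring $(1+|t|)^{M-1}$ out of the sum and invoking the absolute-convergence clause of (iv) at (say) $t=1$ yields a finite constant $C$ with $|\braket{0|\phi(t)\phi(0)|0}|\le C(1+|t|)^{M-1}$. Because this bound is locally uniform in $t$, the series converges uniformly on compact subsets of $\mathbb{R}$; each partial sum is continuous in $t$, hence the limit is a continuous function of at most polynomial growth, which is automatically a tempered distribution.

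The main obstacle is the Jordan-block structure intrinsic to indefinite-metric settings: each generalized eigenstate of order $M(n)$ contributes polynomial factors $(-it)^k/k!$ on top of the oscillation $e^{-iE_n t}$, and if $M(n)$ were unbounded across the complete system, these factors could conceivably accumulate into super-polynomial growth. The uniform order bound and absolute-convergence requirement in assumption (iv) are the technical devices neutralizing this, while assumption (iii) is the decisive physical input---a single complex-conjugate pair of generalized eigenvalues would produce $e^{\pm|\operatorname{Im}E_n|\,t}$ and destroy polynomial boundedness, which is precisely the obstruction motivating the contrapositive link between complex singularities and complex spectra developed in the subsequent discussion.
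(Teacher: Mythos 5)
Your proof is correct, and it takes a somewhat different route to the same conclusion. Both you and the paper start identically: expand $\braket{0|\phi(t)\phi(0)|0}$ via assumptions (i)--(ii), truncate the exponential on each Jordan block using $(H-E_n)^{M(n)}\ket{n}=0$, and use (iii) so that $|e^{-iE_n t}|=1$. Where you diverge is in how temperedness is extracted. You argue \emph{pointwise}: bound $|(-it)^k/k!|\leq (1+|t|)^{M-1}/k!$ uniformly for $k\leq M-1$, factor out $(1+|t|)^{M-1}$, appeal to the absolute-convergence clause (evaluated at $t=1$, say) to get a $t$-independent constant, and then observe that a uniformly (on compacts) convergent series of continuous functions with a global polynomial bound defines a continuous function of polynomial growth, hence a tempered distribution. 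The paper instead works \emph{at the distributional level}: it smears against $f\in\mathscr{S}(\mathbb{R})$, uses $\int dt\, f(t)\, e^{-iE_n t}\,(-it)^k/k! = \frac{1}{k!}\partial_\omega^k\tilde{f}(\omega)\big|_{\omega=E_n}$, and bounds $|W(f)|$ by a constant times $\sup_{\omega,\,k<M}|\partial_\omega^k\tilde{f}(\omega)|$, which is a Schwartz seminorm of $f$. Your argument is more elementary (no Fourier transform) and yields the slightly stronger output that the Wightman function is an actual polynomially bounded \emph{continuous function}, not merely a tempered distribution; the paper's route is more directly in the language of the subsequent theorems (which treat Wightman functions as distributions from the outset). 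Both proofs rely on the same two ingredients from (iv) -- the uniform order bound $M$ and the absolute convergence of the coefficient sum -- and on (iii) as the decisive input that kills any exponential $t$-dependence.
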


\begin{proof}[Derivation]
Since $\ket{n}$ is a generalized eigenstate of order $M(n)$, $(H - E_n)^{M(n)} \ket{n} = 0$ and $(H - E_n)^{M(n)-1} \ket{n} \neq 0$ hold, which implies
\begin{align}
    e^{- i (H - E_n) t} \ket{n} = \sum_{k=0}^{M(n)-1} \frac{(-it)^k}{k!} (H-E_n)^k \ket{n}.
\end{align}

By the assumptions (i) and (ii), we have
\begin{align}
    \braket{0|\phi(t) \phi(0)|0} &= \sum_{n,n'} \eta^{-1}_{n,n'} e^{- i E_n t} \bra{0} \phi(0) e^{- i (H - E_n) t} \ket{n} \bra{n'} \phi(0) \ket{0} \notag \\
    &= \sum_{n,n'} \eta^{-1}_{n,n'} \sum_{k = 0}^{M(n)-1} e^{- i E_n t} \frac{(-it)^k}{k!}\bra{0} \phi(0) (H-E_n)^k \ket{n} \bra{n'} \phi(0) \ket{0}.
\end{align}
Note that the generalized eigenvalue $E_n$ is real by the assumption (iii).

For any test function $f(t) \in \mathscr{S} (\mathbb{R})$, we obtain
\begin{align}
    \left| \int dt ~ f(t) \braket{0|\phi(t) \phi(0)|0} \right| &= \left| \sum_{n,n'} \eta^{-1}_{n,n'} \sum_{k = 0}^{M(n)-1}  \frac{1}{k!} \left. \frac{\partial^k \tilde{f}}{\partial \omega^k} \right|_{\omega = E_n}  \bra{0} \phi(0) (H-E_n)^k \ket{n} \bra{n'} \phi(0) \ket{0} \right| \notag \\
    &\leq \left(  \sum_{n,n'} | \eta^{-1}_{n,n'} | \sum_{k = 0}^{M(n)-1} \left| \frac{1}{k!}   \bra{0} \phi(0) (H-E_n)^k \ket{n} \bra{n'} \phi(0) \ket{0} \right| \right) \notag \\
   &~~~~~~~ \times \left( \sup_{\omega, k < M} \left| \frac{\partial^k \tilde{f}}{\partial \omega^k} \right|  \right) ~\leq~ \operatorname{const.} \left( \sup_{\omega, k < M} \left| \frac{\partial^k \tilde{f}}{\partial \omega^k} \right|  \right)
\end{align}
where $\tilde{f}(\omega) = \int dt e^{-i\omega t} f(t)$ is the Fourier transform of $f(t)$ and we have used the assumptions (iv). This inequality proves $ \braket{0|\phi(t) \phi(0)|0} \in \mathscr{S}' (\mathbb{R})$.
\end{proof}

\end{widetext}

From this claim, the non-temperedness (Theorem \ref{thm:nontempered}) is incompatible with the reality of the spectrum.
Thus, complex spectra should be allowed for complex singularities to appear.
We call eigenvalues that are not real \textit{complex eigenvalues}.
Note that eigenstates of complex eigenvalues of a hermitian operator appear as pairs of zero-norm states.
As an introduction to the state-space structure with complex eigenvalues, we shall prove the following claim.

\begin{claim} \label{claim:2}
Let $H$ be a hermitian operator and have a complex eigenvalue: $H \ket{\alpha} = E_\alpha \ket{\alpha}, ~ E_\alpha \neq E_\alpha ^*$.
Suppose that its generalized eigenstates form a complete system.
Then, 
\begin{enumerate}
\renewcommand{\labelenumi}{(\arabic{enumi})}
    \item $\ket{\alpha}$ is a zero-norm state and
    \item there exists a partner state $\ket{\beta}$ such that $\braket{\beta|\alpha} \neq 0$, $\braket{\beta|\beta} = 0$, and $(H - E_\alpha^*)^k \ket{\beta} = 0 $ for some integer $k$.\footnote{One can prove the one-to-one correspondence between a sequence of generalized eigenstates of $E_\alpha$: $\{ \ket{\alpha}, (H- E_\alpha)  \ket{\alpha},  (H- E_\alpha)^2 \ket{\alpha} , \cdots \}$ and that of $E_\alpha^*$ in finite dimensional cases. For example, see section 7 of \cite{Nakanishi72}.}
\end{enumerate}
\end{claim}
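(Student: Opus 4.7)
The plan is to deduce both items from a single orthogonality lemma for generalized eigenstates, namely: if $\ket{m}$ and $\ket{n}$ are generalized eigenstates of $H$ with generalized eigenvalues $E_m,E_n$ satisfying $E_m^*\neq E_n$, then $\braket{m|n}=0$. Given this lemma, part (1) is immediate by applying it with $m=n=\alpha$ and using $E_\alpha\neq E_\alpha^*$. For part (2), non-degeneracy of the metric $\eta_{n,n'}=\braket{n|n'}$ (implicit in the completeness relation $1=\sum_{n,n'}\eta^{-1}_{n,n'}\ket{n}\bra{n'}$) guarantees that some generalized eigenstate $\ket{\beta}$ in the complete system satisfies $\braket{\beta|\alpha}\neq 0$; the lemma then forces the generalized eigenvalue of $\ket{\beta}$ to be $E_\alpha^*$, so $(H-E_\alpha^*)^{k}\ket{\beta}=0$ for $k$ equal to the order of $\ket{\beta}$. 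A second application of the lemma with $m=n=\beta$, noting $E_\beta^*=E_\alpha\neq E_\alpha^*=E_\beta$, yields $\braket{\beta|\beta}=0$.

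To prove the lemma, the base case of simple eigenstates ($M_m=M_n=1$) is the standard manipulation: hermiticity of $H$ in the indefinite inner product gives $E_m^*\braket{m|n}=\braket{Hm|n}=\braket{m|Hn}=E_n\braket{m|n}$, and $E_m^*\neq E_n$ forces $\braket{m|n}=0$. For the inductive step on $M_m+M_n$, I would begin from $\braket{m|(H-E_n)^{M_n}|n}=0$ (which holds because $\ket{n}$ has order $M_n$), move the polynomial to the bra via hermiticity to obtain $\braket{(H-E_n^*)^{M_n}m|n}=0$, and expand $H-E_n^*=(E_m-E_n^*)\cdot 1+(H-E_m)$ binomially. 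The terms with $j\geq 1$ involve $(H-E_m)^{j}\ket{m}$, which is a generalized eigenstate of $E_m$ of strictly smaller order $M_m-j$, so its pairing with $\ket{n}$ vanishes by the inductive hypothesis. Only the $j=0$ term survives, giving $(E_m^*-E_n)^{M_n}\braket{m|n}=0$ (the complex conjugation arising from antilinearity in the first slot), and hence $\braket{m|n}=0$.

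The main obstacle, and the reason the naive one-line argument of the base case does not extend directly, is the Jordan-block structure: the identity $H\ket{m}=E_m\ket{m}+(H-E_m)\ket{m}$ contaminates $\braket{Hm|n}$ with extra generalized-eigenstate contributions, so a direct eigenvalue computation only yields a relation between $\braket{m|n}$ and the cross pairing $\braket{(H-E_m)m|n}$ rather than an equation for $\braket{m|n}$ alone. The binomial expansion of $(H-E_n^*)^{M_n}$ is precisely the algebraic device that packages all such Jordan cross-terms into pairings with strictly smaller total order, which the induction then sweeps away. Once the lemma is in hand, Claim~\ref{claim:2} reduces to bookkeeping.
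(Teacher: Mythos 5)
Your proof is correct and takes a genuinely different route from the paper's. The paper proves each assertion by a direct eigenvalue computation that leans heavily on $\ket{\alpha}$ being a \emph{simple} eigenstate: part (1) is the one-line manipulation $E_\alpha\braket{\alpha|\alpha}=\braket{\alpha|H|\alpha}=E_\alpha^*\braket{\alpha|\alpha}$, and $E_\beta=E_\alpha^*$ follows from pushing $(H-E_\beta)^k$ onto the simple bra $\bra{\alpha}$ to get $(E_\alpha^*-E_\beta)^k\braket{\alpha|\beta}=0$. You instead extract a single orthogonality lemma for arbitrary pairs of generalized eigenstates (proved by induction on total order) and read off both parts as corollaries. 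The payoff is most visible at the final step: the paper asserts $\braket{\beta|\beta}=0$ ``similarly to $\ket{\alpha}$,'' but that one-line argument really uses $H\ket{\beta}=E_\beta\ket{\beta}$, which fails when $\ket{\beta}$ has Jordan order $k>1$; computing $\braket{\beta|H|\beta}$ then produces cross-terms $\braket{(H-E_\beta)^j\beta|\beta}$ that must themselves be shown to vanish, which is precisely what your binomial-expansion/induction does. Your approach is therefore not only an alternative presentation but a more airtight one, closing the implicit gap in the paper's ``similarly'' for higher-order $\ket{\beta}$. (One minor bookkeeping remark: your induction needs the base cases $M_m+M_n=2$, and when $M_m=1$ the binomial sum collapses to the $j=0$ term automatically since $(H-E_m)\ket{m}=0$, so the base cases with one simple state come for free; your argument handles this correctly.)
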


\begin{proof}[Derivation]
(1) Since $E_\alpha \neq E_\alpha ^*$, the equation $E_\alpha \braket{\alpha|\alpha} = \braket{\alpha|H|\alpha} = E_\alpha ^* \braket{\alpha|\alpha}$ implies that $\ket{\alpha}$ is a zero-norm state: $\braket{\alpha|\alpha} = 0$.

(2) Because of the non-degeneracy of the metric, $\ket{\alpha}$ has a partner state, namely $\ket{\beta}$ such that $\braket{\beta|\alpha} \neq 0$. One can take a generalized eigenstate of $H$ as this state $\ket{\beta}$. Indeed, otherwise, the completeness would imply that $\ket{\alpha}$ is orthogonal to all states, which contradicts with the non-degeneracy.
Therefore, $\ket{\beta}$ satisfies: for some integer $k$,
\begin{align}
    \braket{\alpha|\beta} \neq 0,~~ (H - E_\beta)^k \ket{\beta} = 0, ~~ (H - E_\beta)^{k-1} \ket{\beta} \neq 0.
\end{align}
From the second and first equations, we have $(E_\alpha^* - E_\beta)^k \braket{\alpha|\beta}  = 0$ and therefore $E_\beta = E_\alpha^*$. Similarly to $\ket{\alpha}$, $\ket{\beta}$ is also a zero-norm state: $\braket{\beta|\beta} = 0$ since $E_\beta$ is not real, $E_\beta ^* \neq E_\beta$.
\end{proof}

The simplest possibility to provide complex singularities is a pair of the zero-norm states $\{ \ket{\alpha}, \ket{\beta} \}$.
Let us consider a consequence from such minimal complex spectra.
\begin{claim}
\label{claim:3}
Suppose, in addition to (i), (ii), (iv) of claim \ref{claim:1},
\begin{enumerate}
    \setcounter{enumi}{2}
    \renewcommand{\labelenumi}{(\roman{enumi}')}
    \item  Besides real eigenvalues, the hermitian Hamiltonian $H$ has one pair of eigenstates $\{ \ket{\alpha}, \ket{\beta} \}$ of complex conjugate eigenvalues $E_\alpha,~E_\beta = E_\alpha^*$ with a positive real part $\operatorname{Re} E_\alpha >0$.
    \setcounter{enumi}{4}
    \renewcommand{\labelenumi}{(\roman{enumi})}
    \item The field operator $\phi(t)$ is hermitian.
\end{enumerate}
Then the following statements hold:
\begin{enumerate}
\renewcommand{\labelenumi}{(\arabic{enumi})}
    \item If $\braket{\beta|\phi(0)|0} = 0$ or $\braket{\alpha|\phi(0)|0} = 0$, then the Wightman function is in $\mathscr{S}' (\mathbb{R})$. In particular, the Schwinger function has no complex singularity.
    \item If $\braket{\beta|\phi(0)|0} \neq 0$ and $\braket{\alpha|\phi(0)|0} \neq 0$, then the Schwinger function has a pair of simple complex conjugate poles besides the real singularities.
\end{enumerate}
\end{claim}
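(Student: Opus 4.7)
The plan is to isolate the contribution of the complex-eigenvalue pair $\{\ket{\alpha},\ket{\beta}\}$ from the rest of the spectral decomposition and to show that the complex singularity structure of the Schwinger function is controlled entirely by this two-dimensional sector. Mirroring the derivation of Claim~\ref{claim:1}, I would insert the completeness relation into $\braket{0|\phi(t)\phi(0)|0}$ and use translational covariance; since by~(iii') every generalized eigenvalue other than $E_\alpha, E_\alpha^*$ is real, the Wightman function splits as $W(t) = W_{\mathrm{real}}(t) + W_{\mathrm{pair}}(t)$, where $W_{\mathrm{real}}(t)$ is exactly the object treated in Claim~\ref{claim:1}, hence a tempered distribution whose analytic continuation yields a Schwinger function with singularities only on the real $p_E^2$ axis.

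Next, I would compute $W_{\mathrm{pair}}(t)$ explicitly. By Claim~\ref{claim:2}, $\braket{\alpha|\alpha}=\braket{\beta|\beta}=0$ and $\braket{\beta|\alpha}\neq 0$, so the metric restricted to $\operatorname{span}\{\ket{\alpha},\ket{\beta}\}$ is purely off-diagonal and its inverse couples $\ket{\alpha}$ only to $\ket{\beta}$. Combining $H\ket{0}=0$, $\phi(t)=e^{iHt}\phi(0)e^{-iHt}$, and the hermiticity relation $\braket{0|\phi(0)|n}=\overline{\braket{n|\phi(0)|0}}$, a short calculation yields
\begin{equation*}
W_{\mathrm{pair}}(t) = c\, e^{-iE_\alpha t} + \bar c\, e^{-iE_\alpha^* t},
\end{equation*}
with $c := \overline{\braket{\alpha|\phi(0)|0}}\,\braket{\beta|\phi(0)|0}/\braket{\beta|\alpha}$. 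In Case~(1), either overlap vanishes and hence $c=0$, so $W_{\mathrm{pair}}\equiv 0$: the Wightman function coincides with $W_{\mathrm{real}}(t)$ and is tempered, and the Euclidean propagator has singularities only on the real axis. In Case~(2) both overlaps are nonzero and, since $\braket{\beta|\alpha}\neq 0$, also $c\neq 0$; setting $t=-i\tau$ and using $\operatorname{Re}E_\alpha>0$ to glue the $\tau>0$ and $\tau<0$ pieces yields $\Delta_{\mathrm{pair}}(\tau) = c\, e^{-E_\alpha|\tau|} + \bar c\, e^{-E_\alpha^*|\tau|}$, whose Fourier transform $2cE_\alpha/(p_E^2+E_\alpha^2)+2\bar cE_\alpha^*/(p_E^2+E_\alpha^{*2})$ is a pair of simple complex conjugate poles at $p_E^2=-E_\alpha^2$ and $p_E^2=-E_\alpha^{*2}$; since $W_{\mathrm{real}}$ contributes only real-axis singularities, no cancellation is possible.

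The hard part will be the careful bookkeeping of the indefinite-metric bra action, in particular the identity $\bra{n}e^{iHt} = e^{iE_n^* t}\bra{n}$, needed to reconstruct $\Delta(\tau)$ for $\tau<0$ from the second Wightman function $\braket{0|\phi(0)\phi(t)|0}$, and invoking Claim~\ref{claim:2} at exactly the step where the non-vanishing of the partner metric element $\braket{\beta|\alpha}$ is required in order to conclude $c\neq 0$ whenever both stated matrix elements are nonzero.
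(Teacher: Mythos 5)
Your proposal is essentially the same as the paper's proof: you split the Wightman function into a real-spectrum piece (controlled by Claim~\ref{claim:1}) and a two-dimensional complex-pair piece, use the fact that the metric restricted to $\operatorname{span}\{\ket{\alpha},\ket{\beta}\}$ is off-diagonal, evaluate $W_{\mathrm{pair}}(t)=c\,e^{-iE_\alpha t}+\bar c\,e^{-iE_\alpha^* t}$ with $c=\braket{0|\phi(0)|\alpha}\braket{\beta|\phi(0)|0}/\braket{\beta|\alpha}$, and read off the complex conjugate poles from the Fourier transform of $c\,e^{-E_\alpha|\tau|}+\bar c\,e^{-E_\alpha^*|\tau|}$; your $c$ is exactly the paper's $Z/(2E_\alpha)$. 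The one step you gloss over with ``the Wightman function splits'' is the block-diagonalization of the metric, i.e.\ the orthogonality of the real-eigenvalue generalized eigenstates to $\ket{\alpha},\ket{\beta}$, which the paper justifies explicitly via $(E_\alpha^*-E_n)^{M(n)}\braket{\alpha|n}=0$ for real $E_n$; this is a small but necessary lemma to make the decomposition legitimate.
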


\begin{proof}[Derivation]
Firstly, let us examine the metric structure of the state space. The eigenstates of complex eigenvalues, $H \ket{\alpha} = E_\alpha \ket{\alpha},~~H \ket{\beta} = E_\alpha ^* \ket{\beta}$, are orthogonal to the generalized eigenstates with real eigenvalues $\ket{n}$. Indeed, for every $\ket{n}$ satisfying $(H - E_n)^{M(n)} \ket{n} = 0$ and $(H - E_n)^{M(n)-1} \ket{n} \neq 0$ with real $E_n$, $(E_\alpha^* - E_n)^{M(n)} \braket{\alpha | n} = 0$ and $(E_\alpha - E_n)^{M(n)} \braket{\beta | n}  = 0$ hold, from which $\braket{\alpha | n} = \braket{\beta | n} = 0$. The metric $\eta_{n,m} = \braket{n|m}$ is ``block-diagonalized'' to the sectors of real energies and of complex energies: we can decompose the completeness relation as
\begin{align}
    \sum_{n,n'} \eta^{-1}_{n,n'} = \sum_{n,n':~real} \eta^{-1}_{n,n'} + \sum_{n,n':~complex} \eta^{-1}_{n,n'}
\end{align}
The metric $\eta^{-1}_{n,n'}$ in the second term is a two-by-two matrix and can be written as: $\eta^{-1}_{\alpha,\alpha} = \eta^{-1}_{\beta,\beta} = 0$, $\eta^{-1}_{\alpha,\beta} = (\braket{\beta|\alpha})^{-1}$, and $\eta^{-1}_{\beta, \alpha} = (\braket{\alpha|\beta})^{-1}$.

Now, we have
\begin{align}
&\braket{0|\phi(t) \phi(0)|0} \notag \\
&=  \sum_{n,n':~real} \eta^{-1}_{n,n'} e^{- i E_n t} \braket{0|\phi(0)|n} \braket{n'|\phi(0)|0} \notag \\
&~~+   \sum_{n,n':~complex} \eta^{-1}_{n,n'} e^{- i E_n t} \braket{0|\phi(0)|n} \braket{n'|\phi(0)|0}.
\end{align}
The first term is characterized by claim \ref{claim:1}, which provides singularities only on the negative real axis in the Schwinger function.
On the other hand, the second term reads
\begin{align}
W_{complex}(t) &:= \sum_{n,n':~complex} \eta^{-1}_{n,n'} e^{- i E_n t} \braket{0|\phi(0)|n} \braket{n'|\phi(0)|0}  \notag \\
&= (\braket{\beta|\alpha})^{-1} e^{- i E_\alpha t} \braket{0|\phi(0)|\alpha} \braket{\beta|\phi(0)|0} \notag \\
&~~ + (\braket{\alpha|\beta})^{-1} e^{- i E_\alpha^* t} \braket{0|\phi(0)|\beta} \braket{\alpha|\phi(0)|0}.
\end{align}

Let us evaluate $W_{complex}(t)$ in the following cases.
\begin{enumerate}
\renewcommand{\labelenumi}{(\arabic{enumi}):}
    \item $\braket{\beta|\phi(0)|0} = 0$ or $\braket{\alpha|\phi(0)|0} = 0$. The hermiticity of $\phi$ yields
\begin{align}
    \braket{\alpha|\phi(0)|0} = \braket{0|\phi(0)|\alpha}^*, \notag \\
    \braket{\beta|\phi(0)|0} = \braket{0|\phi(0)|\beta}^*,
\end{align}
    from which $W_{complex}(t) = 0$ in this case. Thus, the Wightman function can be regarded as a tempered distribution.
    \item $\braket{\beta|\phi(0)|0} \neq 0$ and $\braket{\alpha|\phi(0)|0} \neq 0$. We define
\begin{align}
    Z := \frac{2 E_\alpha \braket{0|\phi(0)|\alpha} \braket{\beta|\phi(0)|0}}{\braket{\beta|\alpha}} ,
\end{align}
which does not vanish in this case. The Schwinger function of this part $S_{complex}(\tau)$ for $\tau \neq 0$ is given by
\begin{align}
    S_{complex}(\tau) &= W_{complex} (-i |\tau|) \notag \\
    &= \frac{Z}{2 E_\alpha } e^{- E_\alpha |\tau|} + \frac{Z^*}{2 E_\alpha^*}  e^{- E_\alpha^* |\tau|}.
\end{align}
This function can be represented as
\begin{align}
    S_{complex}(\tau) &= \int \frac{dk}{2 \pi} e^{ik \tau} \tilde{S}_{complex}(k), \notag \\
    \tilde{S}_{complex}(k) &= \frac{Z}{k^2 + E_\alpha^2} +  \frac{Z^*}{k^2 + (E_\alpha^*)^2},
\end{align}
which is indeed a pair of simple complex conjugate poles.
\end{enumerate}
Therefore, the pair of eigenstates $\{ \ket{\alpha}, \ket{\beta} \}$ leads to either (1) the Wightman function is in $\mathscr{S}' (\mathbb{R})$ or (2) the Schwinger function has a pair of simple complex conjugate poles.
\end{proof}

Therefore, complex singularities defined in the previous section can appear in a state space with an indefinite metric, when the Hamiltonian $H$ has complex spectra.
This claim suggests \textit{a correspondence between complex singularities and zero-norm pairs of eigenstates of complex eigenvalues}.
Finally, let us add remarks on this claim.
\begin{enumerate}
\renewcommand{\labelenumi}{(\alph{enumi})}
    \item The necessity of an indefinite metric for complex singularities is consistent with Theorem \ref{thm:violation_W-pos}, the violation of the Wightman positivity.
    \item Claim \ref{claim:3} also implies that, under the assumption of the hermiticity of the Hamiltonian and field operators, complex singularities should appear as complex conjugate pairs. This statement can be also understood by the (intuitive) representation of the Schwinger function $S(\tau)$: for $\tau > 0$, $S(\tau) = \braket{0|\phi(0) e^{- H \tau } \phi(0)|0}$. The hermiticity of the Hamiltonian and the field operator yields $S(\tau) \in \mathbb{R}$, from which $D(z)^* = D(z^*)$.
This complex-conjugate pairing is consistent with Remark (d) of Theorem \ref{thm:spec_repr_complex}.
\end{enumerate}

The discussion above is restricted to quantum mechanics, or $(0+1)$ dimension.
In the next subsection, we see an example of QFT with complex poles.

\subsection{Example: Lee-Wick model} \label{sec:Lee-Wick}
A simple possible QFT yielding complex poles is the Lee-Wick model of complex ghosts \cite{LW69}, which has been studied for long years.
Here we briefly review its kinematic structure following its covariant operator formulation given in Ref.~\cite{Nakanishi72b} and see that there indeed exists a hermitian field whose propagator has complex poles.

Let us start with the Lagrangian density of the Lee-Wick model of complex scalar field $\phi$ with complex mass $M^2 \in \mathbb{C}$,
\begin{align}
\mathscr{L} :=& \frac{1}{2} \bigl[ (\partial_\mu \phi)(\partial^\mu \phi) + (\partial_\mu \phi)^\dagger (\partial^\mu \phi)^\dagger \notag \\
&~~~ - M^2 \phi^2 - (M^*)^2 (\phi^\dagger)^2  \bigr].
\end{align}
We expand the field operator $\phi$ as 
\begin{align}
\phi(x) &= \phi^{(+)} (x) + \phi^{(-)} (x), \notag \\
\phi^{(+)} (x) &= \int \frac{d^3 p}{(2 \pi)^3} \frac{1}{\sqrt{2 E_{\vec{p}}}} \alpha(\vec{p}) e^{i \vec{p}\cdot \vec{x} - i E_{\vec{p}}t}, \notag \\
\phi^{(-)} (x) &= \int \frac{d^3 p}{(2 \pi)^3} \frac{1}{\sqrt{2 E_{\vec{p}}}} \beta^\dagger(\vec{p}) e^{-i \vec{p}\cdot \vec{x} + i E_{\vec{p}}t},
\end{align}
where $E_{\vec{p}} := \sqrt{M^2 + \vec{p}^2}$ and we chose $\operatorname{Re} E_{\vec{p}} \geq 0$ and $\operatorname{Re} \sqrt{E_{\vec{p}}} \geq 0$.
The canonical commutation relation implies $[\alpha(\vec{p}), \beta^\dagger(\vec{q})] =[\beta(\vec{p}), \alpha^\dagger(\vec{q})] = (2 \pi)^3 \delta(\vec{p} - \vec{q})$.
We define the vacuum $\ket{0}$ by $\alpha(\vec{p}) \ket{0} = \beta(\vec{p}) \ket{0} = 0$, or $\phi^{(+)} (x) \ket{0} = [\phi^{(-)} (x) ]^\dagger \ket{0} = 0$.
Note that the field operator $\phi(x)$ together with its parts $\phi^{(+)} (x)$ and $\phi^{(-)} (x)$ is a Lorentz scalar, and therefore the vacuum $\ket{0}$ is a Lorentz invariant state, see \cite{Nakanishi72b} for details.
Note that the Lorentz symmetry is manifest in this formulation until one (artificially) considers asymptotic states.
The Hamiltonian reads,
\begin{align}
H = \int \frac{d^3 p}{(2 \pi)^3} \left[ E_{\vec{p}} \beta^\dagger(\vec{p}) \alpha(\vec{p}) + E_{\vec{p}}^* \alpha^\dagger(\vec{p}) \beta(\vec{p})  \right],
\end{align}
ignoring some constant. Notice that the complex-energy states $\alpha^\dagger(\vec{p}) \ket{0}$ and $\beta^\dagger (\vec{p}) \ket{0}$ form a pair of zero-norm states $(\ket{\vec{p},\alpha} := \alpha^\dagger(\vec{p}) \ket{0}, \ket{\vec{p},\beta} := \beta^\dagger(\vec{p}) \ket{0})$ for every $\vec{p} \in \mathbb{R}^3$:
\begin{align}
    &\braket{\vec{p},\alpha|\vec{q},\alpha} = \braket{\vec{p},\beta|\vec{q},\beta} = 0, \notag \\
    &\braket{\vec{p},\alpha|\vec{q},\beta} = \braket{\vec{p},\beta|\vec{q},\alpha} = (2 \pi)^3 \delta(\vec{p} - \vec{q}).
\end{align}

The commutators of the fields are given by
\begin{align}
[\phi(x), \phi(y)] &= i \Delta(x-y,M^2), \notag \\
[\phi(x), \phi^\dagger(y)] &= 0
\end{align}
where
\begin{align}
\Delta(x,M^2) := \int \frac{d^3 p}{(2 \pi)^3} \frac{1}{ E_{\vec{p}}} \sin (\vec{p}\cdot \vec{x} -  E_{\vec{p}}t). \label{Lorentz_inv_del_fct}
\end{align}
Note that $\Delta(x,M^2)$ is a Lorentz-invariant function as shown in Lemma \ref{lem:Lorentz} as expected from the invariance of the field operator and the vacuum state. This theory is thus spacelike commutative at least in the level of elementary fields, since $\Delta(x-y,M^2)$ vanishes for spacelike $x-y$.

Next, let us show that the Euclidean propagator of a hermitian combination with a constant $Z \in \mathbb{C}$,
\begin{align}
\Phi := \sqrt{Z} \phi + \sqrt{Z^*} \phi^\dagger,
\end{align}
has indeed complex poles. In this sense, the complex fields $\phi$ and $\phi^\dagger$ are the counterparts in the covariant operator formalism of so-called i-particles \cite{Baulieu-etal}.

Using the following correlators,
\begin{align}
\braket{0|\phi(x)\phi(0)|0} &= \int \frac{d^3 p}{(2 \pi)^3} \frac{1}{ 2 E_{\vec{p}}}  e^{i \vec{p}\cdot \vec{x} - i E_{\vec{p}}t}, \notag \\
\braket{0|\phi(x)\phi^\dagger(0)|0} &= 0, \notag \\
\braket{0|\phi^\dagger(x)\phi^\dagger(0)|0} &= \int \frac{d^3 p}{(2 \pi)^3} \frac{1}{ 2 E_{\vec{p}}^*}  e^{i \vec{p}\cdot \vec{x} - i E_{\vec{p}}^*t}, \label{eq:Lee-Wick-Wightman}
\end{align}
we find
\begin{align}
D^>_\Phi (t,\vec{x}) &:= \braket{0|\Phi(x)\Phi(0)|0} \notag \\
&= \left[ Z \braket{0|\phi(x)\phi(0)|0}  +  Z^* \braket{0|\phi^\dagger(x)\phi^\dagger(0)|0}  \right] \notag \\
&= \int \frac{d^3 p}{(2 \pi)^3} \left[ \frac{Z}{ 2 E_{\vec{p}}}  e^{i \vec{p}\cdot \vec{x} - i E_{\vec{p}}t} + \frac{Z^*}{ 2 E_{\vec{p}}^*}  e^{i \vec{p}\cdot \vec{x} - i E_{\vec{p}}^* t} \right],
\end{align}
which is exactly the same as the Wightman function (\ref{eq:simple_complex_poles_Wightman}) reconstructed from the Schwinger function (\ref{eq:propagator_complex_poles}).
From the relation (\ref{eq:connection_Euc_QFT}), we obtain the Euclidean propagator $\Delta_\Phi (\tau,\vec{x})$ for $\tau \neq 0$,
\begin{align}
\Delta_\Phi& (\tau,\vec{x}) := \theta(-\tau) D^>_\Phi ( i \tau ,\vec{x}) + \theta(\tau) D^<_\Phi ( i \tau ,\vec{x}) \notag \\
&= \int \frac{d^3 p}{(2 \pi)^3} \left[ \frac{Z}{ 2 E_{\vec{p}}}  e^{i \vec{p}\cdot \vec{x} -  E_{\vec{p}}|\tau|} + \frac{Z^*}{ 2 E_{\vec{p}}^*}  e^{i \vec{p}\cdot \vec{x} -  E_{\vec{p}}^* |\tau|} \right] \notag \\
&=  \int \frac{d^3 p}{(2 \pi)^3} \int \frac{d p_4}{2 \pi}  e^{i \vec{p}\cdot \vec{x} +ip_4 \tau} \Biggl[ \frac{Z}{ p_4^2 + E_{\vec{p}}^2} + \frac{Z^*}{ p_4^2 + (E_{\vec{p}}^*)^2} \Biggr].
\end{align}
The Euclidean propagator in the momentum space is given by
\begin{align}
\Delta_\Phi (p_E) &= \frac{Z}{ p_E^2 + M^2} + \frac{Z^*}{ p_E^2 + (M^*)^2},
\end{align}
which indeed exhibits a pair of complex conjugate poles.

Therefore, a kinematic structure of the covariant operator formalism of the Lee-Wick model yields simple complex poles.
The simple complex poles correspond to the one-particle-like zero-norm states with complex masses.

Finally, let us comment on a construction of a composite operator whose propagator obeys the K\"all\'en-Lehmann representation \cite{Baulieu-etal}.

As mentioned above, the field $\phi(x)$ corresponds to the so-called i-particle. According to the toy model \cite{Baulieu-etal}, we define
\begin{align}
    \mathcal{O} (x) := \phi(x) \phi^\dagger (x).
\end{align}
This propagator can be expressed as
\begin{align}
    &D^>_{\mathcal{O}} (y-x) := \braket{0|\mathcal{O} (y) \mathcal{O} (x) |0} \notag \\
    &~~= \int \frac{d^3 p}{(2 \pi)^3} \frac{d^3 q}{(2 \pi)^3} \frac{1}{4 E_p E_q^*} e^{-i (E_p + E_q^*) (y^0 - x^0) + i (\vec{p} + \vec{q}) \cdot (\vec{y} - \vec{x})},
\end{align}
which seems not tempered since $(E_p + E_q^*)$ is complex in general. However, the following reasoning indicates that this composite-field propagator involves only real spectra\footnote{This phenomenon corresponds to non-uniqueness of Cauchy integral.
For example, if $D(k^2)$ has singularities only on the negative real axis, one can represent $D(k^2) = \int_C \frac{d \zeta}{2 \pi i} \frac{D(\zeta)}{\zeta - k^2}$, where $C$ is an arbitrary contour which separates the positive and negative real axis. In this representation, $D(k^2)$, which has no complex singularities, appears to have complex singularities on the contour $C$.}. 
The Euclidean propagator $\underline{\Delta_\mathcal{O}} (\tau, \vec{k})$ in the imaginary time $\tau$ and spatial momentum $\vec{k}$ is given by
\begin{align}
   \underline{\Delta_\mathcal{O}} (\tau, \vec{k}) &= \int \frac{d^3 p}{(2 \pi)^3} \frac{1}{4 E_p E_{k-p}^*} e^{- (E_p + E_{k-p}^*) |\tau| },
\end{align}
which reads in the momentum space,
\begin{align}
    \Delta_\mathcal{O}& (\vec{k}, k_4) =  \int \frac{d^3 p}{(2 \pi)^3}  \frac{E_p + E_{k-p}^*}{2 E_p E_{k-p}^*} \frac{1}{p_4^2 + (E_p + E_{k-p}^*)^2 } \notag \\
    &= \int \frac{d^3 p}{(2 \pi)^3} \frac{1}{2} \Biggl[ \frac{1}{E_p} \frac{1}{(k_4 - i E_p)^2 +  (E_{k-p}^*)^2 } \notag \\
    & ~~~~~~~~ + \frac{1}{E_{k-p}^*} \frac{1}{(k_4 + i E_{k-p}^*)^2 +  E_p^2} \Biggr] \notag \\
    &= \int \frac{d^4 p}{(2 \pi)^4} \frac{1}{p^2 + M^2} \frac{1}{(k-p)^2 + (M^*)^2}.
\end{align}
This is what is calculated in \cite{Baulieu-etal} and take a form of the K\"all\'en-Lehmann spectral representation with a positive spectral density. Back to the real-time propagator, this implies $\braket{0|\mathcal{O} (y) \mathcal{O} (x) |0}$ has only real spectra. Thus, the composite operator $\mathcal{O} (x) $ could be regarded to be ``physical''.

\subsection{Complex singularities in a BRST quartet} \label{sec:BRST-complex}

Here, we discuss implications from the interpretation of complex singularities in an indefinite-metric state space in light of confinement.
As discussed above, complex singularities correspond to zero-norm states.
Such states, which are not physical, should be confined according to some confinement mechanism.

It is worthwhile considering implications in the Kugo-Ojima BRST quartet mechanism \cite{Kugo:1979gm}.
Here, we assume existence of a hermitian nilpotent BRST operator $Q_B$: $Q_B^2 = 0,~Q_B^\dagger = Q_B$. Some issues on this existence are mentioned in Sec.~\ref{sec:BRST-comment}.
In this scenario, confined states should belong to BRST quartets, i.e., BRST exact (BRST-daughter) or BRST non-invariant (BRST-parent) states. Thus, complex energy states, which lead to complex singularities of the propagators, should belong to BRST quartets.

In this section, we provide only a sketch of the argument.
Suppose that the gluon propagator has complex singularities.
Then, ``one-gluon state'' has complex energy states, which is schematically expressed as
\begin{align}
    A_\mu (0) \ket{0} = \ket{E} + \ket{E^*} + \cdots, \label{eq:one-gluon-state-complex}
\end{align}
where $\ket{E}$ and $\ket{E^*}$ stand for a pair of complex energy states, $\braket{E^*|E} \neq 0$.
Since $\ket{E}$ and $\ket{E^*}$ should be excluded from the physical state space constructed from the BRST cohomology $\operatorname{Ker} Q_B / \operatorname{Im} Q_B$ to make the theory physical, we require that $\ket{E}$ and $\ket{E^*}$ are either BRST exact or BRST non-invariant states
\footnote{Notice that, if the complex energy states are confined correctly,
asymptotic states in the physical state space are expected to be well-defined. Therefore, if such a confinement mechanism works well, the non-temperedness of Wightman function and the ill-definedness of the asymptotic states would not provide any physical issue.} .

We can easily exclude a possibility that both $\ket{E}$ and $\ket{E^*}$ are BRST exact. Indeed, if they were BRST exact: $\ket{E} = Q_B \ket{\gamma}$ and $\ket{E^*} = Q_B \ket{\gamma^*}$, then the non-orthogonality $\braket{E^*|E} \neq 0$ contradicts with the nilpotency of the BRST charge $Q_B$, $Q_B^2 = 0$.
Therefore, at least either $\ket{E} \notin \operatorname{Ker} Q_B$ or $\ket{E^*} \notin \operatorname{Ker} Q_B$ holds.

We assume further that CPT (anti-unitary) operator $\Theta$ exists and satisfies
\begin{align}
    &\Theta^2 = 1, ~~ \Theta \ket{0} = \ket{0}, ~~ \Theta Q_B \Theta = Q_B, \notag \\
    &\Theta H \Theta = H , ~~  \Theta A_\mu (0) \Theta = - A_\mu (0).
\end{align}
$\Theta A_\mu (0) \Theta = - A_\mu (0)$ and $ \Theta \ket{0} = \ket{0}$ implies
\begin{align}
    \Theta  \ket{E} = - \ket{E^*} , ~~  \Theta  \ket{E^*} = - \ket{E}. \label{eq:complex_energy_CPT}
\end{align}

When either $\ket{E} \notin \operatorname{Ker} Q_B$ or $\ket{E^*} \notin \operatorname{Ker} Q_B$ holds, the possibilities are (i) $\ket{E} \notin \operatorname{Ker} Q_B$ and $\ket{E^*} \in \operatorname{Im} Q_B$, (ii) $\ket{E} \in \operatorname{Im} Q_B$ and $\ket{E^*} \notin \operatorname{Ker} Q_B$, and (iii) $\ket{E} \notin \operatorname{Ker} Q_B$ and $\ket{E^*} \notin \operatorname{Ker} Q_B$.
The first two possibilities (i) and (ii) can be excluded by (\ref{eq:complex_energy_CPT}) and $Q_B \Theta = \Theta Q_B$, namely, $\ket{E} \in \operatorname{Ker} Q_B \Leftrightarrow \ket{E^*} \in \operatorname{Ker} Q_B$. Thus, the only possibility is (iii) both complex energy states are BRST non-invariant.

Hence, existence of CPT operator and nonexistence of complex energy states in the physical state space implies that both $\ket{E}$ and $\ket{E^*}$ should contain BRST parent states. In the simplest possibility, complex energy states form a double-BRST-quartet.

As a consequence, since $Q_B \ket{E} = \ket{E,c} \neq 0$ and $Q_B \ket{E^*} = \ket{E^*,c} \neq 0$, we have,
\begin{align}
    (D_\mu C)(0) \ket{0} = Q_B A_\mu (0) \ket{0} =  \ket{E,c} + \ket{E^*,c} + \cdots.
\end{align}
Since the ghost propagator seems to have no complex singularity according to recent analyses, e.g. \cite{Siringo16a, Siringo16b, HK2018, BT2019, Falcao:2020vyr, SFK12, Fischer-Huber}, this implies that the gluon-ghost bound state should contain complex energy states whose energies are equal to those of the gluon. Therefore, a propagator of the gluon-ghost bound state should have complex singularities at the same position as the gluon propagator.

Let us summarize the discussion above.
Complex energy states should be ``eliminated'' from the physical state space by some confinement mechanism. In the Kugo-Ojima scenario, they should be in BRST quartets.
For complex singularities in the gluon propagator, the ``one-gluon state'' should have complex conjugate energy states (\ref{eq:one-gluon-state-complex}), $\ket{E}$ and $\ket{E^*}$. The other discussion in this section can be summarized as the following claim.
\begin{claim} \label{claim:4}
Suppose that $\ket{E}$ and $\ket{E^*}$ of the ``one-gluon state'' with $\braket{E|E^*} \neq 0$ are in BRST quartets.
Then, either $\ket{E}$ or $\ket{E^*}$ is not a BRST daughter state. Moreover, with the additional assumption of the existence of the CPT operator, both $\ket{E}$ and $\ket{E^*}$ contain BRST parent states.
\end{claim}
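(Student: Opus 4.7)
The plan is to dispatch the two parts of the claim separately, using only the hermiticity and nilpotency of $Q_B$ for the first part and the intertwining relation $\Theta Q_B \Theta = Q_B$ together with (\ref{eq:complex_energy_CPT}) for the second. Both assertions turn out to be purely algebraic once the setup of Sec.~\ref{sec:BRST-complex} is in place.

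For the first assertion, I would argue by contradiction. Suppose both $\ket{E}$ and $\ket{E^*}$ were BRST exact, so that $\ket{E} = Q_B \ket{\gamma}$ and $\ket{E^*} = Q_B \ket{\gamma^*}$ for some $\ket{\gamma}, \ket{\gamma^*}$. Then
\begin{align}
\braket{E^*|E} = \bra{\gamma^*} Q_B^{\dagger} Q_B \ket{\gamma} = \bra{\gamma^*} Q_B^{2} \ket{\gamma} = 0,
\end{align}
where I used $Q_B^{\dagger} = Q_B$ and $Q_B^{2}=0$. This contradicts the hypothesis $\braket{E|E^*}\neq 0$ (equivalently $\braket{E^*|E}\neq 0$), which is itself inherited from Claim~\ref{claim:2}. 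Hence at most one of $\ket{E}, \ket{E^*}$ can be a BRST daughter, proving the first statement.

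For the second assertion, I additionally invoke the CPT operator $\Theta$. From $\Theta^{2}=1$ and $\Theta Q_B \Theta = Q_B$ one obtains the operator identity $Q_B \Theta = \Theta Q_B$. Combined with (\ref{eq:complex_energy_CPT}), namely $\Theta \ket{E} = -\ket{E^*}$, this yields
\begin{align}
Q_B \ket{E^*} \;=\; -\,Q_B \Theta \ket{E} \;=\; -\,\Theta Q_B \ket{E},
\end{align}
so $\ket{E} \in \operatorname{Ker} Q_B$ if and only if $\ket{E^*} \in \operatorname{Ker} Q_B$. Combining this equivalence with the first assertion — which forbids the pair from being simultaneously BRST exact (an element of $\operatorname{Im} Q_B \subset \operatorname{Ker} Q_B$) — rules out the asymmetric configurations (i) and (ii) listed before Claim~\ref{claim:4}, in which one member lies in $\operatorname{Ker} Q_B$ while the other does not. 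The only surviving case is that neither state is annihilated by $Q_B$, i.e., both $\ket{E}$ and $\ket{E^*}$ are BRST-parent states.

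The argument is almost entirely algebraic; the only mild subtlety I expect is to respect the antilinearity of $\Theta$ when turning $\Theta Q_B \Theta = Q_B$ into $Q_B\Theta = \Theta Q_B$ and when applying $\Theta$ to $Q_B\ket{E}$. Because the last step only exploits membership in $\operatorname{Ker} Q_B$ rather than the preservation of any particular eigenvalue, antilinearity causes no complication. All nontrivial physics is packed into Claim~\ref{claim:2} (supplying $\braket{E|E^*}\neq 0$) and into the assumed existence of a hermitian nilpotent $Q_B$ and of CPT; once these are granted, the quartet structure announced in Claim~\ref{claim:4} follows by the two short computations above.
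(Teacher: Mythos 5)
Your proof is correct and follows essentially the same route as the paper: the first assertion is obtained by the identical $\braket{E^*|E}=\bra{\gamma^*}Q_B^\dagger Q_B\ket{\gamma}=0$ contradiction, and the second by deriving $\ket{E}\in\operatorname{Ker}Q_B\Leftrightarrow\ket{E^*}\in\operatorname{Ker}Q_B$ from $Q_B\Theta=\Theta Q_B$ and $\Theta\ket{E}=-\ket{E^*}$, then eliminating the asymmetric cases. One small imprecision in attribution: cases (i) and (ii) are ruled out by the CPT equivalence alone (since in each case exactly one member lies in $\operatorname{Ker}Q_B$), while the first assertion separately excludes the both-exact case; taken together these leave only case (iii), which is what you conclude.
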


This claim predicts that a propagator of the gluon-ghost bound state should have complex singularities at the same positions as those of the gluon propagator.



\section{Summary} \label{sec:summary}

Let us summarize our findings.
In Sec.~\ref{sec:prelim_disc}, we have presented a sketch of the discussion emphasizing that complex singularities of propagators on the complex squared momentum plane differ depending on whether the propagator is Euclidean one or Minkowski one.
This is an important remark for determining a starting point toward considering the reconstruction.
We have to regard ``complex singularities'' as those of Euclidean propagator and consider the reconstruction carefully.

The main part of this paper consists of Sec.~\ref{sec:complex-singularities}: general properties of Wightman functions and Sec.~\ref{sec:interpretation}: implications on state spaces.

In Sec.~\ref{sec:complex-singularities}, we have defined complex singularities and reconstructed Wightman functions from Schwinger functions with complex singularities.
We have obtained the following general properties on this reconstruction as stated in the introduction:
\begin{enumerate}
\renewcommand{\labelenumi}{(\Alph{enumi})}
    \item Violation of the reflection positivity of the Schwinger functions (Theorem \ref{thm:violation_ref_pos}),
    \item Holomorphy in the tube (Theorem \ref{thm:holomorphy_general}) and existence of the boundary value as a distribution (Theorem \ref{thm:complex_boundary_value}),
    \item Violation of the temperedness (Theorem \ref{thm:nontempered}) and the positivity condition in $\mathscr{D}(\mathbb{R}^4)$ (Theorem \ref{thm:violation_W-pos}), 
    \item Validity of Lorentz symmetry (Theorem \ref{thm:Lorentz} and Theorem \ref{thm:complex_Lorentz}) and spacelike commutativity  (Theorem \ref{thm:spacelike_commutativity}) 
\end{enumerate}
The organization of our proofs of these theorems is depicted in Fig.~\ref{fig:summary-2-b}. See Appendix \ref{sec:Appendix_which_axioms} for a summary of violated axioms.

In Sec.~\ref{sec:interpretation}, we have considered a possible state-space structure in the presence of complex singularities.
Consequently, a quantum mechanical observation (Sec.~\ref{sec:complex_spectra}) suggests that 
\begin{enumerate}
\setcounter{enumi}{4}
\renewcommand{\labelenumi}{(\Alph{enumi})}
    \item  complex singularities correspond to zero-norm states with complex energy eigenvalues.
\end{enumerate}
Indeed, we have firstly argued the necessity of non-real spectra by proving Claim \ref{claim:1}.
Secondly, Claim \ref{claim:2} implies that the complex-energy states have zero-norm and form complex conjugate pairs.
Thirdly, Claim \ref{claim:3}, which asserts that a pair of zero-norm eigenstates of complex conjugate energies yield a pair of complex conjugate poles in $(0+1)$-dimensional theory, indicates that complex singularities correspond to pairs of zero-norm eigenstates of complex conjugate energies.

Moreover, we have discussed an example of a relativistic QFT having propagators with complex poles which is called the Lee-Wick model. This model also supports the correspondence between complex singularities and pairs of zero-norm states.
Incidentally, we have argued that the field operator of the Lee-Wick model can be understood as a counterpart in the covariant operator formalism of so-called i-particle \cite{Baulieu-etal}.

Finally, we have discussed implications of complex singularities in the BRST formalism.
Under assumptions that the Kugo-Ojima quartet mechanism works well and that the $CPT$ operator exists,
we have argued that both complex conjugate energy states of the ``one-gluon state'' $A_\mu (0) \ket{0}$ contain BRST parent states. This predicts that complex singularities of a propagator of the gluon-ghost composite operator should appear at the same locations as those of the gluon propagator.

\section{Discussion} \label{sec:discussion}

In this section, some remarks are made on related topics.

\subsection{On other interpretation of complex singularities}
Let us make comments on another interpretation of complex singularities.
We have reconstructed Wightman functions from Schwinger functions based on (\ref{eq:W_S_connection_A}) and (\ref{eq:W_S_connection_B}). As remarked in Sec.~II, this is different from a naive inverse Wick rotation on the complex momentum plane. An interpretation using the inverse Wick rotation is often discussed, e.g., in \cite{Stingl85,HKRSW90, Stingl96}. In these references, it is claimed that complex poles lead to (a) short-lived gluonic particles, (b) no free-limits, (c) violation of causality (in short-range), (d) violation of reflection positivity, (e) asymptotic incompleteness, and (f) violation of unitarity (in short-range).

In our reconstruction method, there are some differences on (a) short-lived particle, (c) violation of causality, and (f) unitarity:
(a) Instead of finite lifetime, the reconstructed Wightman function grows exponentially. 
(c) The causality as the spacelike commutativity is kept as mentioned in Sec.~\ref{sec:locality}.
(f) The hermiticity of Hamiltonian can be consistent with complex poles in an indefinite metric state space as discussed in Sec.~\ref{sec:interpretation}.

\subsection{BRST symmetry, confinement, and complex singularities}\label{sec:BRST-comment}

Finally, let us add some comments on BRST symmetry and confinement in relation to complex singularities.

First, we have assumed a nilpotent BRST charge in Sec.~\ref{sec:BRST-complex}.
Since the Kugo-Ojima quartet mechanism is a promising way to construct the physical state space in gauge-fixed pictures, it would be natural to hope the existence of a nilpotent BRST charge.
However, a part of the evidence for complex singularities in the Landau-gauge gluon propagator relies on numerical lattice calculations in the minimal Landau gauge, where the usual BRST symmetry is not guaranteed.
At the present situation, the ``best-case scenario'' is that the gluon propagator of the minimal Landau-gauge would be a good approximation of some gauge with a nilpotent BRST symmetry.
Developing the Lattice Landau gauge preserving the standard BRST symmetry in the continuum limit overcoming the Neuberger zero \cite{Hirschfeld, Lattice-BRST, Neuberger} would be an important future prospect.

Second, since complex singularities cause a problem on the asymptotic completeness as mentioned in Sec.~\ref{sec:temperedness} in the ``full'' state space, the Kugo-Ojima arguments could be modified. It would be interesting to explore this possibility.

Third, there are few theoretical developments of the axiomatic method without the spectral condition and positivity to our knowledge. Such studies could yield some constraints on complex singularities and are therefore interesting.

Fourth, Claim \ref{claim:4} predicts complex gluon-ghost bound states with the same energy as that of the gluon.
Conversely, appearance of complex singularities in a propagator of the gluon-ghost composite operator would be a necessary condition for the BRST formalism to ``work well'' if the gluon propagator has complex singularities.
Thus, seeking such complex gluon-ghost bound state would be interesting. Remarkably, the Bethe-Salpeter equation for the gluon-ghost bound state has been studied in light of BRST quartets in {\cite{Alkofer:2011pe}}.

Fifth, while one can expect that complex singularities of field correlators have something to do with a confinement mechanism, we ought to note that complex singularities could be trivial gauge-artifacts.
Although the complex singularities yield a violation of (reflection) positivity, this violation is itself neither necessary nor sufficient for the confinement of a particle corresponding to the field, e.g., the gluon confinement.
Indeed, this is not sufficient because this violation only indicates that the field involves some negative-norm states and does not deny the existence of asymptotic physical states.
This violation is not a necessary condition because BRST-parent states can be positive-norm, for example.
Similarly, although complex singularities correspond to confined states, their existence is neither necessary nor sufficient for the confinement of the corresponding particle.
Moreover, such ``confined states'' corresponding to complex singularities could only be members of BRST quartets that are irrelevant to the confinement mechanism like the timelike photon.
There are still many possibilities because understanding a confining theory as a quantum theory is far from being achieved.
Further studies are needed for clarification of relations between complex singularities and confinement mechanism.

\section*{Acknowledgements}
We thank Taichiro Kugo, Peter Lowdon, and Lorenz von Smekal for helpful and critical comments in early stage of this work.
Y.~H. is supported by JSPS Research Fellowship for Young Scientists Grant No.~20J20215, and K.-I.~K. is supported by Grant-in-Aid for Scientific Research, JSPS KAKENHI Grant (C) No.~19K03840.

\appendix

\section{Notations and axioms} \label{sec:notations_axioms}

In this section, we introduce notations required for mathematical discussions and review the standard Osterwalder-Schrader axiom for Euclidean field theories \cite{OS73}.

\subsection{Notations and conventions}
We use the notation $x = (\vec{x},x^4) = (x^1,x^2,x^3,x^4)$ for a four-vector and Euclidean inner product $xy = x^\mu y^\mu$ (and Lorentzian inner product only when explicitly mentioned).
When only one four-vector is relevant as in the main text, we also use the lower indices $x = (x_1,x_2,x_3,x_4)$.
We call the direction of $e_4 := (\vec{0},1)$ ``(imaginary-)time direction''. We also use the multi-index notation: for a multi-index $\alpha = (\alpha_{1,1},\alpha_{1,2}, \cdots, \alpha_{n,4}) \in \mathbb{Z}_{\geq 0}^{4n}$, $D_\alpha$ denotes
\begin{align}
    D_\alpha = \frac{\partial^{|\alpha|}}{(\partial x_1^1)^{\alpha_{1,1}} (\partial x_1^2)^{\alpha_{1,2}} \cdots (\partial x_n^4)^{\alpha_{n,4}}},
\end{align}
where $|\alpha| = \alpha_{1,1} + \cdots + \alpha_{n,4}$.

The Schwartz's space on $\mathbb{R}^n$ is denoted by $\mathscr{S}(\mathbb{R}^n)$.
Its dual space, the space of tempered distributions, is denoted by $\mathscr{S}'(\mathbb{R}^n)$.
We also define $\mathscr{D}(\mathbb{R}^n) := \{ f(\xi) ~;~  f(\xi)\mathrm{~is~a~} C^\infty\mathrm{~function~with~a~compact~support}  \}$ and its dual space $\mathscr{D}'(\mathbb{R}^n)$.
We can regard $\mathscr{S}'(\mathbb{R}^4) \subset \mathscr{D}'(\mathbb{R}^4)$.
An element of $\mathscr{D}'(\mathbb{R}^4)$ can be beyond polynomial growth unlike $\mathscr{S}'(\mathbb{R}^4)$.
An element of $\mathscr{D}'(\mathbb{R}^4)$ that cannot be regarded as a tempered distribution is called a non-tempered distribution.

\begin{widetext}
The important test function spaces are listed as follows. These spaces are equipped with the topologies in the same way as \cite[Sec. 2]{OS73}.
\begin{enumerate}
    \item Space of test functions on non-coincident points
\begin{align}
    ^0 \mathscr{S} (\mathbb{R}^{4n}) := \left\{ f \in \mathscr{S} (\mathbb{R}^{4n}) ~;~
    \begin{gathered}
    D^\alpha f(x_1,\cdots,x_n) = 0 ~\mathrm{for~any}~ \alpha \in \mathbb{Z}_{\geq 0}^{4n} \\
    ~\mathrm{if}~ x_i = x_j ~\mathrm{for~some}~ 1 \leq i < j \leq n
\end{gathered}
    \right\},
\end{align}
    \item Space of test functions with (imaginary-)time-ordered supports
\begin{align}
    \mathscr{S}_+ (\mathbb{R}^{4n}) &= \left\{ f \in \mathscr{S} (\mathbb{R}^{4n}) ~;~
    \begin{gathered}
    D^\alpha f(x_1,\cdots,x_n) = 0 ~\mathrm{for~any}~ \alpha \in \mathbb{Z}_{\geq 0}^{4n} \\
    ~\mathrm{unless}~ 0<x_1^4<x_2^4< \cdots < x_n^4
\end{gathered}
    \right\}, \\
    \mathscr{S}_< (\mathbb{R}^{4n}) &= \left\{ f \in \mathscr{S} (\mathbb{R}^{4n}) ~;~
    \begin{gathered}
    D^\alpha f(x_1,\cdots,x_n) = 0 ~\mathrm{for~any}~ \alpha \in \mathbb{Z}_{\geq 0}^{4n} \\
    ~\mathrm{unless}~ x_1^4<x_2^4< \cdots < x_n^4
\end{gathered}
    \right\},
\end{align}
    \item Space of test functions with supports of positive (imaginary-)time. 
    
    For functions of one-variable, $\mathscr{S}_+(\mathbb{R}) := \{ f(s) \in \mathscr{S}(\mathbb{R})~;~ \operatorname{supp} f \subset [0,\infty) \}$ and also $\mathscr{S}_-(\mathbb{R}) := \{ f(s) \in \mathscr{S}(\mathbb{R})~;~ \operatorname{supp} f \subset (-\infty,0] \}$. We define
\begin{align}
    \mathscr{S} (\mathbb{R}^{4}_+) &:= \mathscr{S} (\mathbb{R}^3) \hat{\otimes} \mathscr{S}_+(\mathbb{R}) , ~~~     \mathscr{S} (\mathbb{R}^{4n}_+) := \hat{\otimes}^n \mathscr{S} (\mathbb{R}^{4}_+),
\end{align}
where $\hat{\otimes}$ denotes the completed topological tensor product and $\hat{\otimes}^n$ the n-fold one.
    \item Space of test functions on ``non-negative (imaginary-)time''.
    
    $\mathscr{S}(\bar{\mathbb{R}}_+)$ denotes the space of test functions on the non-negative real half line: $\mathscr{S}(\bar{\mathbb{R}}_+) := \mathscr{S}(\mathbb{R}) / \mathscr{S}_-(\mathbb{R})$. Note that its dual space can be identified as $\mathscr{S}' (\bar{\mathbb{R}}_+) \simeq \{ F \in \mathscr{S}'(\mathbb{R}) ~;~ \operatorname{supp} F \subset [0,\infty) \}$. We define as above
\begin{align}
    \mathscr{S} (\bar{\mathbb{R}}^{4}_+) &:= \mathscr{S} (\mathbb{R}^3) \hat{\otimes} \mathscr{S}(\bar{\mathbb{R}}_+) , ~~~     \mathscr{S} (\bar{\mathbb{R}}^{4n}_+) := \hat{\otimes}^n \mathscr{S} (\bar{\mathbb{R}}^{4}_+),
\end{align}   
\end{enumerate}

We introduce the sets of terminating sequences $\underline{\mathscr{S}}, ~\underline{\mathscr{S}}_+,~\underline{\mathscr{S}}_<$, and $\underline{\mathscr{S}}(\bar{\mathbb{R}}^{4}_+)$ over the spaces $\mathscr{S} (\mathbb{R}^{4n}),~ \mathscr{S}_+ (\mathbb{R}^{4n}),~\mathscr{S}_< (\mathbb{R}^{4n})$, and $\mathscr{S}(\bar{\mathbb{R}}^{4n}_+)$, respectively. An element $\underline{f}$ of one of the spaces $\underline{\mathscr{S}}_*~ (= \underline{\mathscr{S}}, ~\underline{\mathscr{S}}_+,~\underline{\mathscr{S}}_<,~\underline{\mathscr{S}}(\bar{\mathbb{R}}^{4}_+)  )$ over $\mathscr{S}^n_* (=  \mathscr{S} (\mathbb{R}^{4n}),~ \mathscr{S}_+ (\mathbb{R}^{4n}),~\mathscr{S}_< (\mathbb{R}^{4n}) ,~ \mathscr{S} (\bar{\mathbb{R}}^{4n}_+) )$ is a terminating sequence $\underline{f} := (f_0, f_1, \cdots)$ with $f_0 \in \mathbb{C},~ f_n \in \mathscr{S}^n_* ~~(n = 1,2,\cdots)$, i.e.,
\begin{align}
    \underline{\mathscr{S}}_* := \bigoplus_{n=1}^\infty \mathscr{S}^n_*,   
\end{align}   
with $\mathscr{S}^0_* := \mathbb{C}$.

Next, we define some operations $\times$, $\cdot^\star$, and $\Theta$ on these spaces. 
\begin{enumerate}
\renewcommand{\labelenumi}{(\alph{enumi})}
    \item For $\underline{f} = (f_0,f_1,\cdots),~\underline{g}= (g_0,g_1,\cdots) \in \underline{\mathscr{S}}_*$, $\underline{f} \times \underline{g} $ is defined as 
\begin{align}
    \underline{f} \times \underline{g} &= ((\underline{f} \times \underline{g})_0, (\underline{f} \times \underline{g})_1,(\underline{f} \times \underline{g})_2,\cdots), \notag \\
    (\underline{f} \times \underline{g})_n (x_1,x_2,\cdots,x_n) &= \sum_{k = 0}^n (f_{n-k} \times g_k) (x_1,x_2,\cdots,x_n) \notag \\
    &=  \sum_{k = 0}^n f_{n-k} (x_1,x_2,\cdots,x_{n-k})  g_k (x_{n-k+1},x_{n-k+2},\cdots,x_n)  ,
\end{align}
    \item For $\underline{f} = (f_0,f_1,\cdots) \in \underline{\mathscr{S}}_*$, 
\begin{align}
    &\underline{f}^\star = (f_0^\star,f_1^\star,\cdots), ~~ f_n^\star(x_1,x_2,\cdots,x_n) := \bar{f}_n (x_n,x_{n-1},\cdots,x_1), \\
    &\Theta \underline{f} = ((\Theta \underline{f})_0, (\Theta \underline{f})_1,\cdots), ~~~ (\Theta \underline{f})_n (x_1,x_2,\cdots,x_n) := f_n (\vartheta x_1,\vartheta x_2,\cdots, \vartheta x_n),
\end{align}
where $\bar{\cdot}$ is complex conjugation in this appendix (to distinguish from $\cdot^\star$) and $\vartheta x = (\vec{x}, - x^4)$. In the main text, the complex conjugation is denoted by $\cdot^*$.
    \item For an element of the Euclidean group $(a,R) \in \mathbb{R}^4 \rtimes SO(4)$ and $f \in \mathscr{S}^n_*$,
\begin{align}
    f_{(a,R)} (x_1,\cdots,x_n) := f(Rx_1 + a, \cdots, Rx_n + a).
\end{align}
\end{enumerate}

For the spectral function, we define tempered distribution on a compactified set $[0, \infty]$ \cite[Sec. A.3.]{Bogolyubov:1990kw}.
We introduce the space of test functions on $[0, \infty]$ as
\begin{align}
    \mathscr{S}([0,\infty]) := \left\{ f(\lambda) = g(- (1+\lambda)^{-1}) ~;~g\mathrm{~is~a~}C^\infty \mathrm{~function~on~}[-1,0] \right\},
\end{align}
equipped with the topology characterized by the countable norm family\footnote{Note that this norm can be written in terms of $g(u)$ on $[-1,0]$ by identifying $u = - \frac{1}{1 + \lambda}$ as $\| f(\lambda) \|^{[0,\infty]}_n = \max_{\substack{k \in \{0,1,\cdots,n \} \\ u \in [-1,0]}} | \frac{\partial^k g}{\partial u^k} | $, which is clearly finite for $f \in  \mathscr{S}([0,\infty])$.} $\| f(\lambda) \|^{[0,\infty]}_n := \max_{k \in \{0,1,\cdots,n \}} \sup_{\lambda \geq 0} | \left( (1+\lambda)^2 \frac{\partial}{\partial \lambda}\right)^k f(\lambda) |$ for $n \in \mathbb{Z}_{\geq 0}$. Its dual space, namely the space of continuous linear functions of $\mathscr{S}([0,\infty])$, is denoted by $\mathscr{S}'([0,\infty])$. Elements of this space are called \textit{tempered distributions on $[0, \infty]$}.
With this definition, for $\rho(\sigma^2) \in \mathscr{S}'([0,\infty])$, the ``integral'' $\int_0^\infty d\sigma^2 \frac{\rho(\sigma^2)}{k^2 + \sigma^2}$ is formally well-defined.

\end{widetext}

\subsection{Osterwalder-Schrader Axioms}

Using the above notations, we state the standard Osterwalder-Schrader Axioms, for simplicity, for the scalar field. $\{ \mathcal{S}_n \}_{n=0}^\infty$ is a sequence of distributions $\mathcal{S}_n (x_1,\cdots,x_n)$, called Schwinger functions, satisfying
\begin{enumerate}
\renewcommand{\labelenumi}{[OS \arabic{enumi}]}
\setcounter{enumi}{-1}
    \item (\textit{Temperedness}):
\begin{align}
    \mathcal{S}_0 = 1, ~~~ \mathcal{S}_n \in  ~^0 \mathscr{S}' (\mathbb{R}^{4n}).
\end{align}
    \item (\textit{Euclidean Invariance}): for all $(a,R) \in \mathbb{R}^4 \rtimes SO(4)$ and $f \in  ~^0 \mathscr{S} (\mathbb{R}^{4n})$,
\begin{align}
    \mathcal{S}_n (f) = \mathcal{S}_n (f_{(a,R)}).
\end{align}
    \item (\textit{Reflection Positivity}): for all $\underline{f} = (f_0,f_1 \cdots) \in \underline{\mathscr{S}}_+$,
\begin{align}
    \sum_{n, m} \mathcal{S}_{n + m} (\Theta f_n^\star \times f_m) \geq 0.
\end{align}
    \item (\textit{Symmetry})
\begin{align}
    \mathcal{S}_n (x_1, \cdots, x_n) = \mathcal{S}_n (x_{\pi(1)} , \cdots, x_{\pi(n)}),
\end{align}
for any pertumutation $\pi(\cdot)$ of $n$ items
    \item (\textit{Cluster Property}): for all $\underline{f} = (f_0,f_1 \cdots), ~ \underline{g} = (g_0,g_1 \cdots) \in \underline{\mathscr{S}}_+$ and $a = (\vec{a},0)$,
\begin{align}
    \lim_{\lambda \rightarrow \infty} \sum_{n,m} \left[ \mathcal{S}_{n + m} (\Theta f_n^\star \times g_{m, (\lambda a , 1)}) - \mathcal{S}_{n} (\Theta f_n^\star) \mathcal{S}_{m} (g_m)  \right] = 0.
\end{align}

\renewcommand{\labelenumi}{[OS \arabic{enumi}']}
\setcounter{enumi}{-1}
    \item (\textit{Laplace Transform Condition})\footnote{Contrary to the original expectation, temperedness of the (higher-point) Wightman functions would not be guaranteed by [OS0] -- [OS4] \cite{Glaser:1974hy, OS75}.
    For the two-point sector, this condition is irrelevant since the temperedness of the Wightman function can be derived by the other conditions. However, it should be noted that complex singularities also violate this condition.
    Note also that this condition can be replaced by, e.g. , a slight stronger condition, ``linear-growth condition'', which controls growth of $\mathcal{S}_n$ in $n$ \cite{OS75}. Since the aim of imposing this condition is to control behavior of the higher-point functions, we will not go far into this condition in this paper.}: From the translational invariance [OS1], we can write $\mathcal{S}_n(x_1, \cdots, x_n) $ as $S_{n-1}(\xi_1 , \cdots, \xi_{n-1}) \in \mathscr{S}' (\mathbb{R}^{4(n-1)}_+)$, i.e., $\mathcal{S}_n(x_1, \cdots, x_n) =: S_{n-1} (x_2 - x_1, x_3 - x_2, \cdots, x_n - x_{n-1})$ for $x_1^4 < x_2^4 < \cdots < x_n^4$. This condition means that there exists, for every $n$, a Schwarz seminorm $\| \cdot \|_\mathscr{S}$ on $\mathscr{S}(\bar{\mathbb{R}}^{4(n-1)}_+)$ so that
\begin{align}
    |S_{n-1} (f)| \leq \| f^L \|_\mathscr{S}~~~\mathrm{for~all}~f \in \mathscr{S} (\mathbb{R}^{4(n-1)}_+),
\end{align}
where $f^L$ denotes the Laplace-Fourier transform defined by
\begin{align}
    &f^L (q_1, \cdots,q_{n-1}) \notag \\
    &:= \int d^{n-1} \xi~ f(\xi_1, \cdots, \xi_{n-1})  e^{\sum_{k = 1} ^n ( - q^4_k x^4_k + i \vec{q}_k \cdot \vec{x}_k  )} \Bigr|_{q_k^4 \geq 0} \notag \\
    &~~\in \mathscr{S}(\bar{\mathbb{R}}^{4(n-1)}_+)
\end{align}

\end{enumerate}

Let us comment on the standard axiom of Euclidean field theories.
\begin{enumerate}
\renewcommand{\labelenumi}{(\alph{enumi})}
    \item For $\underline{f}, \underline{g} \in \underline{\mathscr{S}}_+$, $\Theta \underline{f}^\star \times \underline{g} \in \underline{\mathscr{S}}_<$, which appears in [OS2] and [OS4].
    \item As a special case of [OS2], $\underline{f} = (0,f,0,0, \cdots) \in \underline{\mathscr{S}}_+$, we have the reflection positivity for the two-point function: For any $f(x) \in \mathscr{S}_+ (\mathbb{R}^{4}) = \{ f \in \mathscr{S} (\mathbb{R}^{4}) ~;~ \operatorname{supp} f \subset \mathbb{R}^3 \times [0,\infty) \}$,
\begin{align}
    \int d^4x d^4y ~ \bar{f}(\vartheta x) f(y) \mathcal{S}_2 (x,y) \geq 0, \label{eq:two_pt_ref_pos}
\end{align}
which is equivalent to, in terms of $S_1(y-x) := \mathcal{S}_2 (x,y)$ and Fourier transforms of $S_1$ and $f$ in the spatial directions,
\begin{align}
    \int d x^4 dy^4 \int \frac{d^3 \vec{p}}{(2 \pi)^3} ~ \bar{f}(\vec{p},x^4) f(\vec{p}, y^4) S_1 (\vec{p}, -x^4 - y^4) \geq 0. \label{eq:two_pt_ref_pos_mom}
\end{align}
    \item Usually, $\underline{S}_1 (\vec{p}, \tau)$ is an ordinary function. In this case, the reflection positivity is often checked by a necessary condition:
\begin{align}
\underline{S}_1 (\vec{p}, \tau) \geq 0 ~~~ \mathrm{for~all}~\tau >0,~\vec{p}. \label{eq:two_pt_ref_pos_concise}
\end{align}
If not, there exists $(\vec{p}_*,\tau_*)$ such that $\underline{S}_1 (\vec{p}_*, \tau_*) < 0$, and we can choose a test function $f(\vec{p},\tau)$ with its support sufficiently close to $(\vec{p}_*,\tau_* /2)$ so that the left-hand side of (\ref{eq:two_pt_ref_pos_mom}) is negative. Observing the sign of $ \underline{S}_1(\vec{p}, \tau)$ is relatively easy but is not enough to test the reflection positivity completely even in the two-point sector. For example, a propagator with complex poles and largely positive spectral function will not show the negativity of $\underline{S}_1 (\vec{p}, \tau)$, while the reflection positivity itself is violated as proven in Theorem \ref{thm:violation_ref_pos}.
\begin{widetext}
    \item The Schwinger function $S_{n-1}(\xi_1 , \cdots, \xi_{n-1}) \in \mathscr{S}' (\mathbb{R}^{4(n-1)}_+)$, i.e., $\mathcal{S}_n(x_1, \cdots, x_n) =: S_{n-1} (x_2 - x_1, x_3 - x_2, \cdots, x_n - x_{n-1})$ for $x_1^4 < x_2^4 < \cdots < x_n^4$ is regarded as the ``values'' of the Wightman function at pure imaginary times, or Euclidean points\footnote{More generally, the Schwinger functions at non-coincident points are regarded as the ``values'' at Euclidean points of the holomorphic Wightman function defined on the permuted extended tube. See \cite[Sec 4.5]{OS73}}
\begin{align}
 W_{n-1} ((-i\tau_1 ,\vec{\xi}_1),(-i\tau_2,\vec{\xi}_2),\cdots, (-i\tau_{n-1},\vec{\xi}_{n-1})) = S_{n-1} ((\vec{\xi}_1 , \tau_1),(\vec{\xi}_2, \tau_2),\cdots, (\vec{\xi}_{n-1}, \tau_{n-1})). \label{eq:W_S_connection_A}
\end{align}
One expects that the Wightman function is holomorphic in the (extended) holomorphic tube and that the holomorphic Wightman function provides the vacuum expectation value of the fields as its boundary value as the usual case \cite{Streater:1989vi}:
\begin{align}
 W_{n-1} (\xi_1,\cdots, \xi_{n-1} ) = \lim_{\substack{\eta_1,\cdots \eta_{n-1} \rightarrow 0 \\ \eta_1,\cdots \eta_{n-1} \in V_+}}W_{n-1} (\xi_1 - i \eta_1,\cdots, \xi_{n-1} - i \eta_{n-1} ),
\end{align}
where $V_+$ denotes the forward lightcone. Therefore, the Wightman function is reconstructed from the Schwinger function analytically continued to $\operatorname{Re} \tau_k > 0$:
\begin{align}
 W_{n-1}& ((t_1 ,\vec{\xi}_1),(t_2,\vec{\xi}_2),\cdots, (t_{n-1},\vec{\xi}_{n-1})) \notag \\
 &= \lim_{\tau_1, \cdots, \tau_{n-1} \rightarrow +0} S_{n-1} ((\vec{\xi}_1 , \tau_1 + it_1),(\vec{\xi}_2, \tau_2 + it_2),\cdots, (\vec{\xi}_{n-1}, \tau_{n-1} + it_{n-1})), \label{eq:W_S_connection_B}
\end{align}
since $((-i(\tau_1 + it_1), \vec{\xi}_1),(-i(\tau_2 + it_2), \vec{\xi}_2),\cdots, (-i(\tau_{n-1} + it_{n-1}), \vec{\xi}_{n-1})$ for $\tau_1, \cdots, \tau_{n-1} > 0$ is in the tube $\mathbb{R}^{4 (n-1)} - i V_+^{n-1}$. Note that the spacelike commutativity of the Wightman functions follows from the permutation symmetry [OS3] and expected holomorphy of the analytically continued Schwinger functions $S_{n-1}$ in the extended tube.
\end{widetext}
    \item A generalization of the Osterwalder-Schrader axiom without the reflection positivity was proposed in \cite{Jakobczyk:1987zw}. However, the new axiom (``Hilbert space structure condition'' with ``$\underline{\mathscr{S}}$-continuity'', where the latter is introduced for a convenient purpose) is strong enough to derive the Laplace transform condition and prohibit complex singularities. 
\end{enumerate}

\section{Proof of violation of the reflection positivity in the presence of complex singularities} \label{sec:Appendix_ref_pos}

For the proof of violation of the reflection positivity (Theorem \ref{thm:violation_ref_pos}), the goal of this section is to prove that the reflection positivity leads to temperedness of a reconstructed two-point Wightman function, or Theorem \ref{thm:ref_pos_yields_temperedness}. 
Consequently, violation of the reflection positivity in the presence of complex singularities follows from the non-temperedness (Theorem \ref{thm:nontempered}).

This proof is essentially a simplified version of the steps (a) and (b) of the Osterwalder-Schrader Theorem \cite{OS73}. 

\begin{lemma}
\label{lem:OS_analytic_cont}

Suppose that the two-point Schwinger function $\mathcal{S}_2$ satisfies 
\begin{enumerate}
    \item temperedness $\mathcal{S}_2 \in  ~^0 \mathscr{S}' (\mathbb{R}^{4 \cdot 2})$,
    \item translational invariance $\mathcal{S}_2 (x_1 + a, x_2 + a) = \mathcal{S}_2(x_1 , x_2)$ for all $a \in \mathbb{R}^4$, and
    \item the reflection positivity for the two-point sector (\ref{eq:two_pt_ref_pos}),
\end{enumerate}
which follow from [OS0] temperedness, [OS1] Euclidean invariance, and [OS2] reflection positivity, respectively.

Then, $S_1 (x_2 - x_1) := \mathcal{S}_n(x_1, x_2)$ (after smearing the spatial directions) can be analytically continued to the right-half plane ($\operatorname{Re} (x_2^4 - x_1^4) > 0$), and its analytic continuation can be regarded as a tempered distribution on the half-plane and the spatial directions. More precisely, for any $h(\vec{\xi}) \in \mathscr{S} (\mathbb{R}^{3})$,
there exists a holomorphic function $S_1( \tau + is | h)$ on the right-half plane ($\tau > 0$) such that 
\begin{enumerate}
\renewcommand{\labelenumi}{(\alph{enumi})}
    \item On the real axis, $S_1( \tau| h) = \int d^3 \vec{\xi}~ S_1(\vec{\xi},\tau) h(\vec{\xi})$,
    \item $S_1 (\tau + is |h)$ can be regarded as an element of $ [\mathscr{S} (\mathbb{R}_+) \hat{\otimes} \mathscr{S} (\mathbb{R})]'$, the dual space of $\mathscr{S} (\mathbb{R}_+) \hat{\otimes} \mathscr{S} (\mathbb{R})$,
    \item $S_1 (\tau + is |h)$ is continuous in $h(\vec{\xi}) \in \mathscr{S} (\mathbb{R}^{3})$.
\end{enumerate}
\end{lemma}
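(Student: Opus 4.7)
The plan is to adapt the first two steps of the Osterwalder--Schrader reconstruction theorem, specialized to the two-point sector. The key point is that the reflection positivity for two-point functions, assumption (iii) of the lemma, already suffices to build a Hilbert space on which forward time translation is a strongly continuous symmetric contraction semigroup; its self-adjoint nonnegative generator $H$ then delivers $T(z) = e^{-zH}$ holomorphic and uniformly bounded by one on the right half-plane, which provides the analytic continuation required for $S_1(\tau + is\,|\,h)$.

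First, on $\mathscr{S}_+(\mathbb{R}^4)$ I would define the positive semidefinite sesquilinear form
\[
(f,g)_{\mathrm{OS}} := \int d^4x\, d^4y\; \overline{f(\vartheta x)}\, g(y)\, \mathcal{S}_2(x,y),
\]
quotient out the null subspace, and complete to obtain the OS Hilbert space $\mathscr{H}_{\mathrm{OS}}$ together with a canonical continuous map $\pi:\mathscr{S}_+(\mathbb{R}^4) \to \mathscr{H}_{\mathrm{OS}}$. The forward translation $(T(\tau) f)(x) := f(x - \tau e_4)$, $\tau \ge 0$, preserves $\mathscr{S}_+(\mathbb{R}^4)$; translation invariance together with the properties of the reflection $\vartheta$ give the two identities
\[
(T(\tau) f, T(\tau) g)_{\mathrm{OS}} = (f, T(2\tau) g)_{\mathrm{OS}},
\]
\[
(T(\tau) f, g)_{\mathrm{OS}} = (f, T(\tau) g)_{\mathrm{OS}}.
\]
Iterating Cauchy--Schwarz on the semi-inner product (the standard OS trick) yields the contraction $\|T(\tau)\pi(f)\|_{\mathrm{OS}} \le \|\pi(f)\|_{\mathrm{OS}}$, so $T$ descends to a strongly continuous symmetric contraction semigroup on $\mathscr{H}_{\mathrm{OS}}$. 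By Hille--Yosida its generator $H$ is self-adjoint and nonnegative, and hence $T(z) := e^{-zH}$ is holomorphic and operator-norm bounded by one on $\{\mathrm{Re}\,z > 0\}$.

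Next, I extract the function $S_1(\tau+is\,|\,h)$ as a suitable matrix element. For $h \in \mathscr{S}(\mathbb{R}^3)$, fix $t_0 > 0$ and $g \in \mathscr{S}_+(\mathbb{R})$ with $\int g = 1$ concentrated near $t_0/2$; for an auxiliary $\chi \in \mathscr{S}(\mathbb{R}^3)$, consider
\[
F_{\chi, h}(z) := \bigl(\pi(\chi \otimes g),\, T(z - t_0)\, \pi(h \otimes g)\bigr)_{\mathrm{OS}},
\]
which is holomorphic in $z$ for $\mathrm{Re}\,z > t_0$ and coincides on the real axis with $\int (\bar\chi \star h)(\vec\xi)\, S_1(\vec\xi, z)\, d^3\vec\xi$, where $(\bar\chi \star h)(\vec\xi) := \int \overline{\chi(\vec x)} h(\vec x + \vec\xi)\, d^3\vec x$. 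Choosing a delta-approximating sequence $\chi_n$ with $\bar\chi_n \star h \to h$ in $\mathscr{S}(\mathbb{R}^3)$ and using the $^0\mathscr{S}'$-temperedness of $\mathcal{S}_2$ to control the limit, I set $S_1(\tau+is\,|\,h) := \lim_{n\to\infty} F_{\chi_n, h}(\tau + is)$. This is holomorphic on $\mathrm{Re}(\tau+is) > t_0$; the semigroup law $T(z_1) T(z_2) = T(z_1+z_2)$ makes the construction independent of $t_0$ and extends its domain of holomorphy to the entire right half-plane $\{\tau > 0\}$. The uniform bound $\|T(z)\|_{\mathrm{op}} \le 1$ then yields $|S_1(\tau + is\,|\,h)| \le C \|h\|_N$ in some fixed Schwartz seminorm $\|\cdot\|_N$ determined by the tempered-ness of $\mathcal{S}_2$, which delivers both the distributional interpretation in $[\mathscr{S}(\mathbb{R}_+) \hat\otimes \mathscr{S}(\mathbb{R})]'$ required in (b) and the continuity in $h \in \mathscr{S}(\mathbb{R}^3)$ required in (c).

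The main obstacle will be the extraction step: the OS form is sesquilinear in its two arguments, while the lemma requires a quantity \emph{linear} in $h$. The resolution is the spatial-sharpening trick above---introducing the auxiliary $\chi$ and letting $\chi_n \to \delta_{\vec 0}$---whose justification relies on the $^0\mathscr{S}$-temperedness of $\mathcal{S}_2$ (assumption (i)) together with the fact that, by the choice of $g$, the time arguments stay in a fixed compact slab around $t_0 > 0$ throughout the limit, safely away from the coincident diagonal where $\mathcal{S}_2$ is not controlled.
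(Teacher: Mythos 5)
Your overall skeleton matches the paper's: build the OS Hilbert space from the reflection-positive sesquilinear form, iterate Cauchy--Schwarz to get the contraction bound $\|T^\tau\|\le 1$, obtain a self-adjoint nonnegative generator $H$, and pass to the holomorphic semigroup $T^{\tau+is}=e^{-(\tau+is)H}$. Up to this point you and the paper are doing the same thing.

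The gap is in the extraction step. Your identity
\[
F_{\chi,h}(\tau) \;=\; \int d^{3}\vec{\xi}\; (\bar{\chi}\star h)(\vec{\xi})\, S_1(\vec{\xi},\tau)
\]
is not correct: both arguments of the OS form carry the temporal cutoff $g$, and after the change of variables $\xi=y-x$ what actually emerges is
\[
F_{\chi,h}(\tau) \;=\; \int d^{3}\vec{\xi}\, d\xi^4\; (\bar{\chi}\star h)(\vec{\xi})\,K(\xi^4-\tau+t_0)\,S_1(\vec{\xi},\xi^4),
\qquad K(w):=\int dv\,\overline{g(v)}\,g(w-v),
\]
a nontrivial time average of $S_1(\vec{\xi},\cdot)$ against the autocorrelation of $g$. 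Sending $\chi_n\to\delta$ removes only the spatial smearing. To recover the sharp value $S_1(\vec{\xi},\tau)$ you would also have to send $g\to\delta$, hence $t_0\to 0$, and then the ``time arguments stay in a compact slab away from the diagonal'' safeguard you invoke evaporates --- precisely the regime where the $^{0}\mathscr{S}'$ temperedness of $\mathcal{S}_2$ gives no uniform control. Even setting that aside, you would still need uniform-on-compacts convergence of the holomorphic functions $F_{\chi_n,h}$ to keep the limit holomorphic, and the naive bound $|F_{\chi_n,h}|\le \|\pi(\chi_n\otimes g)\|\,\|\pi(h\otimes g)\|$ blows up as $\chi_n\to\delta$, so it cannot be used to justify either the limit or the claimed estimate $|S_1(\tau+is|h)|\le C\|h\|_N$. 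The paper avoids all of this by applying the Schwartz nuclear theorem to the bilinear form $(v(f),T^{is}v(g))_{\mathscr{K}}$, extracting a translation-invariant distributional kernel $S_1(\xi|s)$ directly, then smearing only in $\vec{\xi}$ and observing that $S_1(\tau,s|h)$ satisfies the Cauchy--Riemann equations in the distributional sense, whence it is a genuine holomorphic function by the regularity result cited from Vladimirov. That nuclear-theorem step is the missing device; without it, your sharpening trick does not deliver property (a) of the lemma.
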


\begin{proof}

First of all, using (i) temperedness and (ii) translational invariance, there exists $S_{1}(\xi) \in \mathscr{S}' (\mathbb{R}^{4}_+)$ such that $\mathcal{S}_2 (x_1, x_2) = S_1 (x_2 - x_1)$ for $x_1^4 < x_2^4$.

We shall prove the claim with the following steps:
\begin{enumerate}
\renewcommand{\labelenumi}{Step \arabic{enumi}.}
\item constructing a Hilbert space with the form (\ref{eq:two_pt_ref_pos}),
\item defining spatial and (imaginary-)temporal translation operator and ``Hamiltonian,'' 
\item analytic continuation using the holomorphic semigroup generated by the ``Hamiltonian.''
\end{enumerate}

\textit{Step 1. defining a Hilbert space with the form (\ref{eq:two_pt_ref_pos}).}

Let us first begin with constructing a Hilbert space. For $f,~g \in \mathscr{S}_+ (\mathbb{R}^{4})$, we define a sesquilinear form on $\mathscr{S}_+ (\mathbb{R}^{4}) \times \mathscr{S}_+ (\mathbb{R}^{4})$ by, 
\begin{align}
    (f,g) := \mathcal{S}_2 (\Theta f^\star \times g) = \int d^4x d^4y~ \bar{f}(\vartheta x) g(y) S_1 (y-x), \label{eq:sesqui_form_on_S}
\end{align}
which is positive semi-definite: $\| f \|^2 := (f,f) \geq 0$ from (\ref{eq:two_pt_ref_pos}).
We introduce $\mathscr{N}$ as the space of all zero norm vectors, i.e.
\begin{align}
\mathscr{N} := \{ f \in \mathscr{S}_+ (\mathbb{R}^{4}) ~;~ \| f \|^2 = 0 \}
\end{align}
We then obtain a pre-Hilbert space $\mathscr{S}_+ (\mathbb{R}^{4}) / \mathscr{N}$ and denote its completion by $\mathscr{K}$.
Therefore, $\mathscr{K}$ is a Hilbert space and contains $\mathscr{S}_+ (\mathbb{R}^{4}) / \mathscr{N}$ as a dense subset $\mathscr{D}_0$, namely $\mathscr{S}_+ (\mathbb{R}^{4}) / \mathscr{N} \simeq \mathscr{D}_0 \subset \mathscr{K}$.

We denote the (continuous) natural map by $v : \mathscr{S}_+ (\mathbb{R}^{4}) \rightarrow \mathscr{K}$, whose image is $\mathscr{D}_0$, and the inner product on $\mathscr{K}$ by $(\cdot , \cdot)_\mathscr{K}$. It follows that, for $f,~g \in \mathscr{S}_+ (\mathbb{R}^{4})$,
\begin{align}
(v(f) , v(g))_\mathscr{K} = (f,g).
\end{align}

\textit{Step 2. constructing translation operators and ``Hamiltonian.''}

Next, we define translational operators on $\mathscr{K}$.

For spatial directions, we define $\hat{U}_s(\vec{a})$ on $\mathscr{S}_+ (\mathbb{R}^{4})$ by
\begin{align}
    (\hat{U}_s(\vec{a}) f) (x) := f(x - a),
\end{align}
for $a = (\vec{a},0)$ and $f \in \mathscr{S}_+ (\mathbb{R}^{4})$. Note that $(  \hat{U}_s(\vec{a}) f,  \hat{U}_s(\vec{a}) g ) = (f,g)$. We then define the unitary operator $U_s(\vec{a})$ on $\mathscr{K}$ by a continuous extension of $U_s(\vec{a})$ defined on $\mathscr{D}_0$,
\begin{align}
    U_s(\vec{a}) v(f) := v \left( \hat{U}_s(\vec{a}) f \right).
\end{align}

Similarly, for $\tau >0$, we define $\hat{T}^\tau$ on $\mathscr{S}_+ (\mathbb{R}^{4})$ by
\begin{align}
    (\hat{T}^\tau f) (x) := f(\vec{x}, x^4 - \tau).
\end{align}
Note that $\tau >0$ is necessary to guarantee $\operatorname{supp} (\hat{T}^\tau f) \subset \mathbb{R}^3 \times [0,\infty)$. Recalling (\ref{eq:sesqui_form_on_S}), we have
\begin{align}
    (\hat{T}^\tau f, g)  = ( f, \hat{T}^\tau g),   \label{eq:hat_T_tau_symm}
\end{align}
for $f,~g \in \mathscr{S}_+ (\mathbb{R}^{4})$ and $\tau \geq 0$.

Next, let us derive the following bound: for any $\tau >0$, $f \in \mathscr{S}_+ (\mathbb{R}^{4})$,
\begin{align}
    (f, \hat{T}^\tau f) \leq (f,f) = \| f \|^2. \label{eq:hat_T_tau_contraction}
\end{align}
Beforehand, we shall see that $(f, \hat{T}^\tau f)$ grows at most polynomially in $\tau$. Indeed, by the definition (\ref{eq:sesqui_form_on_S}),
\begin{align}
    (f, \hat{T}^\tau f) = \int d^4x d^4y~ \bar{f}(x) f(y) S_1 (\vec{y} - \vec{x}, \tau + x^4 + y^4),
\end{align}
which shows $(f, \hat{T}^\tau f)$ increases at most polynomially as $\tau \rightarrow \infty$
because of the temperedness of $S_{1}(\xi) \in \mathscr{S}' (\mathbb{R}^{4}_+)$. Then, by using the Cauchy-Schwarz inequality and (\ref{eq:hat_T_tau_symm}) recursively, we have
\begin{align}
     (f, \hat{T}^\tau f) &\leq (f,f)^{1/2} (\hat{T}^\tau f, \hat{T}^\tau f)^{1/2}
     = (f,f)^{1/2} (f, \hat{T}^{2 \tau} f)^{1/2} \notag \\
     &\leq (f,f)^{1/2 + 1/4} (\hat{T}^{2 \tau} f, \hat{T}^{2 \tau} f)^{1/4} \notag \\
     &= (f,f)^{1/2 + 1/4} (f, \hat{T}^{4 \tau} f)^{1/4} \leq \cdots \notag \\
     &\leq (f,f)^{\frac{1}{2} + \frac{1}{4} + \cdots +  \frac{1}{2^n}} \exp \left[ 2^{-n} \ln  (f, \hat{T}^{2^n \tau} f) \right], \label{eq:bound_f_hatTtau_f_derivation}
\end{align}
for all positive integer $n$, positive real $\tau > 0$, and $f \in \mathscr{S} (\mathbb{R}^{4}_+)$. As $n \rightarrow \infty$, $ 2^{-n} \ln  (f, \hat{T}^{2^n \tau} f) \rightarrow 0$ due to (at most) linear growth of $\ln  (f, \hat{T}^{2^n \tau} f)$ in $n$.
Therefore, the $n \rightarrow \infty$ limit of (\ref{eq:bound_f_hatTtau_f_derivation}) gives the bound (\ref{eq:hat_T_tau_contraction}).

In particular, for any $f \in \mathscr{N}$, $\hat{T}^\tau f$ is also zero-norm $\hat{T}^\tau f \in \mathscr{N}$, since
\begin{align}
    (\hat{T}^\tau f, \hat{T}^\tau f) = (f, \hat{T}^{2 \tau} f) \leq (f,f) = 0.
\end{align}
Thus the natural map of $\hat{T}^\tau$ on $\mathscr{S} (\mathbb{R}^{4}_+) / \mathscr{N}$ is well-defined.
We define $T_0 ^\tau$ to be the operator defined on $\mathscr{D}_0$:
\begin{align}
    T_0 ^\tau v(f) := v ( \hat{T}^\tau f ).
\end{align}

So far, $T_0 ^\tau$ is
defined on the dense domain $\mathscr{D}_0$,
symmetric from (\ref{eq:hat_T_tau_symm}),
and bounded from (\ref{eq:hat_T_tau_contraction}). Then, we can extend $T_0^\tau$ to be defined on the whole $\mathscr{K}$ by continuity and have a self-adjoint contraction\footnote{From (\ref{eq:hat_T_tau_contraction}), the operator norm of $T^\tau$ is less than or equal to 1: $\| T^\tau \|_{op.} \leq 1$. A bounded operator with this property is called contraction.} $T^\tau$ on $\mathscr{K}$.
Note that the semigroup $\{ T^\tau \}_{\tau \geq 0}$ is strongly continuous due to (1) the boundedness $\| T^\tau \|_{op.} \leq 1$ from ({\ref{eq:hat_T_tau_contraction}}) and (2) continuity for points in $\mathscr{D}_0$, $\lim_{\tau \downarrow 0} \|T^\tau v(f) - v(f) \| = 0$ from the definition ({\ref{eq:sesqui_form_on_S}}).

Let us define the infinitesimal generator of the semigroup $\{ T^\tau \}_{\tau \geq 0}$, ``Hamiltonian'', by $H$. Formally\footnote{In terms of the original space $\mathscr{S} (\mathbb{R}^{4}_+)$, $H$ can be regarded as $- \partial_4 = - \frac{\partial}{\partial x^4}$. Note that the reflection $\vartheta$ in (\ref{eq:sesqui_form_on_S}) makes $- \frac{\partial}{\partial x^4}$ hermitian. More precisely, $H$ can be defined on the dense domain $\mathscr{D}_0$, and
\begin{align}
    H v(f) = v (- \partial_4 f),
\end{align}
for $f \in \mathscr{S} (\mathbb{R}^{4}_+)$.},
\begin{align}
    H := \lim_{\tau \downarrow 0} \frac{1}{\tau} \left( 1 - T^\tau \right).
\end{align}

Since the family of self-adjoint operators $\{ T^\tau \}_{\tau \geq 0}$ satisfies (i) $\| T^\tau \|_{op.} \leq 1$ (ii) $\{ T^\tau \}_{\tau \geq 0}$ form a semigroup (iii) this semigroup is strongly continuous (iv) $T^0 = 1$, a variant of Stone's theorem yields that the infinitesimal generator $H$ is a self-adjoint operator, e.g., \cite[Theorem VIII.8 and page 315]{Reed-Simon-vol1}. Therefore, we can define a strongly continuous one-parameter group of unitary operators on $\mathscr{K}$ generated by $H$, $\{ T^{is} := e^{-i H s} \}_{s \in \mathbb{R}}$ by Stone's theorem. $T^{is}$ corresponds to the real-time translation operator.

Finally, we define a ``holomorphic semigroup''
\begin{align}
\{ T^{\tau + is}:= T^{\tau} T^{is} ~;~ \tau > 0,~s \in \mathbb{R} \}.
\end{align}

\textit{Step 3. analytic continuation using the holomorphic semigroup generated by the ``Hamiltonian.''}

First, let us consider 
\begin{align}
(v(f), T^{is} v(g))_{\mathscr{K}},
\end{align}
which is a continuous bilinear functional on $(\bar{f}(\vartheta x),g(y)) \in \mathscr{S}_-(\mathbb{R}^4) \times \mathscr{S}_+(\mathbb{R}^4)$, where $\mathscr{S}_-(\mathbb{R}^4) := \{ f(\vartheta x) ~;~ f(x) \in             \mathscr{S}_+(\mathbb{R}^4)$  \}.
From the Schwartz nuclear theorem, we can write this as a continuous linear functional of $\Theta f^\star \times g$,
\begin{align}
(v(f), T^{is} v(g))_{\mathscr{K}} = \int dxdy~ (\Theta f^\star \times g)(x,y) \mathcal{S}_2(x,y|s),
\end{align}
where $\mathcal{S}_2(x,y|s)$ is a distribution over the space $\mathscr{S}_-(\mathbb{R}^4) \hat{\otimes} \mathscr{S}_+(\mathbb{R}^4) \simeq \{ f(x,y) \in \mathscr{S} (\mathbb{R}^{4 \cdot 2})~;~ f = 0 \mathrm{~unless~} x^4 < 0 < y^4 \}$.
Due to the translational invariance arising from $[U_s(\vec{a}), T^{is}] = 0$ and $[T^{a^4}, T^{is}] = 0$, $\mathcal{S}_2(x+a,y+a|s) = \mathcal{S}_2(x,y|s)$ for $a \in \mathbb{R}^3 \times [0,\infty)$, from which $\mathcal{S}_2(x,y|s) = S_1(y-x|s)$ with $S_1(y-x|s) \in  \mathscr{S}' (\mathbb{R}^{4}_+)$.

Note that $S_1(\xi|s)$ satisfies
\begin{align}
    S_1(\xi|0) = S_1(\xi) \label{eq:Lem14-s=0}.
\end{align}

Moreover, the unitarity of $T^{is}$ provides the upper bound on $(v(f), T^{is} v(g))_{\mathscr{K}}$ in $s$. We can thus regard $S_1(y-x|s) \in  [\mathscr{S} (\mathbb{R}^{4}_+) \hat{\otimes} \mathscr{S}(\mathbb{R})]'$.

Using $S_1(\xi|s)$, we also have
\begin{align}
&(v(f), T^{\tau + is} v(g))_{\mathscr{K}} \notag \\
&= \int dxdy~ (\Theta f^\star \times g)(x,y) S_1(\vec{y} - \vec{x}, y^4 - x^4 + \tau |s).
\end{align}
From the construction of $T^{\tau + is}$, the left-hand side is holomorphic in $\tau + is$ for $\tau > 0$. Therefore, by using the uniqueness of the Schwartz nuclear theorem, $S_1(\vec{\xi},\tau |s) $ satisfies the Cauchy-Riemann equation in the sense of a distribution.

Now, we consider one smeared in the spatial directions
\begin{align}
    S_1(\tau,s|h) := \int d^3 \vec{\xi}~ S_1(\vec{\xi},\tau|s)  h(\vec{\xi}) \in [\mathscr{S} (\mathbb{R}_+) \hat{\otimes} \mathscr{S} (\mathbb{R})]', \label{eq:Lem14-s_1-tau-s-h}
\end{align}
for $h(\vec{\xi}) \in \mathscr{S} (\mathbb{R}^3)$. The Cauchy-Riemann equation of $ S_1(\tau,s|h)$ holds for $\tau > 0$ (still in the sense of a distribution). From \cite[p. 31]{Vladimirov}, $ S_1(\tau,s|h)$ is a holomorphic \textit{function} in $\tau + is$ for $\tau > 0$.

$ S_1(\tau,s|h)$ also satisfies
\begin{enumerate}
\renewcommand{\labelenumi}{(\alph{enumi})}
    \item On the real axis $s = 0$, $S_1( \tau| h) = \int d^3 \vec{\xi}~ S_1(\vec{\xi},\tau) h(\vec{\xi})$ from ({\ref{eq:Lem14-s=0}}),
    \item $S_1 (\tau + is |h)$ can be regarded as an element of $ [\mathscr{S} (\mathbb{R}_+) \hat{\otimes} \mathscr{S} (\mathbb{R})]'$ from the definition ({\ref{eq:Lem14-s_1-tau-s-h}}),
    \item $S_1 (\tau + is |h)$ is continuous in $h(\vec{\xi}) \in \mathscr{S} (\mathbb{R}^{3})$ from $S_1(y-x|s) \in  [\mathscr{S} (\mathbb{R}^{4}_+) \hat{\otimes} \mathscr{S}(\mathbb{R})]'$.
\end{enumerate}

Hence, this holomorphic function is what we desire. This completes the proof. 
\end{proof}

Furthermore, we need the following lemma to guarantee the existence and temperedness of the boundary value:

\begin{lemma}
\label{lem:boundary_value}

Let $S(\tau + is)$ be a holomorphic function defined on the right-half plane $\tau > 0$ that can be identified with an element of $ [\mathscr{S} (\mathbb{R}_+) \hat{\otimes} \mathscr{S} (\mathbb{R})]'$, the dual space of $\mathscr{S} (\mathbb{R}_+) \hat{\otimes} \mathscr{S} (\mathbb{R})$.
Then, the boundary value of $S(\tau + is)$ at $\tau \rightarrow 0$ is a tempered distribution: $\lim_{\tau \downarrow 0} ~S(\tau + is) \in \mathscr{S}' (\mathbb{R})$. Moreover, if such a holomorphic function $S(\tau + is|h)$ is a continuous linear functional of $h$ on another function space for each $\tau >0, s \in \mathbb{R}$, then the smeared boundary value is also continuous in $h$.
\footnote{This proof is based on Lemma 8.7 in \cite{OS73} and Theorem 2-10 in \cite{Streater:1989vi}.}
\end{lemma}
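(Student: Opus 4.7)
The plan is to reduce the lemma to a standard boundary-value result for holomorphic functions of tempered growth, which requires first extracting a pointwise polynomial bound on $|S(\tau+is)|$ from the abstract distributional identification.

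The first step is to exploit the continuity of $S$ as a linear functional on $\mathscr{S}(\mathbb{R}_+)\hat{\otimes}\mathscr{S}(\mathbb{R})$, which yields a constant $C>0$ and nonneg integers $N,M$ such that
\begin{align*}
|S(\phi)|\leq C\sum_{|\alpha|\leq N}\sup_{\tau\geq 0,\,s\in\mathbb{R}}(1+\tau)^{M}(1+|s|)^{M}|\partial^{\alpha}\phi(\tau,s)|
\end{align*}
for every admissible test function $\phi$. Fix $\tau_{0}>0$ and $s_{0}\in\mathbb{R}$, let $\psi\in C_{c}^{\infty}(\mathbb{R}^{2})$ be a radially symmetric bump supported in the unit disk with $\int\psi=1$, and set $\psi_{r}(\tau,s):=r^{-2}\psi((\tau-\tau_{0})/r,(s-s_{0})/r)$ with $r=\tau_{0}/2$, so that the support of $\psi_{r}$ lies in the open right-half plane. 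By the two-dimensional mean value property for the holomorphic function $S$,
\begin{align*}
S(\tau_{0}+is_{0})=\int S(\tau+is)\,\psi_{r}(\tau,s)\,d\tau\,ds.
\end{align*}
Because $\partial^{\alpha}\psi_{r}$ scales like $r^{-2-|\alpha|}$, inserting $\psi_{r}$ into the seminorm estimate produces a pointwise bound
\begin{align*}
|S(\tau_{0}+is_{0})|\leq C'(1+|s_{0}|)^{M}(1+\tau_{0}^{-1})^{N+2},
\end{align*}
valid for all $\tau_{0}>0$ and $s_{0}\in\mathbb{R}$.

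With this tempered growth estimate in hand, the second step is to invoke the classical half-plane theorem on boundary values of holomorphic functions of polynomial growth, closely following the proof of Theorem 2-10 in \cite{Streater:1989vi}. The mechanism is that for any $f\in\mathscr{S}(\mathbb{R})$, repeated integration by parts converts negative powers of $\tau$ into $\tau$-derivatives of the smeared holomorphic function, which the Cauchy--Riemann equations re-express as $s$-derivatives absorbed by the Schwartz decay of $f$. After finitely many such manipulations the limit $\tau\downarrow 0$ can be taken inside the integral, producing a linear functional $W$ on $\mathscr{S}(\mathbb{R})$ satisfying an estimate of the form $|W(f)|\leq C''\sum_{k\leq K}\sup_{s}(1+|s|)^{K}|f^{(k)}(s)|$, which certifies $W\in\mathscr{S}'(\mathbb{R})$.

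For the parametrized assertion, suppose $h\mapsto S(\tau+is|h)$ is continuous as a functional of $h$ on some auxiliary topological vector space. Then the continuity constants above can be chosen uniformly for $h$ ranging over any equicontinuous family, by the uniform boundedness principle. This uniformity propagates through both the pointwise bound and the boundary-value construction, so the resulting tempered distribution depends continuously on $h$. The main obstacle is the very first step, namely converting the abstract distributional continuity into a concrete pointwise polynomial estimate amenable to the classical boundary-value machinery; once this conversion is achieved, the remainder is a routine application of standard techniques and parallels Lemma 8.7 of Osterwalder--Schrader \cite{OS73}.
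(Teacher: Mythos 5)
Your proof follows the paper's approach closely: both extract a pointwise polynomial bound in $\tau^{-1}$ and $s$ via the mean-value property applied to a scaled radial bump (your $\psi_r$ plays the role of the paper's $h_0(\sqrt{(\tau-\tau_0)^2+(s-s_0)^2})$), then establish the boundary value by trading $\tau$-derivatives for $s$-derivatives through the Cauchy--Riemann equations, exactly in the style of Theorem~2-10 of Streater--Wightman that you cite. The only material difference is in the parametrized assertion: where the paper propagates the $h$-dependence directly through the explicit constants $C_{n,f}$ and $\left|\partial_\tau^k S(\tau_*|h;f)\right|$, you invoke the uniform boundedness principle over ``equicontinuous families'' of $h$ --- a slight slip since $h$ are test functions rather than maps, and more importantly the family $\{h\mapsto S(\tau+is|h)\}_{\tau,s}$ is not pointwise bounded uniformly as $\tau\downarrow 0$ (the polynomial blow-up in $\tau^{-1}$ is precisely what the lemma is about), so UBP yields bounds only for $\tau\geq\delta$ and the $\delta$-dependence must still be tracked, which in effect reduces to the paper's direct argument.
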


\begin{proof}
We shall prove that, for any test function $f(s) \in \mathscr{S}(\mathbb{R})$, the limit
\begin{align}
\lim_{\tau \downarrow 0} \int ds ~S(\tau + is) f(s) \label{eq:lem2_boundary_value}
\end{align}
exists and is continuous in $f(s) \in \mathscr{S}(\mathbb{R})$, i.e., $\lim_{\tau \downarrow 0} \int ds ~S(\tau + is) f(s) \rightarrow 0 $ as $f \rightarrow 0$ in $\mathscr{S}(\mathbb{R})$.

Let us proceed with the following steps:
\begin{enumerate}
\renewcommand{\labelenumi}{Step \arabic{enumi}.}
\item polynomial growth in $s$ and $\tau^{-1}$,
\item a bound for $S(\tau + is)$ smeared with a test function.
\end{enumerate}

\textit{Step 1. polynomial growth in $s$ and $\tau^{-1}$.}

We show that the holomorphic function $S(\tau + is)$ grows at most polynomially in $s$ and $\tau^{-1}$.

Let $\tau_0 + i s_0$ be an arbitrary point on the right-half plane.
The mean-value property yields
\begin{align}
    S(\tau_0 + is_0) = \int_0 ^{2 \pi} \frac{d \varphi}{2 \pi} S(\tau_0 + is_0 + r e^{i \varphi}),
\end{align}
for arbitrary $0<r<\tau_0$. We may average this expression in $r$ with some weight.
Let $h(r)$ be a smooth function with $\operatorname{supp} h \subset [\frac{1}{4}, \frac{1}{2}]$ satisfying $\int_0 ^\infty dr~r h(r) = 1$.
We define $h_0(r) := \tau_0^{-2} h( \tau_0^{-1} r)$, which satisfies $\int_0 ^\infty dr~r h_0(r) = 1$ and $\operatorname{supp} h_0 \subset [\frac{\tau_0}{4}, \frac{\tau_0}{2}] (\subset [0,\tau_0))$.

Therefore, we have
\begin{align}
    S(\tau_0 + is_0) &= \int_0 ^\infty dr~r h_0(r) \int_0 ^{2 \pi} \frac{d \varphi}{2 \pi} S(\tau_0 + is_0 + r e^{i \varphi}) \notag \\
    &= \int d\tau ds ~S(\tau + is) h_0 (\sqrt{(\tau -\tau_0)^2 + (s - s_0)^2 }).
\end{align}
Since $h_0 (\sqrt{(\tau -\tau_0)^2 + (s - s_0)^2 }) \in \mathscr{S} (\mathbb{R}_+) \hat{\otimes} \mathscr{S} (\mathbb{R})$ due to the compactness of $\operatorname{supp} h_0$, there exist non-negative integers $M,N \in \mathbb{Z}_{\geq 0}$ and a constant $C > 0$ such that
\begin{align}
    |S(\tau_0 + is_0)| &\leq C \| h_0 (\sqrt{(\tau -\tau_0)^2 + (s - s_0)^2 }) \|_{M,N},
\end{align}
where $\| \cdot \|_{M,N}$ is the norm\footnote{Recall that the topology of the spaces of test functions are introduced with the family of these (semi-)norms.} defined by
\begin{align}
    &\| f(\tau,s) \|_{M,N} := \notag \\
    &~~ \sum_{\substack{k_1,k_2 \in \mathbb{Z}_{\geq 0}\\ k_1 + k_2 \leq M}}  \sum_{\substack{\ell_1,\ell_2 \in \mathbb{Z}_{\geq 0}\\ \ell_1 + \ell_2 \leq N}}
    \sup_{\tau,s} \left| \tau^{k_1} s^{k_2} \frac{\partial^{\ell_1}}{\partial \tau^{\ell_1}} \frac{\partial^{\ell_2}}{\partial s^{\ell_2}} f(\tau,s)  \right|.
\end{align}
\begin{widetext}
The bound for $|S(\tau_0 + is_0)|$ can be evaluated as
\begin{align}
    |S(\tau_0 + is_0)| &\leq C \| h_0 (\sqrt{(\tau -\tau_0)^2 + (s - s_0)^2 }) \|_{M,N} \notag \\
    &= C \tau_0^{-2} \sum_{\substack{k_1,k_2 \in \mathbb{Z}_{\geq 0}\\ k_1 + k_2 \leq M}}  \sum_{\substack{\ell_1,\ell_2 \in \mathbb{Z}_{\geq 0}\\ \ell_1 + \ell_2 \leq N}}
    \sup_{\tau,s} \left| \tau^{k_1} s^{k_2} \frac{\partial^{\ell_1}}{\partial \tau^{\ell_1}} \frac{\partial^{\ell_2}}{\partial s^{\ell_2}} h (\sqrt{(\tau/\tau_0 - 1)^2 + (s/\tau_0 - s_0/\tau_0)^2 }) \right| \notag \\
    &= C \sum_{\substack{k_1,k_2 \in \mathbb{Z}_{\geq 0}\\ k_1 + k_2 \leq M}}  \sum_{\substack{\ell_1,\ell_2 \in \mathbb{Z}_{\geq 0}\\ \ell_1 + \ell_2 \leq N}} \tau_0^{k_1 + k_2 - \ell_1 - \ell_2 - 2} \sup_{\tau',s'}
    \left| (1 + \tau')^{k_1} (s_0/\tau_0 + s')^{k_2} \frac{\partial^{\ell_1}}{\partial \tau'^{\ell_1}} \frac{\partial^{\ell_2}}{\partial s'^{\ell_2}} h (\sqrt{\tau'^2 + s'^2}) \right| \notag \\
   &= C \sum_{\substack{k_1,k_2 \in \mathbb{Z}_{\geq 0}\\ k_1 + k_2 \leq M}}  \sum_{\substack{\ell_1,\ell_2 \in \mathbb{Z}_{\geq 0}\\ \ell_1 + \ell_2 \leq N}} \sum_{m = 0}^{k_2} \frac{k_2!}{m! (k_2-m)!}
   \tau_0^{k_1 + k_2 - \ell_1 - \ell_2 -m - 2} s_0^{m} \notag \\
    &~~~~~~~~~~~~~~~~~~~~~~~~ \times \sup_{\tau',s'}
    \left| (1 + \tau')^{k_1} (s')^{k_2-m} \frac{\partial^{\ell_1}}{\partial \tau'^{\ell_1}} \frac{\partial^{\ell_2}}{\partial s'^{\ell_2}} h (\sqrt{\tau'^2 + s'^2}) \right|.
\end{align}
\end{widetext}
Note that the last term of $\sup_{\tau',s'}$ does not depend on $\tau_0,s_0$.
Hence, we finally obtain that, for $0< \tau_0 < \tau_*$ with an arbitrary fixed $\tau_*$, there exists a polynomial $P(s_0)$ and an integer $n \in \mathbb{Z}$ such that
\begin{align}
    |S(\tau_0 + is_0)| &\leq \tau_0^{-n} P(s_0). \label{eq:lem_poly_growth}
\end{align}

\textit{Step 2. a bound for smeared $S(\tau + is)$.}

So far, we have shown that $|S(\tau + is)|$ is of at most polynomial growth in $\tau^{-1}$ as $\tau \downarrow 0$. 
To prove the existence and continuity of the limit (\ref{eq:lem2_boundary_value}), we derive a stronger bound for $S(\tau + is)$ smeared by a test function.

Let us consider $S(\tau + is)$ smeared by a test function $f(s) \in \mathscr{S}(\mathbb{R})$
\begin{align}
    S(\tau;f) := \int ds~ S(\tau + is) f(s). \label{eq:lem15-step2-def}
\end{align}
By the Cauchy-Riemann equation $\frac{\partial}{\partial \tau} S(\tau + is) = - i \frac{\partial}{\partial s} S(\tau + is)$, we have
\begin{align}
 \frac{\partial^{n+1}}{\partial \tau^{n+1}}   S(\tau;f) &= \int ds~ S(\tau + is) \left( i \frac{\partial}{\partial s} \right)^{n+1} f(s) \notag \\
 &= i^{n+1} S(\tau ; \partial_s^{n+1} f),
\end{align}
and therefore, for sufficiently small $\tau$,
\begin{align}
 \left| \frac{\partial^{n+1}}{\partial \tau^{n+1}}  S(\tau ; f)  \right| &\leq \left( \int ds~ P(s) \left( \frac{\partial}{\partial s}  \right)^{n+1} f(s) \right) \tau^{-n} \notag \\
 &= C_{n,f} \tau^{-n}, \label{eq:Appendix_B_S_tau_f_estimate}
\end{align}
where $C_{n,f} > 0$ is a positive constant and we have used the result of the previous step (\ref{eq:lem_poly_growth}).
Note that $C_{n,f} \rightarrow 0$ as $f \rightarrow 0$ in $\mathscr{S}(\mathbb{R})$.

\begin{widetext}
Moreover, note that $S(\tau ; f)$ is represented by the iterative integration
\begin{align}
S(\tau ; f) &= (-1)^{n+1}\int_\tau ^{\tau_*} d \tau_1 \int_{\tau_1} ^{\tau_*} d \tau_2 \cdots \int_{\tau_n} ^{\tau_*} d \tau_{n+1} \frac{\partial^{n+1} S}{\partial \tau^{n+1}}  (\tau_{n+1} ; f) + \sum_{k = 0} ^n \frac{1}{k!} (\tau - \tau_*)^k \frac{\partial^{k} S}{\partial \tau^{k}}  (\tau_* ; f).
\end{align}
Because of the estimate ({\ref{eq:Appendix_B_S_tau_f_estimate}}), the limit $\tau \rightarrow +0$ converges.
Thus, the boundary value the boundary value ({\ref{eq:lem2_boundary_value}}) exists.
For the continuity in $\mathscr{S}(\mathbb{R})$, we obtain the bound
\begin{align}
|S(\tau;f)| &\leq C_{n,f} \int_\tau ^{\tau_*} d \tau_1 \int_{\tau_1} ^{\tau_*} d \tau_2 \cdots \int_{\tau_n} ^{\tau_*} d \tau_{n+1} \tau^{-n} + \sum_{k = 0} ^n \frac{1}{k!} |\tau - \tau_*|^k  \left| \frac{\partial^{k} S}{\partial \tau^{k}}  (\tau_*;f) \right|,
\end{align}
which implies
\begin{align}
\lim_{\tau \downarrow 0} |S(\tau;f)| &\leq C_{n,f} \int_0 ^{\tau_*} d \tau_1 \int_{\tau_1} ^{\tau_*} d \tau_2 \cdots \int_{\tau_n} ^{\tau_*} d \tau_{n+1} \tau^{-n} + \sum_{k = 0} ^n \frac{1}{k!} | \tau_*|^k  \left| \frac{\partial^{k} S}{\partial \tau^{k}}  (\tau_*;f) \right|. \label{eq:lem_bound_for_smeared}
\end{align}
\end{widetext}
Therefore, the right-hand side of (\ref{eq:lem_bound_for_smeared}) tends to vanish as $f \rightarrow 0$ in $\mathscr{S}(\mathbb{R})$, which establishes the continuity of the boundary value in $\mathscr{S}(\mathbb{R})$.
Hence, the boundary value of a given holomorphic function is a tempered distribution.

For the latter assertion, suppose the holomorphic function $S(\tau + is|h)$ is a continuous linear functional on another space of test functions $h$. Similarly to ({\ref{eq:lem15-step2-def}}), let $S(\tau|h;f)$ denote the function smeared by a test function $f(s) \in \mathscr{S}(\mathbb{R})$.
From the assumed continuity, $h \rightarrow 0$ yields $S(\tau + is|h) \rightarrow 0$ for each $\tau >0, s \in \mathbb{R}$. Thus, $C_{n,f}$ and $\left| \frac{\partial^{k} S}{\partial \tau^{k}}  (\tau_*|h;f) \right|$ tend to vanish as $h \rightarrow 0$. Therefore, the bound ({\ref{eq:lem_bound_for_smeared}}) implies that the smeared boundary value $\lim_{\tau \downarrow 0} S(\tau|h;f)$ is continuous in $h$.

This completes the proof.
\end{proof}

\begin{theorem}
\label{thm:ref_pos_yields_temperedness}
Suppose that the two-point Schwinger function $\mathcal{S}_2$ satisfies 
\begin{enumerate}
    \item temperedness $\mathcal{S}_2 \in  ~^0 \mathscr{S}' (\mathbb{R}^{4 \cdot 2})$,
    \item translational invariance $\mathcal{S}_2 (x_1 + a, x_2 + a) = \mathcal{S}_2(x_1 , x_2)$ for all $a \in \mathbb{R}^4$, and
    \item the reflection positivity for the two-point sector (\ref{eq:two_pt_ref_pos}).
\end{enumerate}
Then, the reconstructed Wightman function is a tempered distribution.
\end{theorem}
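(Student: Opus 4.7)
The plan is to assemble the two preceding lemmas and then promote the partially-smeared boundary value to a genuine tempered distribution on $\mathbb{R}^4$. First, I invoke Lemma \ref{lem:OS_analytic_cont}: for every spatial test function $h \in \mathscr{S}(\mathbb{R}^3)$, the smeared Schwinger function admits a holomorphic continuation $S_1(\tau + is\,|\,h)$ to the right half-plane $\tau > 0$, which belongs to $[\mathscr{S}(\mathbb{R}_+)\hat{\otimes}\mathscr{S}(\mathbb{R})]'$ and is continuous in $h$.

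Next, I apply Lemma \ref{lem:boundary_value} to this holomorphic function. For each fixed $h$, the boundary value $W_h(s) := \lim_{\tau \downarrow 0} S_1(\tau + is\,|\,h)$ exists as an element of $\mathscr{S}'(\mathbb{R})$. Moreover, the ``furthermore'' clause of Lemma \ref{lem:boundary_value} gives joint continuity: for every test function $f(s) \in \mathscr{S}(\mathbb{R})$, the pairing $W_h(f) := \lim_{\tau\downarrow 0} \int ds\,S_1(\tau+is\,|\,h) f(s)$ is continuous in $h \in \mathscr{S}(\mathbb{R}^3)$ as well.

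The third step is to assemble these pointwise statements into a single tempered distribution on $\mathbb{R}^4$. The map $(h,f) \mapsto W_h(f)$ is a separately continuous bilinear form on $\mathscr{S}(\mathbb{R}^3) \times \mathscr{S}(\mathbb{R})$. Since $\mathscr{S}(\mathbb{R}^3)$ and $\mathscr{S}(\mathbb{R})$ are Fr\'echet (in fact nuclear), separate continuity implies joint continuity, and the Schwartz nuclear theorem then produces a unique $W_1 \in \mathscr{S}'(\mathbb{R}^4) \simeq [\mathscr{S}(\mathbb{R}^3) \hat{\otimes} \mathscr{S}(\mathbb{R})]'$ such that $W_1(h \otimes f) = W_h(f)$ for all $h,f$. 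By construction and the identification (\ref{eq:W_S_connection_A})--(\ref{eq:W_S_connection_B}), this $W_1$ is precisely the reconstructed two-point Wightman function: its analytic continuation to imaginary times reproduces $S_1$ on $\xi^4 > 0$, matching the reconstruction prescription.

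The main obstacle, and the only place where one has to be careful, is the joint-continuity / nuclear-theorem step: Lemma \ref{lem:boundary_value} a priori yields only continuity in $f$ for each fixed $h$ and continuity in $h$ for each fixed $f$, not a quantitative bilinear bound. Fortunately, nuclearity of the Schwartz spaces closes this gap automatically, so no further analytic estimates are needed. Once $W_1 \in \mathscr{S}'(\mathbb{R}^4)$ is obtained, the theorem follows immediately; combined with Theorem \ref{thm:nontempered}, this establishes Theorem \ref{thm:violation_ref_pos}, since complex singularities force non-temperedness of the boundary value and hence forbid the reflection positivity (\ref{eq:two_pt_ref_pos}).
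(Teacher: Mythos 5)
Your proof is correct and follows the same route as the paper's: combine Lemma~\ref{lem:OS_analytic_cont} and Lemma~\ref{lem:boundary_value} to obtain a separately continuous bilinear functional on $\mathscr{S}(\mathbb{R}^3)\times\mathscr{S}(\mathbb{R})$, then upgrade it to a tempered distribution on $\mathbb{R}^4$ via the Schwartz nuclear theorem. The paper's own proof simply states that continuity of the bilinear form ``immediately follows'' from the lemmas; your explicit appeal to separate continuity implying joint continuity for bilinear forms on Fr\'echet spaces is the step the paper leaves tacit, so the argument is the same but slightly more spelled out.
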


\begin{proof}
It immediately follows from the lemmas \ref{lem:OS_analytic_cont} and \ref{lem:boundary_value} that the reconstructed Wightman function is a continuous bilinear functional on $(f,h) \in \mathscr{S}(\mathbb{R}) \times \mathscr{S}(\mathbb{R}^3)$.
We obtain the reconstructed Wightman function as a tempered distribution $W_1 (\xi^0, \vec{\xi}) = \lim_{\tau \downarrow 0} S_1 (\vec{\xi},\tau|\xi^0) \in \mathscr{S}(\mathbb{R}^4)$ by the Schwartz nuclear theorem.
\end{proof}

Note that (i) the temperedness and (ii) the translational invariance are assumed in the definition of complex singularities; only (iii) the reflection positivity can be invalid. From the non-temperedness of complex singularities, we finally obtain Theorem \ref{thm:violation_ref_pos}:
\setcounter{theorem}{6}
\begin{theorem}
If $S_1 (p) = D(p^2)$ is a two-point Schwinger function with complex singularities satisfying (i) -- (v), then the reflection positivity is violated.
\end{theorem}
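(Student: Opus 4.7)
The plan is to derive a contradiction by combining two facts already established in the excerpt: on one side, reflection positivity for the two-point sector forces the reconstructed Wightman function to be a tempered distribution (Theorem \ref{thm:ref_pos_yields_temperedness}); on the other side, any Schwinger function obeying (i)--(v) produces a Wightman function which cannot be tempered (Theorem \ref{thm:nontempered}). So I will assume reflection positivity, run both theorems, and read off the contradiction.

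First I reduce the full axiom [OS2] to its two-point version (\ref{eq:two_pt_ref_pos}). Specializing [OS2] to $\underline{f} = (0, f, 0, 0, \ldots)$ with $f \in \mathscr{S}_+(\mathbb{R}^4)$ collapses $\sum_{n,m} \mathcal{S}_{n+m}(\Theta f_n^\star \times f_m) \ge 0$ to precisely (\ref{eq:two_pt_ref_pos}), so it is enough to show that (\ref{eq:two_pt_ref_pos}) itself fails. Next I check that assumptions (i)--(iii) of the complex-singularity setup supply exactly the temperedness $\mathcal{S}_2 \in {}^0\mathscr{S}'(\mathbb{R}^{4\cdot 2})$ and translational invariance required by Theorem \ref{thm:ref_pos_yields_temperedness}; the latter also needs only the two-point avatar of reflection positivity, which is what I am assuming for contradiction. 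Its output is a reconstructed two-point Wightman function lying in $\mathscr{S}'(\mathbb{R}^4)$.

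The nontrivial step is identifying this tempered Wightman function with the boundary value $W_1(\xi) \in \mathscr{D}'(\mathbb{R}^4)$ produced by Theorem \ref{thm:complex_boundary_value}. Both objects arise as boundary values of holomorphic extensions of $S_1(\vec\xi, \tau)|_{\tau > 0}$: the first via the GNS-type semigroup $T^{\tau + is}$ built inside the proof of Lemma \ref{lem:OS_analytic_cont}, the second via the Cauchy-integral representation in the tube. After smearing with a spatial test function $h \in \mathscr{S}(\mathbb{R}^3)$, both are holomorphic functions of the single complex variable $\tau + is$ on $\{\tau > 0\}$ that agree on the real slice $\{s = 0,\, \tau > 0\}$; the one-variable identity theorem then forces them to coincide as holomorphic functions, so their distributional boundary values as $\tau \downarrow 0$ agree in $\mathscr{D}'(\mathbb{R}^4)$. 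Hence this common object lies simultaneously in $\mathscr{S}'(\mathbb{R}^4)$ by Theorem \ref{thm:ref_pos_yields_temperedness} and outside $\mathscr{S}'(\mathbb{R}^4)$ by Theorem \ref{thm:nontempered}, a contradiction. Therefore (\ref{eq:two_pt_ref_pos}), and a fortiori [OS2], must be violated.

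The main obstacle is precisely this identification step, because the two reconstructions look structurally rather different: one is an abstract Hilbert-space and semigroup construction that uses only reflection positivity together with temperedness and translational invariance, while the other is an explicit analytic continuation relying on the specific pole/cut structure of $D(k^2)$. Verifying that for every spatial smearing $h$ they yield the \emph{same} holomorphic function of $\tau + is$ (and hence the same distributional boundary value) is the only genuinely delicate point; once this is granted, everything else is a mechanical combination of the already-proved theorems.
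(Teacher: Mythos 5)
Your proof follows exactly the paper's route: reduce [OS2] to the two-point reflection positivity (\ref{eq:two_pt_ref_pos}), invoke Theorem \ref{thm:ref_pos_yields_temperedness} to conclude temperedness of the reconstructed Wightman function, and contradict Theorem \ref{thm:nontempered}. The one point you flag as a potential difficulty---that the Wightman function produced by the abstract GNS/semigroup construction of Lemma \ref{lem:OS_analytic_cont} and Lemma \ref{lem:boundary_value} must be the same object as the boundary value produced by the Cauchy-integral construction of Theorems \ref{thm:holomorphy_general} and \ref{thm:complex_boundary_value}---is indeed glossed over in the paper's two-line proof of Theorem \ref{thm:violation_ref_pos}, and your resolution is correct: after spatial smearing by $h \in \mathscr{S}(\mathbb{R}^3)$, both are holomorphic in the single variable $\tau + is$ on $\{\tau > 0\}$, both restrict to $\int d^3\vec{\xi}\, S_1(\vec{\xi},\tau)h(\vec{\xi})$ on the positive real axis, and the one-variable identity theorem forces them to coincide on the whole half-plane, hence to have the same distributional boundary value as $\tau \downarrow 0$. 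This is a useful bit of bookkeeping that makes the paper's implicit identification explicit; otherwise the argument is the same as the paper's.
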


\section{Which axioms are violated?} \label{sec:Appendix_which_axioms}

In this section, we summarize which axioms are violated or not violated due to the existence of complex singularities.

\begin{enumerate}
\renewcommand{\labelenumi}{\Roman{enumi}.}
    \item Osterwalder-Schrader axioms for Schwinger functions \cite{OS73, OS75}:

\begin{enumerate}
\renewcommand{\labelenumii}{[OS \arabic{enumii}]}
\setcounter{enumii}{-1}
    \item Temperedness (for the two-point function)  is \textit{assumed} in the definition Sec.~\ref{sec:complex-def}.
    \item Euclidean Invariance (for the two-point function)  is \textit{assumed} in the definition Sec.~\ref{sec:complex-def}.
    \item Reflection Positivity is \textit{violated} (Theorem \ref{thm:violation_ref_pos}).
    \item Symmetry (for the two-point function)  is \textit{assumed} in the definition Sec.~\ref{sec:complex-def}.
    \item Cluster Property (for the two-point function) \textit{depends on massless singularity} (irrelevant to complex singularities).
\renewcommand{\labelenumii}{[OS \arabic{enumii}']}
\setcounter{enumii}{-1}
    \item Laplace Transform Condition is itself violated, since this requires temperedness of the Wightman function. (However, this condition is required only for reconstructing higher-point functions \cite{OS75}) 
\end{enumerate}

    \item Wightman axioms for Wightman functions \cite[Theorem 3-7]{Streater:1989vi}
\begin{enumerate}
\renewcommand{\labelenumii}{[W \arabic{enumii}]}
\setcounter{enumii}{-1}
    \item Temperedness is \textit{violated}.
    \item Poincar\'e Symmetry (for the two-point function) is \textit{valid} (for test functions in $\mathscr{D}(\mathbb{R}^4)$)
    \item Spectral Condition is \textit{violated}, since the spectral condition requires the temperedness as a prerequisite.
    \item Spacelike commutativity  (for the two-point function) is \textit{valid} (for test functions in $\mathscr{D}(\mathbb{R}^4)$).
    \item Positivity is \textit{violated} even for test functions in $\mathscr{D}(\mathbb{R}^4)$
    \item Cluster property (for the two-point function) \textit{depends on massless singularity} (irrelevant to complex singularities).
    \end{enumerate}
\end{enumerate}
Therefore, the axioms whose violations are proved are: [OS2] Reflection Positivity, ([OS0'] Laplace Transform Condition,) [W0] Temperedness, [W2] Spectral Condition, and [W4] Positivity.

\end{document}